\documentclass[a4paper,12pt, twoside,reqno]{amsart}
\usepackage{fullpage}
\usepackage[utf8]{inputenc}

\usepackage[foot]{amsaddr}

\usepackage{url}

\usepackage{dsfont}
\usepackage[all,2cell,cmtip]{xy}
\UseTwocells
\usepackage{amssymb}
\usepackage{amsmath,pdftexcmds,mathtools}
\usepackage[T1]{fontenc}
\usepackage{xfrac}

\usepackage{epigraph}

\DeclareFontFamily{U}{cbgreek}{}
\DeclareFontShape{U}{cbgreek}{m}{n}{
        <-6>    grmn0500
        <6-7>   grmn0600
        <7-8>   grmn0700
        <8-9>   grmn0800
        <9-10>  grmn0900
        <10-12> grmn1000
        <12-17> grmn1200
        <17->   grmn1728
      }{}
\DeclareFontShape{U}{cbgreek}{bx}{n}{
        <-6>    grxn0500
        <6-7>   grxn0600
        <7-8>   grxn0700
        <8-9>   grxn0800
        <9-10>  grxn0900
        <10-12> grxn1000
        <12-17> grxn1200
        <17->   grxn1728
      }{}

\makeatletter
\newcommand{\normalorbold}{%
  \ifnum\pdf@strcmp{\math@version}{bold}=\z@ bx\else m\fi
}
\makeatother

\usepackage{graphicx}
\usepackage{tikz}
\usepackage{array}
\usepackage{todonotes}

\allowdisplaybreaks[3]

\usepackage{epigraph}

\usetikzlibrary{decorations.pathmorphing}

\usepackage[outline]{contour}
\contourlength{2pt}

\usepackage{bigdelim}
\usepackage{enumerate}
\usepackage{mathrsfs}
\usepackage{color}
\usepackage{amsthm}
\usepackage{float}
\usepackage{accents}
\usepackage{stmaryrd}
\usepackage{physics}
\usepackage{tensor}

\usepackage{txfonts}

\usepackage{multirow}
\usepackage{multicol}
\theoremstyle{plain}
\newtheorem{theorem}{Theorem}[section]
\newtheorem{proposition}[theorem]{Proposition}

\newtheorem{lemma}[theorem]{Lemma}
\newtheorem*{theorem*}{Theorem}

\theoremstyle{definition}
\newtheorem{definition}[theorem]{Definition}
\newtheorem{example}[theorem]{Example}

\begin{document}

%%%%%%%%%%%%%%%%%%%%%%%%%%%%%%%%%%%%%%%%%%%%%%%%%%%%%%%%%%%%%%%%%%%%%%%%%%%

%%%%%% NOVOS COMANDOS

%%%%%%%%%%%%%%%%%%%%%%%%%%%%%%%%%%%%%%%%%%%%%%%%%%%%%%%%%%%%%%%%%%%%%%%%%%%

%%%%%% LETRAS

%%% A

\newcommand{\bbA}{\mathbb{A}}
\newcommand{\overbbA}{\overline{\bbA}}
\newcommand{\calA}{\mathcal{A}}
\newcommand{\frakA}{\mathfrak{A}}
\newcommand{\rma}{\mathrm{a}}
\newcommand{\tila}{\til{a}}
\newcommand{\tilrma}{\til{\rma}}
\newcommand{\rmA}{\mathrm{A}}
\newcommand{\tilrmA}{\til{\rmA}}
\newcommand{\tilalpha}{\til{\alpha}}
\newcommand{\rmaestrela}{\rma\upast}
\newcommand{\overa}{\overline{a}}
\newcommand{\overrmA}{\overline{\rmA}}
\newcommand{\dwnbbA}{\dwn{\bbA}}
\newcommand{\alphabfn}{\alpha\dwnbfn}
\newcommand{\dwnalpha}{\dwn{\alpha}}
\newcommand{\AHPnkk}{\rmA(k\dwnum,k\dwndois)}
\newcommand{\fraka}{\mathfrak{a}}
\newcommand{\dwna}{\dwn{a}}
\newcommand{\sfA}{\mathsf{A}}
\newcommand{\uprma}{\up{\rma}}
\newcommand{\upa}{\up{a}}
\newcommand{\dwnrmA}{\dwn{\rmA}}
\newcommand{\spaceavermA}{\spaceave{\rmA}}
\newcommand{\calAff}{\calA\dwn{\!f\!\!f}}
\newcommand{\dwnrmAum}{\dwn{\rmA\dwnum}}
\newcommand{\dwnrmAn}{\dwn{\rmA\dwnn}}
\newcommand{\dwnumrmA}{\dwn{1,\rmA}}
\newcommand{\upalpha}{^\alpha}
\newcommand{\uptilalpha}{\up{\tilalpha}}
\newcommand{\primealpha}{\alpha\prime}
\newcommand{\overprimealpha}{\overline{\primealpha}}
\newcommand{\upprimealpha}{\up{\primealpha}}
\newcommand{\upoverprimealpha}{\up{\overprimealpha}}
\newcommand{\overalpha}{\overline{\alpha}}
\newcommand{\upoveralpha}{\up{\overalpha}}
\newcommand{\bfA}{\mathbf{A}}

%%% B

\newcommand{\bbB}{\mathbb{B}}
\newcommand{\calB}{\mathcal{B}}
\newcommand{\rmB}{\mathrm{B}}
\newcommand{\tilb}{\til{b}}
\newcommand{\tilrmB}{\til{\rmB}}
\newcommand{\tilbeta}{\til{\beta}}
\newcommand{\bparm}{b^{(m)}}
\newcommand{\overb}{\overline{b}}
\newcommand{\sfB}{\mathsf{B}}
\newcommand{\dwnbeta}{\dwn{\beta}}
\newcommand{\betan}{\beta\dwn{n}}
\newcommand{\betabfn}{\beta\dwnbfn}
\newcommand{\BHPnkk}{\rmB(k\dwnum,k\dwndois)}
\newcommand{\frakB}{\mathfrak{B}}
\newcommand{\dwnb}{\dwn{b}}
\newcommand{\calBK}{\calB(\rmK)}
\newcommand{\spaceavermB}{\spaceave{\rmB}}
\newcommand{\dwnbbB}{\dwn{\bbB}}
\newcommand{\overrmB}{\overline{\rmB}}
\newcommand{\dwnrmB}{\dwn{\rmB}}
\newcommand{\hatrmB}{\hat{\rmB}}
\newcommand{\checkrmB}{\check{\rmB}}

%%% C

\newcommand{\C}{\mathbb{C}}
\newcommand{\CmM}{\C\upm\dwnsfM}
\newcommand{\rmC}{\mathrm{C}}
\newcommand{\cparm}{c^{(m)}}
\newcommand{\overc}{\overline{c}}
\newcommand{\covHPn}{\rmC\up{(n)}_{\rmH,\rmP}}
\newcommand{\covHPncom}[2]{\rmC\up{(n)}_{#1,#2}}
\newcommand{\bfcovHPn}{\calC\up{(n)}_{\rmH,\rmP}}
\newcommand{\bfcovHPncom}[2]{\calC\up{(n)}_{#1,#2}}
\newcommand{\bfcovHPinfty}{\calC\up{(\infty)}_{\rmH,\rmP}}
\newcommand{\bfcovHPinftycom}[2]{\calC\up{(\infty)}_{#1,#2}}
\newcommand{\bfcovHPnkk}{\calC\up{(n)}_{\rmH,\rmP}(k\dwnum,k\dwndois)}
\newcommand{\tilcovHPn}{\til{\rmC}\up{(n)}_{\rmH,\rmP}}
\newcommand{\covHPnM}{\rmC\up{(n)}_{\tilrmH,\tilrmP}}
\newcommand{\bfC}{\mathbf{C}}
\newcommand{\sfC}{\mathsf{C}}
\newcommand{\calC}{\mathcal{C}}
\newcommand{\dwnc}{\dwn{c}}
\newcommand{\rmc}{\mathrm{c}}
\newcommand{\dwnrmC}{\dwn{\rmC}}
\newcommand{\tilrmC}{\til{\rmC}}
\newcommand{\uprmc}{\up{\rmc}}

%%% D

\newcommand{\bbD}{\mathbb{D}}
\newcommand{\overd}{\overline{d}}
\newcommand{\bfpartial}{\mathbf{\partial}}
\newcommand{\rmd}{\mathrm{d}}
\newcommand{\uprmd}{\up{\rmd}}
\newcommand{\upmeiormd}{\up{\frac{\rmd}{2}}}
\newcommand{\rmD}{\mathrm{D}}
\newcommand{\tilrmD}{\til{\mathrm{D}}}
\newcommand{\rmDbfM}{\rmD\dwn{\bfM}}
\newcommand{\frakd}{\mathfrak{d}}
\newcommand{\tilbbD}{\til{\bbD}}
\newcommand{\calD}{\mathcal{D}}
\newcommand{\frakD}{\mathfrak{D}}
\newcommand{\overdelta}{\overline{\delta}}
\newcommand{\dwndelta}{\dwn{\delta}}
\newcommand{\dwnrmd}{\dwn{\rmd}}
\newcommand{\spaceaveDelta}{\spaceave{\Delta}}
\newcommand{\tilfrakD}{\til{\frakD}}
\newcommand{\dwndoisvirgn}{\dwn{2,n}}
\newcommand{\dwnpardois}{\dwn{(2)}}
\newcommand{\upvirgrmd}{\up{\,\rmd}}

%%% E

\newcommand{\frake}{\mathfrak{e}}
\newcommand{\ee}{\mathrm{e}}
\newcommand{\rmE}{\mathrm{E}}
\newcommand{\bfe}{\mathbf{e}}
\newcommand{\bfedwnrmxj}{\bfe\dwn{\rmx\dwnj}}
\newcommand{\bfedwnrmxi}{\bfe\dwn{\rmx\dwni}}
\newcommand{\sfE}{\mathsf{E}}
\newcommand{\overbfe}{\overline{\bfe}}
\newcommand{\frakE}{\mathfrak{E}}
\newcommand{\bbE}{\mathbb{E}}
\newcommand{\calE}{\mathcal{E}}
\newcommand{\dwnvarepsilon}{\dwn{\varepsilon}}
\newcommand{\dwnfrakEll}{\dwn{\frakE\dwnell}}
\newcommand{\dwncalEfrakEll}{\dwn{\calE(\frakE\dwnell)}}
\newcommand{\overcalE}{\overline{\calE}}
\newcommand{\bfE}{\mathbf{E}}
\newcommand{\rme}{\mathrm{e}} % NAO SUBSTITUI \ee -- CONTEXTOS DIFERENTES
\newcommand{\dwnrme}{\dwn{\rme}}
\newcommand{\tilvarepsilon}{\til{\varepsilon}}
\newcommand{\dwntilvarepsilon}{\dwn{\tilvarepsilon}}
\newcommand{\tilfrake}{\til{\frake}}
\newcommand{\brevefrake}{\breve{\frake}}
\newcommand{\hatfrake}{\hat{\frake}}
\newcommand{\tilepsilon}{\til{\epsilon}}
\newcommand{\checkbfe}{\check{\bfe}}

%%% F

\newcommand{\frakf}{\mathfrak{f}}
\newcommand{\rmf}{\mathrm{f}}
\newcommand{\rmF}{\mathrm{F}}
\newcommand{\tilf}{\til{f}}
\newcommand{\bbF}{\mathbb{F}}
\newcommand{\bff}{\mathbf{f}}
\newcommand{\overbff}{\overline{\bff}}
\newcommand{\dwnf}{\dwn{f}}
\newcommand{\dwnrmf}{\dwn{\rmf}}
\newcommand{\upPhi}{\up{\Phi}}
\newcommand{\tilPhi}{\til{\Phi}}
\newcommand{\PhidwnbfvLambda}{\Phi\dwn{\bfv,\,\Lambda}}
\newcommand{\tilvarphi}{\til{\varphi}}
\newcommand{\frakF}{\mathfrak{F}}
\newcommand{\ringrmf}{\mathring{\rmf}}
\newcommand{\absrmf}{\absoluto{\rmf}}
\newcommand{\dwnPhi}{\dwn{\Phi}}
\newcommand{\dwnPhibfl}{\dwn{\Phi,\bfl}}
\newcommand{\dwnPhibflbfn}{\dwn{\Phi,\bfl,\bfn}}
\newcommand{\dwnPhiumbfl}{\dwn{\Phi,\um,\bfl}}
\newcommand{\dwnPhibeta}{\dwn{\Phi\!,\beta}}
\newcommand{\upPhibeta}{\up{\Phi\!,\beta}}
\newcommand{\overrmf}{\overline{\rmf}}
\newcommand{\tilrmf}{\til{\rmf}}
\newcommand{\rmfdwnVert}{\rmf\dwnVert}
\newcommand{\hatrmf}{\hat{\rmf}}
\newcommand{\calF}{\mathcal{F}}
\newcommand{\checkrmf}{\check{\rmf}}
\newcommand{\sff}{\mathsf{f}}

%%% G

\newcommand{\gupmais}{g^+}
\newcommand{\gupmenos}{g^-}
\newcommand{\circGamma}{\Gamma^\circ}
\newcommand{\oGamma}{\overline{\Gamma}}
\newcommand{\rmG}{\mathrm{G}}
\newcommand{\rmGM}{\mathrm{G}\dwnsfM}
\newcommand{\tilg}{\til{g}}
\newcommand{\tilGamma}{\til{\Gamma}}
\newcommand{\Gammaoplus}{\Gamma_\oplus}
\newcommand{\GammaM}{\Gamma\dwnsfM}
\newcommand{\GammaS}{\Gamma\dwnsfS}
\newcommand{\GammaL}{\Gamma\dwnrmL}
\newcommand{\Gammainfty}{\Gamma\dwn{\infty}}
\newcommand{\frakg}{\mathfrak{g}}
\newcommand{\frakG}{\mathfrak{G}}
\newcommand{\rmg}{\mathrm{g}}
\newcommand{\dwnrmg}{\dwn{\rmg}}
\newcommand{\dwnrmG}{\dwn{\rmG}}
\newcommand{\tilrmg}{\til{\rmg}}
\newcommand{\overrmg}{\overline{\rmg}}
\newcommand{\frakGdote}{\frakG\!\!\cdot\!\!\frake}
\newcommand{\frakGdotte}{\frakG\!\cdot\!\frake}
\newcommand{\frakGdotbrevefrake}{\frakG\!\!\cdot\!\!\brevefrake}
\newcommand{\frakGdottbrevefrake}{\frakG\!\cdot\!\brevefrake}
\newcommand{\frakGdotf}{\frakG\!\!\cdot\!\!\frakf}
\newcommand{\frakGdottf}{\frakG\!\cdot\!\frakf}
\newcommand{\frakGdotg}{\frakG\!\!\cdot\!\!\frakg}
\newcommand{\frakGdottg}{\frakG\!\cdot\!\frakg}
\newcommand{\frakGdoth}{\frakG\!\!\cdot\!\!\frakh}
\newcommand{\frakGdotth}{\frakG\!\cdot\!\frakh}
\newcommand{\dwngammarmg}{\dwn{\gamma\rmg}}
\newcommand{\dwnrmgrmG}{\dwn{\rmg(\rmG)}}
\newcommand{\dwnhatrmgrmG}{\dwn{\hatrmg(\rmG)}}
\newcommand{\dwnparrmG}{\dwn{(\rmG)}}
\newcommand{\hatrmg}{\hat{\rmg}}
\newcommand{\dwnhatrmg}{\dwn{\hatrmg}}
\newcommand{\dwnfraklhatrmgbeta}{\dwn{\frakl,\hatrmg,\beta}}
\newcommand{\barrmg}{\bar{\rmg}}
\newcommand{\dwnbarrmg}{\dwn{\barrmg}}
\newcommand{\dwnfraklbarrmgbeta}{\dwn{\frakl,\barrmg,\beta}}
\newcommand{\rmgprime}{\rmg\prime}
\newcommand{\dwnrmgprime}{\dwn{\rmgprime}}
\newcommand{\dwnfraklrmgprimebeta}{\dwn{\frakl,\rmgprime,\beta}}
\newcommand{\hatGamma}{\hat{\Gamma}}
\newcommand{\bfG}{\mathbf{G}}

%%% H

\newcommand{\bfH}{\mathbf{H}}
\newcommand{\calH}{\mathcal{H}}
\newcommand{\ocalH}{\overline{\mathcal{H}}}
\newcommand{\frakh}{\mathfrak{h}}
\newcommand{\frakH}{\mathfrak{H}}
\newcommand{\rmH}{\mathrm{H}}
\newcommand{\overrmH}{\overline{\rmH}}
\newcommand{\tilh}{\til{h}}
\newcommand{\rmHoplus}{\rmH\dwn{\oplus}}
\newcommand{\tilrmH}{\til{\rmH}}
\newcommand{\rmh}{\mathrm{h}}
\newcommand{\rmhbfn}{\rmh\dwn{\bfn}}
\newcommand{\rmhparn}{\rmh\paren{n}}
\newcommand{\rmHP}{\rmH\dwnrmP}
\newcommand{\rmHL}{\rmH\dwnrmL}
\newcommand{\rmHbfM}{\rmH\dwn{\bfM}}
\newcommand{\rmHo}{\rmH\dwn{\rmo}}
\newcommand{\dwnrmh}{\dwn{\rmh}}
\newcommand{\tilrmh}{\til{\rmh}}
\newcommand{\overrmh}{\overline{\rmh}}
\newcommand{\hatrmh}{\hat{\rmh}}
\newcommand{\checkrmh}{\check{\rmh}}
\newcommand{\tilbfH}{\til{\bfH}}
\newcommand{\checktilrmh}{\check{\tilrmh}}
\newcommand{\dwnrrmHbeta}{\dwn{\rmH,\beta}}

%%% I

\newcommand{\ii}{\mathrm{i}}
\newcommand{\rmI}{\mathrm{I}}
\newcommand{\tili}{\til{i}}
\newcommand{\dwni}{\dwn{i}}
\newcommand{\dwnrmi}{\dwn{\rmi}}
\newcommand{\fraki}{\mathfrak{i}}
\newcommand{\rmi}{\mathrm{i}}
\newcommand{\dwninvtra}{\dwn{\um}}
\newcommand{\frakI}{\mathfrak{I}}
\newcommand{\upi}{\up{i}}

%%% J
\newcommand{\rmj}{\mathrm{j}}
\newcommand{\rmJ}{\mathrm{J}}
\newcommand{\dwnj}{\dwn{j}}
\newcommand{\frakj}{\mathfrak{j}}
\newcommand{\dwniota}{\dwn{\iota}}
\newcommand{\upj}{\up{j}}
\newcommand{\dwnji}{\dwn{j,i}}
\newcommand{\tilj}{\til{j}}

%%% K

\newcommand{\rmk}{\mathrm{k}}
\newcommand{\bfk}{\mathbf{k}}
\newcommand{\calK}{\mathcal{K}}
\newcommand{\dwnk}{\dwn{k}}
\newcommand{\tilrmK}{\til{\mathrm{K}}}
\newcommand{\tilk}{\til{k}}
\newcommand{\tilkappa}{\til{\kappa}}
\newcommand{\rmK}{\mathrm{K}}
\newcommand{\rmKparbaixon}{\rmK_{(n)}}
\newcommand{\rmKparbaixom}{\rmK_{(m)}}
\newcommand{\rmKparbaixocom}[1]{\rmK_{(#1)}}
\newcommand{\rmKparbaixozero}{\rmK_{(0)}}
\newcommand{\rmKestrela}{\rmK\upast}
\newcommand{\grassK}{\wedge \rmK}
\newcommand{\grassKn}{\wedge_n \! \rmK}
\newcommand{\grassKum}{\wedge_1 \rmK}
\newcommand{\grassKm}{\wedge_m \! \rmK}
\newcommand{\grassKestrela}{\wedge^{\! \! *} \rmK}
\newcommand{\grassKestrelacom}[1]{\wedge^{\! \! *} #1}
\newcommand{\grassKestrelaPK}{\wedge^{\! \! *} \rmP\rmK}
\newcommand{\grassKestrelaoplus}{\wedge^{\! \! *} \rmKoplus}
\newcommand{\grassKestrelaparbaixon}{\wedge^{\! \! *} \rmKparbaixon}
\newcommand{\grassKestrelaparbaixonm}{\wedge^{\! \! *} \rmK_{(nm)}}
\newcommand{\grassKestrelan}{\wedge^{\! \! *}_n \rmK}
\newcommand{\grassKestrelafuncao}[1]{\wedge^{\! \! *}_{#1} \rmK}
\newcommand{\grassKestrelaoplusfuncao}[1]{\wedge^{\! \! *}_{#1} \rmKoplus}
\newcommand{\grassKestrelaum}{\wedge^{\! \! *}_1 \rmK}
\newcommand{\grassKestrelam}{\wedge^{\! \! *}_m \rmK}
\newcommand{\grassKestrelamparbaixon}{\wedge^{\! \! *}_m \rmKparbaixon} 
\newcommand{\grassKestrelazero}{\wedge^{\! \! *}_0 \rmK}
\newcommand{\rmKoplus}{\rmK_\oplus}
\newcommand{\rmKopluscom}[1]{\rmK_\oplus^{(#1)}}
\newcommand{\fockK}{\rmF(\rmK)}
\newcommand{\upk}{^k}
\newcommand{\uppark}{\up{(k)}}
\newcommand{\dwnpark}{\dwn{\paren{k}}}
\newcommand{\rmKparbaixoPcom}[2]{\rmK^\rmP_{(#1 #2)}}
\newcommand{\kupi}{k^{(i)}}
\newcommand{\kupicom}[1]{k^{(#1)}}
\newcommand{\rmKM}{\rmK\dwnsfM}
\newcommand{\rmKS}{\rmK\dwnsfS}
\newcommand{\rmKL}{\rmK\dwnrmL}
\newcommand{\rmKinfty}{\rmK\dwn{\infty}}
\newcommand{\frakK}{\mathfrak{K}}
\newcommand{\frakk}{\mathfrak{k}}
\newcommand{\updoisk}{\up{2k}}
\newcommand{\bbK}{\mathbb{K}}
\newcommand{\tilxi}{\til{\xi}}
\newcommand{\dwnkmaisum}{\dwn{k+1}}
\newcommand{\dwnktooinfty}{\dwn{k\too\infty}}
\newcommand{\dwnksfs}{\dwn{k\sfs}}
\newcommand{\dwnksft}{\dwn{k\sft}}
\newcommand{\hatrmK}{\hat{\rmK}}
\newcommand{\dwnmenosk}{\dwn{-k}}
\newcommand{\checkrmK}{\check{\rmK}}
\newcommand{\checkcalK}{\check{\calK}}
\newcommand{\dwnrmk}{\dwn{\rmk}}
\newcommand{\dwnparrmk}{\dwnpar{\rmk}}
\newcommand{\dwnksfsalpha}{\dwn{(k,\sfs,\alpha)}}
\newcommand{\dwnqsftsfo}{\dwn{(q,\sft,\sfo)}} 

%%% L

\newcommand{\rmL}{\mathrm{L}}
\newcommand{\dwnrmL}{\dwn{\rmL}}
\newcommand{\dwnrmLrmf}{\dwn{\rmL\dwnrmf}}
\newcommand{\dwnrmLrmi}{\dwn{\rmL\dwnrmi}}
\newcommand{\dwnl}{\dwn{l}}
\newcommand{\dwnparl}{\dwn{\paren{l}}}
\newcommand{\LambdaL}{\Lambda\dwnrmL}
\newcommand{\dwnLambdaL}{\dwn{\LambdaL}}
\newcommand{\Lambdal}{\Lambda\dwnl}
\newcommand{\Lambdainfty}{\Lambda\dwninfty}
\newcommand{\elldois}{\ell\updois}
\newcommand{\overlambda}{\overline{\lambda}}
\newcommand{\dwnlambda}{\dwn{\lambda}}
\newcommand{\bfl}{\mathbf{l}}
\newcommand{\dwnLambda}{\dwn{\Lambda}}
\newcommand{\dwnbfl}{\dwn{\bfl}}
\newcommand{\dwnell}{\dwn{\ell}}
\newcommand{\dwnrmLbfl}{\dwn{\rmL,\bfl}}
\newcommand{\dwnbflrmA}{\dwn{\bfl,\rmA}}
\newcommand{\dwnbfln}{\dwn{\bfl(n)}}
\newcommand{\tilLambda}{\til{\Lambda}}
\newcommand{\dwnbflnx}{\dwn{\bfl,n,x}}
\newcommand{\bflcdotsfR}{\bfl\cdot\sfR}
\newcommand{\dwnbflbfn}{\dwn{\bfl,\bfn}}
\newcommand{\dwnrmLbflbfn}{\dwn{\rmL,\bfl\dwnbfn}}
\newcommand{\bflbfncdotsfR}{\bfl\dwnbfn\!\!\cdot\!\sfR}
\newcommand{\dwnellbfn}{\dwn{\ell\dwnbfn}}
\newcommand{\dwnrmLtooinfty}{\dwn{\rmL\too\infty}}
\newcommand{\dwnrmLinN}{\dwn{\rmL\in\N}}
\newcommand{\dwnLambdaPhi}{\dwn{\Lambda,\Phi}}
\newcommand{\dwnLambdaPhibeta}{\dwn{\Lambda,\Phi,\beta}}
\newcommand{\dwnLambdarmLPhi}{\dwn{\Lambda\dwnrmL,\Phi}}
\newcommand{\dwnLambdarmLPhibeta}{\dwn{\Lambda\dwnrmL,\Phi,\beta}}
\newcommand{\dwnumbfl}{\dwn{\um,\bfl}}
\newcommand{\dwnLambdarmL}{\dwn{\Lambda\dwnrmL}} % TORNA \dwnLambdaL VESTIGIAL
\newcommand{\dwntilLambda}{\dwn{\tilLambda}}
\newcommand{\dwnLambdaplusx}{\dwn{\Lambda+x}}
\newcommand{\upmaisLambda}{\up{+\Lambda}}
\newcommand{\upmenosLambda}{\up{-\Lambda}}
\newcommand{\dwnumbfldwnbfn}{\dwn{\um,\bfl\dwnbfn}}
\newcommand{\bfLambda}{\mathbf{\Lambda}}
\newcommand{\dwnumbfldwnbfndwnbfm}{\dwn{\um,\bfl\dwnbfndwnbfm}}
\newcommand{\tilbfLambda}{\til{\bfLambda}}
\newcommand{\ummenoslambda}{(1-\lambda)}
\newcommand{\Lambdalinha}{\Lambda\upprime}
\newcommand{\dwnLambdalinha}{\dwn{\Lambdalinha}}
\newcommand{\bbL}{\mathbb{L}}
\newcommand{\bbLdiamond}{\bbL\dwn{\,\,\cbdiamond}}
\newcommand{\frakl}{\mathfrak{l}}
\newcommand{\dwnfrakl}{\dwn{\frakl}}
\newcommand{\dwnLambdafrakl}{\dwn{\Lambda,\frakl}}
\newcommand{\dwnLambdarmLfraklx}{\dwn{\Lambda\dwnrmL,\frakl(x)}} 
\newcommand{\dwnLambdafraklbeta}{\dwn{\Lambda,\frakl,\beta}}
\newcommand{\dwnLambdarmLfraklbeta}{\dwn{\Lambda\dwnrmL,\frakl,\beta}}
\newcommand{\dwnLambdafraklmenos}{\dwn{\Lambda,\frakl\dwnmenos}}
\newcommand{\dwnLambdafraklmais}{\dwn{\Lambda,\frakl\dwnmais}}
\newcommand{\dwnfraklrmo}{\dwn{\frakl,\rmo}}
\newcommand{\dwnfraklbeta}{\dwn{\frakl,\beta}}
\newcommand{\dwnfraklbflbfn}{\dwn{\frakl,\bfl,\bfn}}
\newcommand{\dwnfraklxbflbfn}{\dwn{\frakl(x),\bfl,\bfn}}
\newcommand{\dwnfraklbfl}{\dwn{\frakl,\bfl}}
\newcommand{\dwnfraklxbfl}{\dwn{\frakl(x),\bfl}}
\newcommand{\dwnfraklx}{\dwn{\frakl(x)}}
\newcommand{\dwnfraklmenos}{\dwn{\frakl\dwnmenos}}
\newcommand{\dwnfraklmais}{\dwn{\frakl\dwnmais}}
\newcommand{\dwnfraklpm}{\dwn{\frakl\dwnpm}}
\newcommand{\dwnfraklmp}{\dwn{\frakl\dwnmp}}
\newcommand{\dwnfraklmenosx}{\dwn{\frakl\dwnmenos(x)}}
\newcommand{\dwnfraklmaisx}{\dwn{\frakl\dwnmais(x)}}
\newcommand{\dwnfraklmenosbfl}{\dwn{\frakl\dwnmenos,\bfl}}
\newcommand{\dwnfraklmaisbfl}{\dwn{\frakl\dwnmais,\bfl}}
\newcommand{\dwnfraklmenosxbfl}{\dwn{\frakl\dwnmenos(x),\bfl}}
\newcommand{\dwnfraklmaisxbfl}{\dwn{\frakl\dwnmais(x),\bfl}}
\newcommand{\dwnfraklmenosbeta}{\dwn{\frakl\dwnmenos,\beta}}
\newcommand{\dwnfraklmaisbeta}{\dwn{\frakl\dwnmais,\beta}}
\newcommand{\dwnfraklrmg}{\dwn{\frakl,\rmg}}
\newcommand{\dwnfraklrmgbeta}{\dwn{\frakl,\rmg,\beta}}
\newcommand{\dwnfraklrmhbeta}{\dwn{\frakl,\rmh,\beta}}
\newcommand{\dwnLambdafraklrmg}{\dwn{\Lambda,\frakl,\rmg}}
\newcommand{\stackrmLtooinfty}{\stackrel{\rmL\too\infty}{\tooo}}
\newcommand{\dwnLambdafraklrmgbeta}{\dwn{\Lambda,\frakl,\rmg,\beta}}
\newcommand{\dwnfraklrmgrmGbeta}{\dwn{\frakl,\rmg(\rmG),\beta}}
\newcommand{\dwnLambdarmhrmv}{\dwn{\Lambda,\rmh,\rmv}}
\newcommand{\hatLambda}{\hat{\Lambda}}
\newcommand{\upparbfl}{\uppar{\bfl}}
\newcommand{\upparrmL}{\uppar{\rmL}}
\newcommand{\upparrmLupprime}{\uppar{\rmL\upprime}}
\newcommand{\rmLupprime}{\rmL\upprime}
\newcommand{\dwnrmLrmLupprime}{\dwn{\rmL,\rmLupprime}}
\newcommand{\tilrmL}{\til{\rmL}}
\newcommand{\uppartilrmL}{\uppar{\tilrmL}}
\newcommand{\dwnrmLtilrmL}{\dwn{\rmL,\tilrmL}}
\newcommand{\dwnrmLrmo}{\dwn{\rmL\dwnrmo}}
\newcommand{\dwndoisrmLrmo}{\dwn{2\rmL\dwnrmo}}
\newcommand{\meiormL}{\frac{\rmL}{2}}
\newcommand{\dwndoisrmL}{\dwn{2\rmL}}
\newcommand{\dwnparrmL}{\dwnpar{\rmL}}
\newcommand{\dwnpardoisrmLmaisum}{\dwnpar{\doisrmLmaisum}}
\newcommand{\uppardoisrmLmaisum}{\uppar{2\rmL+1}}
\newcommand{\dwndoisrmLmaisum}{\dwn{2\rmL+1}}
\newcommand{\dwnmeiormL}{\dwn{\rmL/2}}
\newcommand{\dwnmeiodoisrmLmaisum}{\dwn{(2\rmL+1)/2}}
\newcommand{\dwnrmLk}{\dwn{\rmL,k}}
\newcommand{\dwnparrmLk}{\dwnpar{\rmL,k}}
\newcommand{\doisrmLmaisum}{2\rmL+1}
\newcommand{\dwndoisrmLmaisumrmk}{\dwn{\doisrmLmaisum,\rmk}}
\newcommand{\dwnpardoisrmLmaisumrmk}{\dwnpar{\doisrmLmaisum,\rmk}}
\newcommand{\rmLtooinfty}{\rmL\too\infty}
\newcommand{\dwnrmLrmk}{\dwn{\rmL,\rmk}}
\newcommand{\dwnparrmLrmk}{\dwnpar{\rmL,\rmk}}
\newcommand{\dwnLambdaPsi}{\dwn{\Lambda,\Psi}}
\newcommand{\umsobrermLupmeiormd}{\umsobre{\rmL\upmeiormd}}
\newcommand{\umsobrermLuprmd}{\umsobre{\rmL\uprmd}}
\newcommand{\umsobreabsolutoLambdarmLupmeio}{\frac{1}{\absoluto{\Lambda\dwnrmL}\upmeio}}
\newcommand{\umsobreabsolutoLambdarmL}{\frac{1}{\absoluto{\Lambda\dwnrmL}}}

%%% M

\newcommand{\frakm}{\mathfrak{m}}
\newcommand{\frakM}{\mathfrak{M}}
\newcommand{\rmM}{\mathrm{M}}
\newcommand{\upparm}{^{(m)}}
\newcommand{\uprmM}{\up{\rmM}}
\newcommand{\bbM}{\mathbb{M}}
\newcommand{\bfM}{\mathbf{M}}
\newcommand{\bfMraise}{\raisebox{0.01cm}{$\bfM$}}
\newcommand{\bfMupmenosraise}{\raisebox{0.01cm}{$\bfM\upmenos$}}
\newcommand{\sfM}{\mathsf{M}}
\newcommand{\upm}{\up{m}}
\newcommand{\dwnsfM}{\dwn{\sfM}}
\newcommand{\dwnm}{\dwn{m}}
\newcommand{\dwnmu}{\dwn{\mu}}
\newcommand{\dwnfrakM}{\dwn{\frakM}}
\newcommand{\tilrmM}{\til{\rmM}}
\newcommand{\MMC}{\mathrm{MMC}}
\newcommand{\bfm}{\mathbf{m}}
\newcommand{\dwnbfm}{\dwn{\bfm}}
\newcommand{\dwnmmaisum}{\dwn{m+1}}
\newcommand{\dwnmj}{\dwn{m,j}}
\newcommand{\tilm}{\til{m}}
\newcommand{\dwnfrakm}{\dwn{\frakm}}
\newcommand{\dwnfrakmbeta}{\dwn{\frakm,\beta}}
\newcommand{\dwnfrakmc}{\dwn{\frakm,c}}

%%% N

\newcommand{\N}{\mathbb{N}}
\newcommand{\calN}{\mathcal{N}}
\newcommand{\rmN}{\mathrm{N}}
\newcommand{\rmn}{\mathrm{n}}
\newcommand{\upnn}{^n}
\newcommand{\updoisn}{\up{2n}}
\newcommand{\uprmN}{^\rmN}
\newcommand{\upparn}{^{(n)}}
\newcommand{\upparnbeta}{^{(n\dwnbeta)}}
\newcommand{\downn}{_n}
\newcommand{\dwnn}{\dwn{n}}
\newcommand{\nbeta}{n\dwnbeta}
\newcommand{\bfn}{\mathbf{n}}
\newcommand{\dwnbfn}{\dwn{\bfn}}
\newcommand{\dwnnu}{\dwn{\nu}}
\newcommand{\upnatural}{\up{\natural}}
\newcommand{\dwndoisn}{\dwn{2n}}
\newcommand{\ninN}{n\in\N}
\newcommand{\dwnninN}{\dwn{\ninN}}
\newcommand{\upparnast}{^{(n)*}}
\newcommand{\rmNbfl}{\rmN\bfl}
\newcommand{\dwnntooinfty}{\dwn{n\too\infty}}
\newcommand{\bfN}{\mathbf{N}}
\newcommand{\updoisuprmN}{\up{2\uprmN}}
\newcommand{\dwnbfndwnbfm}{\dwn{\bfn\dwnbfm}}
\newcommand{\dwnnmaisum}{\dwn{n+1}}
\newcommand{\dwnnVert}{\dwn{n\Vert}}
\newcommand{\dwnndwni}{\dwn{n\dwni}}
\newcommand{\dwnndwnimaisum}{\dwn{n\dwn{i+1}}}
\newcommand{\dwnrmN}{\dwn{\rmN}}
\newcommand{\dwnrmNtooinfty}{\dwn{\rmN\too\infty}}
\newcommand{\stackntooinfty}{\stackrel{n\too\infty}{\tooo}}
\newcommand{\doisrmN}{2\rmN}
\newcommand{\dwndoisrmN}{\dwn{\doisrmN}}
\newcommand{\updoisrmN}{\up{\doisrmN}}
\newcommand{\overnu}{\overline{\nu}}
\newcommand{\dsN}{\mathds{N}}
\newcommand{\dwnN}{\dwn{N}}

%%% O

\newcommand{\rmo}{\mathrm{o}}
\newcommand{\uprmo}{\up{\rmo}}
\newcommand{\dwnrmo}{\dwn{\rmo}}
\newcommand{\dwnrmormo}{\dwn{\rmo\rmo}}
\newcommand{\rmO}{\mathrm{O}}
\newcommand{\calO}{\mathcal{O}}
\newcommand{\dwnrmormL}{\dwn{\rmo,\rmL}}
\newcommand{\dwnrmobflrmL}{\dwn{\rmo,\bfl,\rmL}}
\newcommand{\dwnrmobfl}{\dwn{\rmo,\bfl}}
\newcommand{\sfo}{\mathsf{o}}
\newcommand{\tilsfo}{\til{\sfo}}
\newcommand{\primesfo}{\sfo\prime}
\newcommand{\upsfo}{\up{\sfo}}
\newcommand{\uptilsfo}{\up{\tilsfo}}
\newcommand{\upprimesfo}{\up{\primesfo}}
\newcommand{\oversfo}{\overline{\sfo}}
\newcommand{\upoversfo}{\up{\oversfo}}
\newcommand{\overprimesfo}{\overline{\primesfo}}
\newcommand{\upoverprimesfo}{\up{\overprimesfo}}
\newcommand{\dwnop}{\dwn{\mathsf{op}}}

%%% P

\newcommand{\bbP}{\mathbb{P}}
\newcommand{\bfp}{\mathbf{p}}
\newcommand{\bfP}{\mathbf{P}}
\newcommand{\dwnbfP}{\dwn{\bfP}}
\newcommand{\rmP}{\mathrm{P}}
\newcommand{\tilp}{\til{p}}
\newcommand{\tilrmP}{\til{\rmP}}
\newcommand{\rmPoplus}{\mathrm{P}_\oplus}
\newcommand{\rmPopluscom}[1]{\mathrm{P}_\oplus^{(#1)}}
\newcommand{\dwnrmP}{\dwn{\rmP}}
\newcommand{\dwnrmPoplus}{\dwn{\rmPoplus}}
\newcommand{\rmPL}{\rmP\dwnrmL}
\newcommand{\rmPKL}{\rmP\dwn{\rmKL}}
\newcommand{\rmPinfty}{\rmP\dwninfty}
\newcommand{\dwnp}{\dwn{p}}
\newcommand{\upparp}{\uppar{p}}
\newcommand{\dwnpi}{\dwn{\pi}}
\newcommand{\upp}{\up{p}}
\newcommand{\tilpi}{\til{\pi}}
\newcommand{\bbPrmf}{\bbP\dwn{\rmf}}
\newcommand{\sfP}{\mathsf{P}}
\newcommand{\sfp}{\mathsf{p}}
\newcommand{\frakP}{\mathfrak{P}}
\newcommand{\bbPart}{\bbP\up{\div}}
\newcommand{\dwnrmPmenos}{\dwn{\rmP\dwnmenos}}
\newcommand{\dwnrmPmais}{\dwn{\rmP\dwnmais}}
\newcommand{\tilsfp}{\til{\sfp}}
\newcommand{\rmp}{\mathrm{p}}
\newcommand{\uprmp}{\up{\rmp}}
\newcommand{\dwnparp}{\dwnpar{p}}
\newcommand{\upvirgp}{\up{\,p}}
\newcommand{\upumsobrep}{\up{\frac{1}{p}}}
\newcommand{\tilvarpi}{\til{\varpi}}
\newcommand{\brevevarpi}{\breve{\varpi}}
\newcommand{\frakp}{\mathfrak{p}}
\newcommand{\brevefrakp}{\breve{\frakp}}
\newcommand{\dotvarpi}{\dot{\varpi}}
\newcommand{\ddotvarpi}{\ddot{\varpi}}
\newcommand{\hatvarpi}{\hat{\varpi}}
\newcommand{\barvarpi}{\bar{\varpi}}
\newcommand{\dwnpsfs}{\dwn{p\sfs}}
\newcommand{\dwnpsft}{\dwn{p\sft}}
\newcommand{\dwnpk}{\dwn{pk}}
\newcommand{\dwnpq}{\dwn{pq}}
\newcommand{\dwnrmp}{\dwn{\rmp}}
\newcommand{\overrmp}{\overline{\rmp}}
\newcommand{\dwnoverrmp}{\dwn{\overrmp}}
\newcommand{\primermp}{\rmp\prime}
\newcommand{\doispi}{2\pi}
\newcommand{\hatPi}{\hat{\Pi}}
\newcommand{\dwntilPhi}{\dwn{\tilPhi}}
\newcommand{\dwnPsin}{\dwn{\Psi,n}}

%%% Q

\newcommand{\Q}{\mathbb{Q}}
\newcommand{\rmQ}{\mathrm{Q}}
\newcommand{\tilq}{\til{q}}
\newcommand{\bfQ}{\mathbf{Q}}
\newcommand{\dwnq}{\dwn{q}}
\newcommand{\calQ}{\mathcal{Q}}
\newcommand{\bfq}{\mathbf{q}}
\newcommand{\upq}{\up{q}}
\newcommand{\upquatro}{\up{4}}
\newcommand{\dwnquatro}{\dwn{4}}
\newcommand{\rmq}{\mathrm{q}}
\newcommand{\frakq}{\mathfrak{q}}
\newcommand{\dwnqsfs}{\dwn{q\sfs}}
\newcommand{\dwnqsft}{\dwn{q\sft}}
\newcommand{\dwnrmq}{\dwn{\rmq}}
\newcommand{\primermq}{\rmq\prime}

%%% R

\newcommand{\R}{\mathbb{R}}
\newcommand{\Rmais}{\mathbb{R}^+}
\newcommand{\Rzeromais}{\mathbb{R}^+_0}
\newcommand{\upr}{\up{r}}
\newcommand{\dwnr}{\dwn{r}}
\newcommand{\rmR}{\mathrm{R}}
\newcommand{\rmr}{\mathrm{r}}
\newcommand{\sfR}{\mathsf{R}}
\newcommand{\dwnR}{\dwn{\R}}
\newcommand{\calR}{\mathcal{R}}
\newcommand{\dwnrho}{\dwn{\rho}}
\newcommand{\dwnhatrho}{\dwn{\hatrho}}
\newcommand{\frakR}{\mathfrak{R}}
\newcommand{\dwnRCA}{\dwn{\mathrm{RCA}}}
\newcommand{\dwnrhormF}{\dwn{\rho,\rmF}}
\newcommand{\dwnrhormPmenos}{\dwn{\rho,\rmP\dwnmenos}}
\newcommand{\dwnrhormPmais}{\dwn{\rho,\rmP\dwnmais}}
\newcommand{\dwnrhopsi}{\dwn{\rho,\psi}}
\newcommand{\dwntilrho}{\dwn{\tilrho}}
\newcommand{\frakr}{\mathfrak{r}}
\newcommand{\uprho}{\up{\rho}}
\newcommand{\overrho}{\overline{\rho}}

%%% S

\newcommand{\calS}{\mathcal{S}}
\newcommand{\fraks}{\mathfrak{s}}
\newcommand{\rms}{\mathrm{s}}
\newcommand{\rmS}{\mathrm{S}}
\newcommand{\rmSbfn}{\mathrm{S}\dwnbfn}
\newcommand{\rmSbfncom}[1]{\mathrm{S}\dwn{\bfn #1}}
\newcommand{\rmSbfnmais}{\mathrm{S}\dwn{\bfn +}}
\newcommand{\rmSbfnmenos}{\mathrm{S}\dwn{\bfn-}}
\newcommand{\ups}{\up{s}}
\newcommand{\dwns}{\dwn{s}}
\newcommand{\sfS}{\mathsf{S}}
\newcommand{\dwnsfS}{\dwn{\sfS}}
\newcommand{\sfs}{\mathsf{s}}
\newcommand{\dwnsfs}{\dwn{\sfs}}
\newcommand{\bfS}{\mathbf{S}}
\newcommand{\dwnst}{\dwn{st}}
\newcommand{\dwnrms}{\dwn{\rms}}
\newcommand{\dwnsigma}{\dwn{\sigma}}
\newcommand{\dwnrmS}{\dwn{\rmS}}
\newcommand{\subseteqvet}{\stackrel{\bbV}{\subseteq}}
\newcommand{\frakS}{\mathfrak{S}}
\newcommand{\rmSi}{\mathrm{Si}}
\newcommand{\dwnrmSdwni}{\dwn{\rmS\dwni}}
\newcommand{\tilsfs}{\til{\sfs}}
\newcommand{\primesfs}{\sfs\prime}
\newcommand{\dwnsfssft}{\dwn{\sfs\sft}}
\newcommand{\upsfS}{\up{\sfS}}

%%% T

\newcommand{\calT}{\mathcal{T}}
\newcommand{\frakt}{\mathfrak{t}}
\newcommand{\rmt}{\mathrm{t}}
\newcommand{\rmT}{\mathrm{T}}
\newcommand{\SFT}{\textsf{T}}
\newcommand{\sft}{\mathsf{t}}
\newcommand{\dwntheta}{\dwn{\theta}}
\newcommand{\dwntres}{\dwn{3}}
\newcommand{\uptres}{\up{3}}
\newcommand{\dwnrmT}{\dwn{\rmT}}
\newcommand{\overrmT}{\overline{\rmT}}
\newcommand{\dwnrmt}{\dwn{\rmt}}
\newcommand{\bfT}{\mathbf{T}}
\newcommand{\tilTheta}{\til{\Theta}}
\newcommand{\dwnTheta}{\dwn{\Theta}}
\newcommand{\dwntilTheta}{\dwn{\tilTheta}}
\newcommand{\dwnThetabflbfn}{\dwn{\Theta,\bfl,\bfn}}
\newcommand{\dwntilThetabflbfn}{\dwn{\tilTheta\bfl,\bfn}}
\newcommand{\tilsft}{\til{\sft}}
\newcommand{\primesft}{\sft\prime}
\newcommand{\dwntimes}{\dwn{\times}}
\newcommand{\hatrmT}{\hat{\rmT}}
\newcommand{\bbT}{\mathbb{T}}
\newcommand{\hatrmt}{\hat{\rmt}}
\newcommand{\dwnsft}{\dwn{\sft}}
\newcommand{\checkrmT}{\check{\rmT}}
\newcommand{\checkbfT}{\check{\bfT}}
\newcommand{\dwnbbT}{\dwn{\bbT}}
\newcommand{\dwnbbTuprmd}{\dwn{\bbT\uprmd}}
\newcommand{\dwnbbTuprmddwnmeio}{\dwn{\bbT\uprmd\dwnmeio}}
\newcommand{\tilrmT}{\til{\rmT}}
\newcommand{\checkcalT}{\check{\calT}}
\newcommand{\hatcalT}{\hat{\calT}}

%%% U

\newcommand{\calU}{\mathcal{U}}
\newcommand{\overrmU}{\overline{\rmU}}
\newcommand{\rmU}{\mathrm{U}}
\newcommand{\rmUoplus}{\rmU_\oplus}
\newcommand{\rmUopluscom}[1]{\rmU_\oplus^{(n)}}
\newcommand{\dwnu}{\dwn{u}}
\newcommand{\sfU}{\mathsf{U}}
\newcommand{\bfu}{\mathbf{u}}
\newcommand{\dwnrmU}{\dwn{\rmU}}
\newcommand{\dwnsfUx}{\dwn{\sfU\dwnx}}
\newcommand{\dwnsfU}{\dwn{\sfU}}
\newcommand{\upum}{\up{1}}
\newcommand{\frakU}{\mathfrak{U}}
\newcommand{\bfU}{\mathbf{U}}
\newcommand{\fraku}{\mathfrak{u}}
\newcommand{\dwnumn}{\dwn{1,n}}
\newcommand{\dwnUpsilonfrakl}{\dwn{\Upsilon\dwnfrakl}}

%%% V

\newcommand{\rmV}{\mathrm{V}}
\newcommand{\rmVestrelan}{(\rmV\upast)\upn}
\newcommand{\rmVn}{\rmV\upn}
\newcommand{\rmVestrela}{\rmV\upast}
\newcommand{\grassV}{\wedge \rmV}
\newcommand{\grassVn}{\wedge_n \! \rmV}
\newcommand{\grassVum}{\wedge_1 \! \rmV}
\newcommand{\grassVm}{\wedge_m \! \rmV}
\newcommand{\grassVestrela}{\wedge^{\! \! *} \rmV}
\newcommand{\grassVestrelan}{\wedge^{\! \! *}_n \! \rmV}
\newcommand{\grassVestrelaum}{\wedge^{\! \! *}_1 \! \rmV}
\newcommand{\grassVestrelam}{\wedge^{\! \! *}_m \! \rmV}
\newcommand{\grassVestrelazero}{\wedge^{\! \! *}_0 \rmV}
\newcommand{\fockV}{\rmF(\rmV)}
\newcommand{\rmv}{\mathrm{v}}
\newcommand{\dwnrmv}{\dwn{\rmv}}
\newcommand{\bbV}{\mathbb{V}}
\newcommand{\bbVn}{\bbV\dwnn}
\newcommand{\bfv}{\mathbf{v}}
\newcommand{\bfvn}{\bfv\dwnn}
\newcommand{\bbVR}{\bbV\dwn{\R}}
\newcommand{\bbVdiamond}{\bbV\dwncbdiamond}
\newcommand{\bbVdiamondn}{\bbV\dwn{\cbdiamond,n}}
\newcommand{\dwnbfv}{\dwn{\bfv}}
\newcommand{\dwnbfvLambda}{\dwn{\bfv,\,\Lambda}}
\newcommand{\calV}{\mathcal{V}}
\newcommand{\dwnrmV}{\dwn{\rmV}}
\newcommand{\bfV}{\mathbf{V}}
\newcommand{\dwnVert}{\dwn{\Vert}}
\newcommand{\tilbfv}{\til{\bfv}}
\newcommand{\frakv}{\mathfrak{v}}
\newcommand{\overrmv}{\overline{\rmv}}
\newcommand{\hatrmv}{\hat{\rmv}}
\newcommand{\scrV}{\mathscr{V}}

%%% W

\newcommand{\w}{\omega}
\newcommand{\rmw}{\mathrm{w}}
\newcommand{\calW}{\mathcal{W}}
\newcommand{\dwnw}{\dwn{w}}
\newcommand{\dwnrmw}{\dwn{\rmw}}
\newcommand{\bfw}{\mathbf{w}}
\newcommand{\bbW}{\mathbb{W}}
\newcommand{\bbWdiamond}{\bbW\dwn{\,\cbdiamond}}
\newcommand{\dwnbfw}{\dwn{\bfw}}
\newcommand{\overrmW}{\overline{\rmW}}
\newcommand{\dwnrmW}{\dwn{\rmW}}
\newcommand{\sfw}{\mathsf{w}}
\newcommand{\dwnbbWdiamond}{\dwn{\bbWdiamond}}
\newcommand{\frakW}{\mathfrak{W}}
\newcommand{\frakw}{\mathfrak{w}}
\newcommand{\tilfrakW}{\til{\frakW}}
\newcommand{\brevefrakW}{\breve{\frakW}}
\newcommand{\ddotfrakW}{\ddot{\frakW}}
\newcommand{\dotfrakW}{\dot{\frakW}}
\newcommand{\hatfrakW}{\hat{\frakW}}
\newcommand{\barfrakW}{\bar{\frakW}}
\newcommand{\dwnwsfs}{\dwn{w\sfs}}
\newcommand{\dwnwsft}{\dwn{w\sft}}
\newcommand{\dwnwtilsfs}{\dwn{w\tilsfs}}
\newcommand{\dwnwtilsft}{\dwn{w\tilsft}}
\newcommand{\dwnwprimesfs}{\dwn{w\primesfs}}
\newcommand{\dwnwprimesft}{\dwn{w\primesft}}
\newcommand{\overrmw}{\overline{\rmw}}

%%% X

\newcommand{\bbX}{\mathbb{X}}
\newcommand{\bbXL}{\bbX\dwnrmL}
\newcommand{\bbXinfty}{\bbX\dwninfty}
\newcommand{\bfx}{\mathbf{x}}
\newcommand{\calX}{\mathcal{X}}
\newcommand{\rmX}{\mathrm{X}}
\newcommand{\rmx}{\mathrm{x}}
\newcommand{\overx}{\overline{x}}
\newcommand{\dwnx}{\dwn{x}}
\newcommand{\dwnrmx}{\dwn{\rmx}}
\newcommand{\dwnrmxrmy}{\dwn{\rmx\rmy}}
\newcommand{\dwnxy}{\dwn{xy}}
\newcommand{\dwnxs}{\dwn{xs}}
\newcommand{\dwnxt}{\dwn{xt}}
\newcommand{\dwnxupplus}{\dwn{(x,\uparrow,+)}}
\newcommand{\dwnxdwnplus}{\dwn{(x,\downarrow,+)}}
\newcommand{\dwnxupminus}{\dwn{(x,\uparrow,-)}}
\newcommand{\dwnxdwnminus}{\dwn{(x,\downarrow,-)}}
\newcommand{\dwnxsnu}{\dwn{(x,s,\nu)}}
\newcommand{\hatx}{\hat{x}}
\newcommand{\tilx}{\til{x}}
\newcommand{\upx}{\up{x}}
\newcommand{\dwnxdwnrmo}{\dwn{x\dwnrmo}}
\newcommand{\upparxLambda}{\up{(x,\Lambda)}}
\newcommand{\parxLambda}{(x,\Lambda)}
\newcommand{\dwnxsfs}{\dwn{x\sfs}} 
\newcommand{\dwnxsft}{\dwn{x\sft}}
\newcommand{\dwnxtilsfs}{\dwn{x\tilsfs}}
\newcommand{\dwnxtilsft}{\dwn{x\tilsft}}
\newcommand{\dwnxprimesfs}{\dwn{x\primesfs}}
\newcommand{\dwnxprimesft}{\dwn{x\primesft}}
\newcommand{\overrmx}{\overline{\rmx}}
\newcommand{\hatbbX}{\hat{\bbX}}
\newcommand{\dwnxsfsalpha}{\dwn{(x,\sfs,\alpha)}} 
\newcommand{\dwnxcdot}{\dwn{x\,\cdot}}

%%% Y

\newcommand{\bfy}{\mathbf{y}}
\newcommand{\rmY}{\mathrm{Y}}
\newcommand{\overy}{\overline{y}}
\newcommand{\rmy}{\mathrm{y}}
\newcommand{\dwnrmy}{\dwn{\rmy}}
\newcommand{\dwny}{\dwn{y}}
\newcommand{\dwnyx}{\dwn{yx}}
\newcommand{\dwnys}{\dwn{ys}}
\newcommand{\dwnyt}{\dwn{yt}}
\newcommand{\dwnyupplus}{\dwn{(y,\uparrow,+)}}
\newcommand{\dwnydwnplus}{\dwn{(y,\downarrow,+)}}
\newcommand{\dwnyupminus}{\dwn{(y,\uparrow,-)}}
\newcommand{\dwnydwnminus}{\dwn{(y,\downarrow,-)}}
\newcommand{\dwnysnu}{\dwn{(y,s,\nu)}}
\newcommand{\dwnyzero}{\dwn{y\zero}}
\newcommand{\dwnysfs}{\dwn{y\sfs}}
\newcommand{\dwnytilsfs}{\dwn{y\tilsfs}}
\newcommand{\dwnysft}{\dwn{y\sft}}
\newcommand{\dwnytilsft}{\dwn{y\tilsft}}
\newcommand{\dwnyprimesfs}{\dwn{y\primesfs}}
\newcommand{\dwnyprimesft}{\dwn{y\primesft}}
\newcommand{\overrmy}{\overline{\rmy}}
\newcommand{\dwnysftsfo}{\dwn{(y,\sft,\sfo)}}

%%% Z

\newcommand{\Z}{\mathbb{Z}}
\newcommand{\zero}{\mathbf{0}}
\newcommand{\rmz}{\mathrm{z}}
\newcommand{\dwnz}{\dwn{z}}
\newcommand{\dwnrmz}{\dwn{\rmz}}
\newcommand{\dwnzerozero}{\dwn{\zero\zero}}
\newcommand{\dwnzero}{\dwn{\zero}}
\newcommand{\dwnzzero}{\dwn{0}}
\newcommand{\zzero}{0}
\newcommand{\frakZ}{\mathfrak{Z}}
\newcommand{\tilz}{\til{z}}
\newcommand{\upzero}{\up{\zero}}
\newcommand{\dwnzsfs}{\dwn{z\sfs}}
\newcommand{\dwnzsft}{\dwn{z\sft}}
\newcommand{\dwnztilsfs}{\dwn{z\tilsfs}}
\newcommand{\dwnztilsft}{\dwn{z\tilsft}}
\newcommand{\dwnzprimesfs}{\dwn{z\primesfs}}
\newcommand{\dwnzprimesft}{\dwn{z\primesft}}
\newcommand{\overrmz}{\overline{\rmz}}
\newcommand{\dwnmenosz}{\dwn{-z}}

%%% ACENTOS
\newcommand{\til}[1]{\tilde{#1}}
\newcommand{\upprime}{\up{\prime}}

%%%%%%%%%%%%%%%%%%%%%%%%%%%%%%%%%%%%%%%%%%%%%%%%%%%%%%%%%%%%%%%%%%%%%%%%%%%

%%%%%% MATEMATICOS

% CDOTVIRG
\newcommand{\cdotvirg}{\,\cdot\,}

% EXPOENTES
\newcommand{\upmeio}{^{\frac{1}{2}}}
\newcommand{\upsubseteq}{\up{\subseteq}}
\newcommand{\upsubsetneq}{\up{\subsetneq}}
\newcommand{\upcap}{\up{\cap}}
\newcommand{\upcup}{\up{\cup}}

% SUBSCRITOS
\newcommand{\dwnpm}{\dwn{\pm}}
\newcommand{\dwnmp}{\dwn{\mp}}
\newcommand{\dwnplus}{\dwn{+}}
\newcommand{\dwnmais}{\dwn{+}}
\newcommand{\dwnminus}{\dwn{-}}
\newcommand{\dwnmenos}{\dwn{-}}
\newcommand{\dwnoplus}{\dwn{\oplus}}
\newcommand{\dwnupplus}{\dwn{\uparrow,+}}
\newcommand{\dwndwnplus}{\dwn{\downarrow,+}}
\newcommand{\dwnupminus}{\dwn{\uparrow,-}}
\newcommand{\dwndwnminus}{\dwn{\downarrow,-}}
\newcommand{\dwnboldmeio}{\dwn{\um/2}}
\newcommand{\dwnmaior}{\dwn{>}}
\newcommand{\dwnmenor}{\dwn{<}}
\newcommand{\dwnmeio}{\dwn{\meio}}

%%%%%% COLCHETEAMENTO
\newcommand{\dbrack}[1]{\ldbrack #1 \rdbrack}
\newcommand{\igual}[1]{\stackrel{\prov{#1}}{=}}
\newcommand{\desiesq}[1]{\stackrel{\prov{#1}}{\leq}}
\newcommand{\desidir}[1]{\stackrel{\prov{#1}}{\geq}}
\newcommand{\paren}[1]{(#1)}
\newcommand{\classe}[1]{[#1]}
\newcommand{\commentLucas}[1]{[++L \textsc{#1}  L++]}

% COLCHETES GRASSMANN
\newcommand{\colchPP}[2]{\left[#1\right. \left[#2\right.}
\newcommand{\colchPG}[2]{\left[#1\right. \left.#2\right]}
\newcommand{\colchGP}[2]{\left.#1\right] \left[#2\right.}
\newcommand{\colchGG}[2]{\left.#1\right] \left.#2\right]}
\newcommand{\colchP}[1]{\left[#1\right.}
\newcommand{\colchG}[1]{\left.#1\right]}  

%%%%%% CONJUNTOS

% BASE ORTO
\newcommand{\baseortonormal}{\{ \psi_i \}_{i \in \rmI}}

% C VAZIO
\newcommand{\vazio}{\{\,\,\, \}}

% FUNCOES
\newcommand{\funcoes}{\mathcal{F}}

% OPERADORES LIMITADOS
\newcommand{\oplim}{\mathcal{B}}
\newcommand{\oplimH}{\mathcal{B}(\calH)}

%%%%%% FUNCOES

% ANTICOMUTADOR
\newcommand{\anticomuta}[2]{\left[ #1, #2 \right]_+}

% COLCH
%\newcommand{\colchrmh}[2]{\left[#1 \mid#2\right]\dwnk}
\newcommand{\colchrmh}[2]{\rmh\dwn{[#1 \mid #2]}}

% COMUTADOR
\newcommand{\comuta}[2]{\left[#1, #2 \right]_{-}}

% CONJUNTO
\newcommand{\conj}[1]{\left\{#1\right\}}
\newcommand{\conjpm}{\conj{+,-}}

% UPDIV
\newcommand{\updiv}{\up{\div}}

% ELEMENTO BILINEA
\newcommand{\bilinBSB}{\prodintsimpl{\rmB}{\rmS \rmB}}

% ESPACO GERADO
\newcommand{\lin}{\mathrm{lin}}

% EXPONENCIAL GRASSMANN
\newcommand{\eegrasspm}[3]{\ee_{#1 \left(#2,#3\right)}}
\newcommand{\eegrassp}[2]{\ee_{\left(#1,#2\right)}}
\newcommand{\eegrassm}[2]{\ee_{-\left(#1,#2\right)}}
\newcommand{\EEchern}[1]{\rmE\upparn_{#1}}
\newcommand{\EEcherncom}[2]{\rmE\up{(#2)}_{#1}}
\newcommand{\eez}{\ee\up{\frac{\betabfn\BhnB}{2}}}

% NORMA
\newcommand{\norma}[1]{\Vert #1 \Vert}
\newcommand{\normax}[1]{\Vert #1 \Vert\dwn{\infty}}
\newcommand{\normanome}{\Vert \cdot \Vert}
\newcommand{\absoluto}[1]{\left|#1\right|}

% NUCLEO
\newcommand{\nucl}{\text{\textrm{N\'ucleo}}}

% NUMERICAS
\newcommand{\heaviside}{\Theta}

% OPLUS, OTIMES, ETC
\newcommand{\osum}{\oplus}
\newcommand{\otimescwedge}{\stackrel{\cwedge}{\otimes}}
\newcommand{\otimescvee}{\stackrel{\cvee}{\otimes}}
\newcommand{\otimesrmo}{\stackrel{\rmo}{\otimes}}

% PERP
\newcommand{\upperp}{^\perp}

% PRODUTO INTERNO
\newcommand{\prodint}[2]{\left< #1, #2 \right>_{\calH}}
\newcommand{\prodintsimpl}[2]{\left< #1, #2 \right>}
\newcommand{\prodintarg}[3]{\left< #1, #2 \right>_{#3}}
\newcommand{\prodintfrak}[2]{\left< #1, #2 \right>_{\frakH}}
\newcommand{\pint}[2]{\left< #1, #2 \right>}
\newcommand{\pintcom}[3]{\left< #1, #2 \right>\dwn{#3}}
\newcommand{\pintpar}[2]{\left( #1, #2 \right)_\wedge}
\newcommand{\pintparn}[2]{\left( #1, #2 \right)_n}
\newcommand{\pintparzero}[2]{\left( #1, #2 \right)_0}
\newcommand{\pintparum}[2]{\left( #1, #2 \right)_1}

% WEDGES
\newcommand{\wedgeestrela}{\wedge^{\! \! *}}
\newcommand{\cwedge}{\curlywedge}
\newcommand{\cvee}{\curlyvee}
\newcommand{\bdiamond}{\blackdiamond}
\newcommand{\cbdiamond}{\!\!\blackdiamond}
\newcommand{\dwncbdiamond}{\dwn{\cbdiamond}}

%%%%%% LOGICOS

\newcommand{\baixo}{\downarrow}
\newcommand{\cima}{\uparrow}
\newcommand{\sse}{\leftrightarrow}
\newcommand{\Sse}{\Leftrightarrow}
\newcommand{\To}{\Rightarrow}
\newcommand{\tee}{\mathbb{T}}
\newcommand{\Oord}{\overrightarrow{\rmO}}

%%%%%% NOMES

% ALGEBRA AUTODUAL
\newcommand{\Aadual}{\overline{\bbA}_{\smallRAD}}

% BCS
\newcommand{\BCS}{\textrm{BCS}}
\newcommand{\mBCS}{\mathrm{BCS}}

% CAR
\newcommand{\car}{\mathrm{CAR}}
\newcommand{\CAR}{\textrm{CAR }}
\newcommand{\scar}{\mathrm{CAR\smallAD}}
\newcommand{\sCAR}{\textrm{CAR\smallAD }}
\newcommand{\mCAR}{\mathrm{CAR}}

\newcommand{\rca}{\mathrm{RCA}}
\newcommand{\RCA}{\textrm{RCA }}
\newcommand{\srca}{\mathrm{RCA\smallAD}}
\newcommand{\sRCA}{\textrm{RCA\smallAD }}

\newcommand{\smallAD}{\text{{\tiny AD }}}
\newcommand{\smallRAD}{\text{{\tiny RAD }}}

% CPAMM
\newcommand{\cpamm}{\fontsize{14.9pt}{10pt}\selectfont \textsc{c.p.a.m.}\hskip3pt\large}

% ENVELOPE CONVEXO E SUPORTE COMPACTO
\newcommand{\rmco}{\mathrm{co}}
\newcommand{\overrmco}{\overline{\rmco}}
\newcommand{\dwnrmco}{\dwn{\rmco}}

% ESPECTRO
%\newcommand{\spec}{\mathrm{spec}}

% ESTRELA
\newcommand{\estrela}{* $\,$}
\newcommand{\emphestrela}{\emph{*}}

% IMAGEM
\newcommand{\ran}{\mathrm{ran}}

% KMS
\newcommand{\KMS}{\textrm{KMS}}

% PFAFIANO
\newcommand{\pfa}{\mathrm{Pf}}

% PROVAVEL
\newcommand{\rmPr}{\mathrm{Pr}}

% REAL E IMAGINARIO
\newcommand{\realp}{\mathsf{Re}}
\newcommand{\imagp}{\mathsf{Im}}

% SUPORTE
\newcommand{\supp}{\mathrm{supp}}

% TRUE
\newcommand{\rmTr}{\mathrm{Tr}}

% TRACO
%\newcommand{\tr}{\mathrm{tr}}
%\newcommand{\Tr}{\mathrm{Tr}}
\newcommand{\central}{\rho_{\um/2}}
\newcommand{\rhon}{\rho\dwn{n}}
\newcommand{\rhobfnmenos}{\rho\dwn{\bfn -}}
\newcommand{\rhobfnmais}{\rho\dwn{\bfn +}}
\newcommand{\rhoparn}{\rho\dwn{\paren{n}}}

%%%%%% SIMBOLOS
\newcommand{\defdir}{\stackrel{\, \, \, \, \text{{\large$.$}}}{=}}
\newcommand{\defesq}{\stackrel{\!  \text{{\large$.$ }}}{=}}
\newcommand{\eqdot}{\stackrel{\text{{\large$.$}}}{=}}
\newcommand{\menos}{-}
\newcommand{\mais}{+}
\newcommand{\menosinfty}{{-\infty}}
\newcommand{\dwninfty}{\dwn{\infty}}
\newcommand{\upinfty}{\up{\infty}}
\newcommand{\menosum}{^{-1}}
\newcommand{\too}{\rightarrow}
\newcommand{\tooo}{\longrightarrow}
\newcommand{\miff}{\mathrm{iff}}
\newcommand{\um}{\mathbf{1}}
\newcommand{\bbum}{\mathbb{1}}
\newcommand{\circast}{\circledast}
\newcommand{\upast}{^\ast}
\newcommand{\updagger}{\up{\dagger}}
\newcommand{\upbeta}{^\beta}
\newcommand{\upcircast}{^\circast}
\newcommand{\upcircastcom}[1]{(#1)^\circast}
\newcommand{\upcircastfunc}{(\, \cdot \,)\upcircast}
\newcommand{\upastn}{^{\ast n}}
\newcommand{\updois}{^2}
\newcommand{\upmenos}{^{-1}}
\newcommand{\dwnum}{\dwn{1}}
\newcommand{\dwnparum}{\dwnpar{1}}
\newcommand{\dwnumbold}{\dwn{\um}}
\newcommand{\dwnlequm}{\dwn{\leq 1}}
\newcommand{\dwndois}{\dwn{2}}
\newcommand{\uppi}{^\pi}
\newcommand{\upsigma}{^\sigma}
\newcommand{\upvarsigma}{^\varsigma}
\newcommand{\up}[1]{^{#1}}
\newcommand{\dwn}[1]{_{#1}}
\newcommand{\uppar}[1]{^{(#1)}}
\newcommand{\sgn}{\mathrm{sgn}}
\newcommand{\upmaissinal}{\up{+}}
\newcommand{\upmenossinal}{\up{-}}
\newcommand{\upsharp}{\up{\sharp}}
\newcommand{\upflat}{\up{\flat}}
\newcommand{\dwnpar}[1]{\dwn{(#1)}}
\newcommand{\upcolch}[1]{\up{[#1]}}
\newcommand{\dwncolch}[1]{\dwn{[#1]}}
\newcommand{\upmais}{\up{+}}
\newcommand{\upmenonos}{\up{-}}
\newcommand{\spaceave}[1]{\widehat{#1}}
\newcommand{\upoplus}{\up{\oplus}}
\newcommand{\meio}{1/2}
\newcommand{\colch}[1]{[#1]}
\newcommand{\umsobre}[1]{\frac{1}{#1}}
\newcommand{\upwedgestarn}{\up{\wedge\ast n}}
\newcommand{\upwedgen}{\up{\wedge n}}

%%%%%%%%%%%%%%%%%%%%%%%%%%%%%%%%%%%%%%%%%%%%%%%%%%%%%%%%%%%%%%%%%%%%%%%%%%%

%%%%%% PARAGRAFACAO

% ALINHAMENTO
\newcommand{\linha}{$\,$\\}
\newcommand{\noin}{\noindent}

% CORES
\newcommand{\azui}[1]{{\color[rgb]{0,0,1}{#1}}}
\newcommand{\vermei}[1]{{\color[rgb]{1,0,0}{#1}}}

% DEMARCACAO DE MUDANCA DA TESE
\newcommand{\FFF}{\ESP [FFF] \ESP}
\newcommand{\MMM}{\ESP \vermei{[MMM]} \ESP}
\newcommand{\NNN}{\ESP \azui{[NNN]} \ESP}

% DESENHO
\newcommand{\circdiam}{0.025}
\newcommand{\sitedist}{0.045}

% ESPACAMENTO
\newcommand{\ESP}{$\,$}
\newcommand{\esp}{\,}
\newcommand{\tiraesp}{\!}
\newcommand{\tab}{\quad \quad \quad}
\newcommand{\TAB}{$\quad \quad \quad$}
\newcommand{\nqquad}{\! \! \!}

% FIM DE DEFINICAO 
\newcommand{\fimdef}{\hfill \textbf{\textsc{||||}} \par\vspace{5pt}}
\newcommand{\fimprova}{\hfill \textbf{\textsc{////}} \par\vspace{5pt}}
\newcommand{\fimprovaq}{\hfill \hbox{\vrule width 7pt depth 0pt height 7pt}%
  \par\vspace{10pt}}

% ITEM
\newcommand{\itembox}{\item[\BOX]}
\newcommand{\itemdef}[1]{\textsf{#1}}
\newcommand{\itemdefi}{\itemdef{i}}
\newcommand{\itemdefii}{\itemdef{ii}}
\newcommand{\itemdefiii}{\itemdef{iii}}
\newcommand{\itemdefiv}{\itemdef{iv}}
\newcommand{\itemdefv}{\itemdef{v}}
\newcommand{\itemdefvi}{\itemdef{vi}}

% PONTUACAO
\newcommand{\BOX}{$\Box$}
\newcommand{\doispontos}{\, \, \mathbf{:} \, \,}
\newcommand{\etc}{\textsf{etc}}
\newcommand{\punkt}{\, \, \, .}
\newcommand{\setpunkt}{\, \, \mathbf{:} \, \,}
\newcommand{\punktvirg}{\, \, \, ;}
\newcommand{\virg}{\, \, \, ,}
\newcommand{\avirg}{,\,}
\newcommand{\parncom}[1]{^{(#1)}}
\newcommand{\parbaixoncom}[1]{_{(#1)}}
\newcommand{\parparbaixon}{)_{(n)}}
\newcommand{\parparbaixocom}[2]{(#2)_{(#1)}}

% TIPOS
\newcommand{\prov}[1]{\text{\textsf{#1}}}
\newcommand{\parinin}[1]{{\LARGE \textbf{#1}}}

%%%%%%%%%%%%%%%%%%%%%%%%%%%%%%%%%%%%%%%%%%%%%%%%%%%%%%%%%%%%%%%%%%%%%%%%%%%

%%%%%% PALAVRAS LATINAS 

% A POSTERIORI

\newcommand{\APOSTERIORI}{\emph{a posteriori}}

% CONFERE

\newcommand{\CF}{\textrm{cf}.}

% IBID

\newcommand{\IBID}{\textrm{ibid}.}

%%%%%%%%%%%%%%%%%%%%%%%%%%%%%%%%%%%%%%%%%%%%%%%%%%%%%%%%%%%%%%%%%%%%%%%%%%%

%%%%%% FILOSOFICOS

% METAFISICA ARISTOTELES
\newcommand{\metafis}{\emph{Metaf\'isica}}
\newcommand{\cmetafis}{\cite{metafis}}
\newcommand{\ccmetafis}[1]{\cite[#1]{metafis}}

%%%%%%%%%%%%%%%%%%%%%%%%%%%%%%%%%%%%%%%%%%%%%%%%%%%%%%%%%%%%%%%%%%%%%%%%%%%

%%%%%% ALGEBRA AUTODUAL

% ELEMENTO BILINEAR
\newcommand{\BSB}{\prodintsimpl{\rmB}{\rmS \rmB}}
\newcommand{\BHB}{\prodintsimpl{\rmB}{\rmH \rmB}}
\newcommand{\BHBbf}{\pint{\rmB}{\bfH\rmB}}
\newcommand{\BhnB}{\prodintsimpl{\rmB}{\rmhbfn \rmB}}

% ELEMENTOS
\newcommand{\rmBpsi}{\rmB(\psi)}
\newcommand{\rmBpsiestrela}{\rmB\upast(\psi)}
\newcommand{\rmBphi}{\rmB(\phi)}
\newcommand{\rmBphiestrela}{\rmB\upast(\phi)}

% VETORES
\newcommand{\tilpsi}{\til{\psi}}
\newcommand{\tilphi}{\til{\phi}}
\newcommand{\tilrho}{\til{\rho}}
\newcommand{\rhotil}{\tilrho}
\newcommand{\hatrho}{\hat{\rho}}

%%%%%%%%%%%%%%%%%%%%%%%%%%%%%%%%%%%%%%%%%%%%%%%%%%%%%%%%%%%%%%%%%%%%%%%%%%%

%%%%%% ALGEBRA DE GRASSMANN

% PSIS
\newcommand{\psiestrela}{\psi\upast}
\newcommand{\overpsiestrela}{\overline{\psiestrela}}
\newcommand{\overpsiestrelaparn}{\overline{\psiestrelaparn}}
\newcommand{\overpsikum}{\overline{\psi_k}_{(1)}}
\newcommand{\overpsikzero}{\overline{\psi_k}_{(0)}}
\newcommand{\psiparn}{\psi^{(n)}}
\newcommand{\psiparbaixon}{\psi_{(n)}}
\newcommand{\psiestrelaparn}{\psi^{*(n)}}
\newcommand{\psiestrelaparm}{\psi^{*(m)}}
\newcommand{\wedgepsiestrela}{\wedge \psi \upast}
\newcommand{\wedgepsi}{\wedge \psi}
\newcommand{\wedgeoverpsi}{\wedge \overpsi}
\newcommand{\wedgeoverpsiestrela}{\wedge \overpsiestrela}
\newcommand{\xiestrela}{\xi\upast}
\newcommand{\xitilestrela}{\tilde{\xi}\upast}
\newcommand{\xiestrelaupk}{(\xiestrela)\upk}
\newcommand{\overxiestrela}{\overline{\xiestrela}}
\newcommand{\overxiestrelaparn}{\overline{\xiestrelaparn}}
\newcommand{\overxi}{\overline{\xi}}
\newcommand{\zetaestrela}{\zeta\upast}
\newcommand{\etaestrela}{\eta\upast}
\newcommand{\overzetaestrela}{\overline{\zetaestrela}}
\newcommand{\overetaestrela}{\overline{\etaestrela}}
\newcommand{\overzetaestrelaparn}{\overline{\zetaestrela}^{(n)}}
\newcommand{\xiparn}{\xi^{(n)}}
\newcommand{\xiparm}{\xi^{(m)}}
\newcommand{\zetaparn}{\zeta^{(n)}}
\newcommand{\zetaparm}{\zeta^{(m)}}
\newcommand{\phiestrela}{\phi\upast}
\newcommand{\wedgephi}{\wedge \phi}
\newcommand{\phiparn}{\phi^{(n)}}
\newcommand{\phiparm}{\phi^{(m)}}
\newcommand{\phiparbaixocom}[2]{\phi_{#1(#2)}}
\newcommand{\overphiestrelaparn}{\overline{\phiestrelaparn}}
\newcommand{\thetaestrela}{\theta\upast}
\newcommand{\thetaestrelaparm}{\theta^{*(m)}}
\newcommand{\xiestrelaparn}{\xi^{*(n)}}
\newcommand{\xiestrelaparzero}{\xi^{*(0)}}
\newcommand{\xiestrelaparbaixon}{\xi\upast_{(n)}}
\newcommand{\xiestrelaparbaixonm}{\xi\upast_{(nm)}}
\newcommand{\xiestrelaparm}{\xi^{*(m)}}
\newcommand{\xiestrelaparmparbaixon}{\xi^{*(m)}_{(n)}}
\newcommand{\zetaestrelaparn}{\zeta^{*(n)}}
\newcommand{\zetaestrelaparm}{\zeta^{*(m)}}
\newcommand{\phiestrelaparn}{\phi^{*(n)}}
\newcommand{\phiestrelaparm}{\phi^{*(m)}}
\newcommand{\overphi}{\overline{\phi}}
\newcommand{\overPhi}{\overline{\Phi}}
\newcommand{\overGammaphi}{\overline{\Gamma \phi}}
\newcommand{\overGammaPhi}{\overline{\Gamma \Phi}}
\newcommand{\overphiparbaixok}{\overphi_{(k)}}
\newcommand{\overpsi}{\overline{\psi}}
\newcommand{\overPsi}{\overline{\Psi}}
\newcommand{\overpsiparbaixok}{\overpsi_{(k)}}
\newcommand{\overGammapsi}{\overline{\Gamma \psi}}
\newcommand{\overvarphi}{\overline{\varphi}}
\newcommand{\overvarphiuprmo}{\overline{\varphi^\rmo}}
\newcommand{\overvarphiuprmol}{\overline{\varphi_l^\rmo}}
\newcommand{\Gammavarphi}{\Gamma \varphi}
\newcommand{\overGammavarphi}{\overline{\Gamma \varphi}}
\newcommand{\overGammavarphiuprmo}{\overline{\Gamma \varphi^\rmo}}
\newcommand{\overGammavarphiuprmol}{\overline{\Gamma \varphi_l^\rmo}}
\newcommand{\overvarphiparbaixocom}[2]{\overvarphi_{#1(#2)}}
\newcommand{\overGammavarphiparbaixocom}[2]{\overGammavarphi_{#1(#2)}}
\newcommand{\overtilvarphi}{\overline{\tilvarphi}}
\newcommand{\varphiuprmo}{\varphi^{\rmo}}
\newcommand{\psiuprmo}{\psi^{\rmo}}
\newcommand{\psiparbaixocom}[2]{\psi_{#1(#2)}}
\newcommand{\overpsiuprmo}{\overline{\psi^{\rmo}}}
\newcommand{\varphiestrela}{\varphi \upast}
\newcommand{\varphilparbaixon}{\varphi_{l(n)}}
\newcommand{\varphiparbaixocom}[2]{\varphi_{#1(#2)}}
\newcommand{\Gammavarphiparbaixocom}[2]{\Gamma \varphi_{#1(#2)}}
\newcommand{\overphiestrela}{\overline{\phi\upast}}
\newcommand{\overthetaestrela}{\overline{\thetaestrela}}
\newcommand{\thetaestrelaparn}{\theta^{*(n)}}

% CONJUNTO
\newcommand{\rmPK}{\rmP \rmK}
\newcommand{\tilrmPK}{\til{\rmPK}}

% DERIVADA DE GRASSMANN
\newcommand{\grassderiv}[2]{\frac{\delta #2}{\delta #1}}
\newcommand{\grassderivpartial}[1]{\partial_{#1}}

% ELEMENTO BILINEAR
\newcommand{\KHK}{\pint{\rmK}{\rmH \rmK}}
\newcommand{\KHKcom}[2]{\pint{\rmK_{(#1)}}{\rmH \rmK_{(#2)}}}
\newcommand{\KHKmenosumcom}[2]{\pint{\rmK_{(#1)}}{\rmH\menosum \rmK_{(#2)}}}
\newcommand{\KK}[2]{\pint{\rmK_{(#1)}}{\rmK_{(#2)}}}
\newcommand{\KHKmenosum}{{\pint{\rmK}{\rmH\menosum \rmK}}}
\newcommand{\PKPK}[2]{\pint{(\rmP\rmK)_{(#1)}}{(\rmP\rmK)_{(#2)}}}
\newcommand{\PKPKsans}{\pint{\rmP\rmK}{\rmP\rmK}}

% FUNCAO UPSILO
\newcommand{\overUpsilon}{\overline{\Upsilon}}
\newcommand{\Upsilonestrela}{\Upsilon \upast}

% INTEGRAL DE GRASSMANN
\newcommand{\grassintKzero}{\int_\rmP \dd \rmK}
\newcommand{\grassintK}[1]{\int_\rmP \dd \rmK_{(#1)}}
\newcommand{\grassintKP}[2]{\int_{#2} \dd \rmK_{#1}}
\newcommand{\grassintcom}[2]{\int_{#2} \dd #1}
\newcommand{\grassintKPoplusn}{\int_{\rmPoplus} \dd \rmKopluscom{n}}
\newcommand{\grassintgaussKzero}[1]{\int_\rmP \dd \rmK(\mu_{#1})}
\newcommand{\grassintgaussK}[2]{\int_\rmP \dd \rmK_{(#1)}(\mu_{#2})}
\newcommand{\grassintgaussKP}[4]{\int_{#4} \dd \rmK_{#1}^{#2}(\mu_{#3})}
\newcommand{\grassintgausscom}[3]{\int_{#3} \dd #1(\mu_{#2})}

% ISOMORFISMO
\newcommand{\rmkrmP}{\rmk^{\rmP}_\bbA}
\newcommand{\rmkrmPupmenos}{\left(\rmkrmP\right)\upmenos}
\newcommand{\rmkcalB}{\rmk^{\calB}_\bbA}
\newcommand{\rmkwedgeP}{\rmk^{\rmP}_{\calB}}
\newcommand{\rmkrmPcom}[1]{\rmk^\rmP_{#1}}
\newcommand{\rmkrmPijtokl}{\rmk^\rmP_{ij \to kl}}
\newcommand{\rmkrmPijtoklcom}[4]{\rmk^\rmP_{#1 #2 \to #3 #4}}

% PRODUTO CIRCULAR
\newcommand{\prodcircP}{\stackrel{\rmP}{\circ}}
\newcommand{\produtoriacircP}[3]{\stackrel{\nqquad \nqquad \nqquad \rmP}{\bigcirc_{#1=#2}^{#3}}}

%%%%%%%%%%%%%%%%%%%%%%%%%%%%%%%%%%%%%%%%%%%%%%%%%%%%%%%%%%%%%%%%%%%%%%%%%%%%%%%%%%%%%%%%%%%%%%%%%%%%%%%%%%

%%%%%% ANTIGO: NOVOS COMANDOS

\newcommand{\sss}{$\,$}
\newcommand{\nss}{$\! \! \!$}
\newcommand{\nh}{n_{\mathcal{H}}}
\newcommand{\lTo}{\Leftarrow}
\newcommand{\upto}{\uparrow}
\newcommand{\db}[1]{\ldbrack #1 \rdbrack}
\newcommand{\mtext}[1]{\text{\small{\textsl{#1}}}}
\newcommand{\existsin}[2]{\exists #1  \! \! \: \in \! \! \! \; #2}
\newcommand{\thereis}{\exists}
\newcommand{\thereisin}[2]{\exists #1 \! \! \; \in \! \! \! \; #2}
\newcommand{\forallin}[2]{\forall #1  \! \! \: \in \! \!  \; #2}
\newcommand{\set}[1]{\left\{ #1 \right\}}
\newcommand{\mitemdef}[2]{\item[#1.] $#2$}
\newcommand{\mitemlisib}[1]{\item[] $#1$}
\newcommand{\mitemteo}[3]{\item[(#1. #2)] $#3$}
\newcommand{\ifff}[1]{$(#1) \iff$}
\newcommand{\mstring}[2]{\mtext{#1}\db{#2}}
\newcommand{\mstringp}[3]{\mtext{#1}(#2)\db{#3}}
\newcommand{\prodintl}[2]{\left< \left. #1 \,  \right|  #2 \right>}
\newcommand{\prodintlpar}[2]{\left( \left. #1 \,  \right|  #2 \right)}
\newcommand{\prodintlk}[3]{\left< \left. #1 \,  \right|  #2 \right>_{\! #3}}
\newcommand{\prodintlpark}[3]{\left( \left. #1 \,  \right|  #2 \right)_{\! #3}}
\newcommand{\prodintr}[2]{\left<  #1 \left| \,  #2 \right. \right>}
\newcommand{\prodintrpar}[2]{\left(  #1 \left| \,  #2 \right. \right)}
\newcommand{\prodintrk}[3]{\left<  #1 \left| \,  #2 \right. \right>_{\! #3}}
\newcommand{\prodintrpark}[3]{\left(  #1 \left| \,  #2 \right. \right)_{\! #3}}
\newcommand{\hilbaga}{\mathcal{H}}
\newcommand{\hilbagal}{\overline{\mathcal{H}}}
\newcommand{\hilbagatimes}[1]{\mathcal{H}^{[#1]}}
\newcommand{\uhilbaga}[1]{\mathcal{H}_{(#1)}}
\newcommand{\ouhilbaga}[1]{\overline{\mathcal{H}}_{(#1)}}
\newcommand{\ohilbaga}[1]{\overline{\mathcal{H}^{(#1)}}}
\newcommand{\nohilbaga}[1]{\mathcal{H}^{(#1)}}
\newcommand{\ohilbagaij}[2]{
  \overline{\mathcal{H}_{#1} \oplus \overline{\mathcal{H}}_{#2}}
}
\newcommand{\osumaga}{\mathcal{H} \oplus \overline{\mathcal{H}}}
\newcommand{\oosumaga}{\overline{\mathcal{H} \oplus \overline{\mathcal{H}}}}
\newcommand{\osumoaga}{\overline{\mathcal{H}} \oplus \mathcal{H}}
\newcommand{\bigwedgelin}[1]{\bigwedge \,^{\! \! \! \! #1} \,}
\newcommand{\bigwedgea}{\bigwedge \! \!}
\newcommand{\bigwedgeo}{\bigwedge \!}
\newcommand{\reais}{\mathbb{R}}
\newcommand{\complexos}{\mathbb{C}}
\newcommand{\naturais}{\mathbb{N}}
\newcommand{\inteiros}{\mathbb{Z}}
\newcommand{\toroT}{\mathbb{T}}
\newcommand{\toroX}{\mathbb{X}}
\newcommand{\toroXSL}{\mathbb{X}_{\mathrm{S} \times \mathfrak{L}}}
\newcommand{\toroTbxXI}{\mathbb{T}_\beta \times \mathbb{X}_{\mathrm{I}}}

\newcommand{\mbf}[1]{\mathbf{#1}}
\newcommand{\mbfF}{\mathbf{F}}
\newcommand{\mbfr}{\mathbf{r}}
\newcommand{\tilmbfr}{\tilde{\mathbf{r}}}
\newcommand{\mbfq}{\mathbf{q}}
\newcommand{\tilmbfq}{\tilde{\mathbf{q}}}
\newcommand{\mbfX}{\mathbf{X}}
\newcommand{\tilmbfX}{\tilde{\mathbf{X}}}
\newcommand{\ombf}[1]{\overline{\mathbf{#1}}}
\newcommand{\ombfX}{\ombf{X}}
\newcommand{\ob}{\overline{b}}
\newcommand{\oC}{\overline{C}}
\newcommand{\of}{\overline{f}}
\newcommand{\og}{\overline{g}}
\newcommand{\oG}{\overline{G}}
\newcommand{\oj}{\overline{j}}
\newcommand{\oJ}{\overline{J}}
\newcommand{\oL}{\overline{L}}
\newcommand{\oM}{\overline{M}}
\newcommand{\oN}{\overline{N}}
\newcommand{\onu}{\overline{\nu}}
\newcommand{\ou}{\overline{u}}
\newcommand{\oU}{\overline{U}}
\newcommand{\ov}{\overline{v}}
\newcommand{\oV}{\overline{V}}
\newcommand{\ow}{\overline{w}}
\newcommand{\oW}{\overline{W}}
\newcommand{\ox}{\overline{x}}
\newcommand{\oX}{\overline{X}}
\newcommand{\oY}{\overline{Y}}
\newcommand{\oalpha}{\overline{\alpha}}
\newcommand{\obeta}{\overline{\beta}}
\newcommand{\oomega}{\overline{\omega}}
\newcommand{\tilomega}{\tilde{\omega}}
\newcommand{\Ups}{\Upsilon}
\newcommand{\oUps}{\overline{\Upsilon}}
\newcommand{\oot}{\longleftarrow}
\newcommand{\bigwedgeuv}[2]{\sideset{_{(#1)}}{_{(#2)}} \bigwedge}
\newcommand{\bigwedgeUV}[2]{\sideset{#1}{#2} \bigwedge}
\newcommand{\psum}[1]{\sideset{}{^{(p)}}\sum_{#1}}
\newcommand{\timessum}[1]{\sideset{}{^{\bigotimes}}\sum_{#1}}
\newcommand{\bigotimespsi}{\bigotimes \!}
\newcommand{\fislim}[2]{\sideset{}{^{#1}}\lim_{#2}}
\newcommand{\leftrightpar}[1]{\left( #1 \right)}
\newcommand{\veczero}{\mathbf{0}}
\newcommand{\oveczero}{\overline{\mathbf{0}}}
\newcommand{\veczerof}[1]{\mathbf{0}_{#1}}
\newcommand{\oveczerof}[1]{\overline{\mathbf{0}}_{#1}}
\newcommand{\fespaco}[1]{\mathfrak{F}_{\bigwedge}^{#1}}
\newcommand{\ffuncao}{\mathfrak{F}}
\newcommand{\fa}{\mathfrak{a}}
\newcommand{\fA}{\mathfrak{A}}
\newcommand{\fB}{\mathfrak{B}}
\newcommand{\fd}{\mathfrak{d}}
\newcommand{\fD}{\mathfrak{D}}
\newcommand{\fe}{\mathfrak{e}}
\newcommand{\ff}{\mathfrak{f}}
\newcommand{\fg}{\mathfrak{g}}
\newcommand{\fH}{\mathfrak{H}}
\newcommand{\frki}{\mathfrak{i}}
\newcommand{\fl}{\mathfrak{l}}
\newcommand{\fL}{\mathfrak{L}}
\newcommand{\fm}{\mathfrak{m}}
\newcommand{\fM}{\mathfrak{M}}
\newcommand{\fS}{\mathfrak{S}}
\newcommand{\ft}{\mathfrak{t}}
\newcommand{\fT}{\mathfrak{T}}
\newcommand{\fu}{\mathfrak{u}}
\newcommand{\fw}{\mathfrak{w}}
\newcommand{\fibA}{\mathfib{A}}
\newcommand{\fibf}{\mathfib{f}}
\newcommand{\fibg}{\mathfib{g}}
\newcommand{\fibm}{\mathfib{m}}
\newcommand{\fibS}{\mathfib{S}}
\newcommand{\fibu}{\mathfib{u}}
\newcommand{\fecho}[1]{\dot{\overline{#1}}}
\newcommand{\wedgecria}[2]{\wedge \, ^{\! \! #1}_{\! \! #2} \,}
\newcommand{\opneg}[2]{\neg^{#1}_{#2}}
\newcommand{\tiln}{\tilde{n}}
%newcommand{\tilm}{\tilde{m}}
\newcommand{\tilalgW}{\tilde{\mathcal{W}}}
\newcommand{\vecum}{\mathbf{1}}
\newcommand{\vecumf}[1]{\mathbf{1}_{#1}}
\newcommand{\ovecum}{\overline{\mathbf{1}}}
\newcommand{\ovecumf}[1]{\overline{\mathbf{1}}_{#1}}
\newcommand{\algA}{\mathcal{A}}
\newcommand{\algAell}{\mathcal{A}_{\ell^2(\mathfrak{L})}}
\newcommand{\algASell}{\mathcal{A}_{\ell^2(\mathrm{S} \times \mathfrak{L})}}
\newcommand{\oalgA}{\overline{\mathcal{A}}}
\newcommand{\ealgA}{\overline{\mathcal{A}}_{+,1}}
\newcommand{\ealgAell}{\overline{\mathcal{A}}_{+,1}(\ell^2(\mathfrak{L}))}
\newcommand{\ealgASell}{\overline{\mathcal{A}}_{+,1}(\ell^2(\mathrm{S} \times \mathfrak{L}))}
\newcommand{\algB}{\mathcal{B}}
\newcommand{\algC}{\mathcal{C}}
\newcommand{\algE}{\mathcal{E}}
\newcommand{\algF}{\mathcal{F}}
\newcommand{\Fock}{\mathcal{F}(\mathcal{H})}
\newcommand{\Fockbos}{\mathcal{F}_{+}(\mathcal{H})}
\newcommand{\Fockferm}{\mathcal{F}_{-}(\mathcal{H})}
\newcommand{\algG}{\mathcal{G}}
\newcommand{\algI}{\mathcal{I}}
\newcommand{\algK}{\mathcal{K}}
\newcommand{\algKeF}{\mathcal{K}^{\epsilon, \mathbf{F}}}
\newcommand{\algM}{\mathcal{M}}
\newcommand{\algO}{\mathcal{O}}
\newcommand{\algP}{\mathcal{P}}
\newcommand{\algR}{\mathcal{R}}
\newcommand{\algS}{\mathcal{S}}
\newcommand{\algT}{\mathcal{T}}
\newcommand{\Nu}{\mathcal{V}}
\newcommand{\Nud}{\mathcal{V}_{(2)}}
\newcommand{\algX}{\mathcal{X}}
\newcommand{\algW}{\mathcal{W}}
\newcommand{\algWF}{\mathcal{W}^{\mathbf{F}}}
\newcommand{\algZ}{\mathcal{Z}}
\newcommand{\modulo}[1]{\left| #1 \right|}
\newcommand{\normprod}[1]{\, \, : \! \! #1 \! \! : \, \,}
\newcommand{\floor}[1]{\left\lfloor #1 \right\rfloor}

\newcommand{\mRED}{\mathrm{RED}}
\newcommand{\mBOG}{\mathrm{BOG}}
\newcommand{\mca}{\mtext{ca}}
\newcommand{\mGm}{\mtext{Gm}}
\newcommand{\mInt}{\mathrm{Int}}
\newcommand{\mKMS}{\mathrm{KMS}}
\newcommand{\mls}{\mtext{ls}}
\newcommand{\mNint}{\mathrm{Nint}}
\newcommand{\mNintxsl}{\mathrm{Nint}\leftrightpar{\toroXSL}}
\newcommand{\mot}{\mtext{ot}}
\newcommand{\mRCA}{\mtext{RCA}}
\newcommand{\mRCC}{\mtext{RCC}}
\newcommand{\mtr}{\mtext{tr}}
\newcommand{\conjn}[1]{\mathbb{N}(#1)}
\newcommand{\conjnz}[1]{\mathbb{N}_0(#1)}
\newcommand{\bmalpha}{\bm{\alpha}}
\newcommand{\bma}{\bm{a}}
\newcommand{\bmb}{\bm{b}}
\newcommand{\bmc}{\bm{c}}
\newcommand{\bmd}{\bm{d}}
\newcommand{\bml}{\bm{l}}
\newcommand{\bms}{\bm{s}}
\newcommand{\bmum}{\bm{1}}

\newcommand{\rmb}{\mathrm{b}}
\newcommand{\ormb}{\overline{\mathrm{b}}}
\newcommand{\rmBar}{\mathrm{Bar}}
\newcommand{\rmCo}{\mathrm{Co}}

\newcommand{\rmeq}{\, \mathrm{eq}}

\newcommand{\rmfis}{\mathrm{f\text{\'i}s}}
\newcommand{\rmefis}{\text{\emph{f\'is}}}

\newcommand{\rmFerm}{\mathrm{Ferm}}
\newcommand{\rmGr}{\mathrm{Gr}}
\newcommand{\rmIm}{\mathrm{Im}}
\newcommand{\Ibase}{\mathrm{I}}
\newcommand{\Ibasek}[1]{\mathrm{I}_{#1}}
\newcommand{\rmm}{\mathrm{m}}
\newcommand{\rmmod}{\, \mathrm{mod} \,}

\newcommand{\rmPC}{\mathrm{P}_{\mathbb{C}}}
\newcommand{\rmrem}{\mathrm{res}}
\newcommand{\rmRe}{\mathrm{Re}}

\newcommand{\rmSt}{\mathrm{St}}
\newcommand{\rmtr}{\mathrm{tr}}

\newcommand{\rmu}{\mathrm{u}}

\newcommand{\ormu}{\overline{\mathrm{u}}}
\newcommand{\rmW}{\mathrm{W}}
\newcommand{\doisaeleinf}{2^{\fL}_{<\infty}}
\newcommand{\doisquad}{\quad \quad}
\newcommand{\tresquad}{\quad \quad \quad}
\newcommand{\grassderiva}[2]{\frac{\delta #1}{\delta #2}}
\newcommand{\intd}{\! \mathrm{d}}
\newcommand{\intgrassmann}[2]{\int_{#1} \! \mathrm{d}\mathcal{H}_{#2}}
\newcommand{\intgaussmann}[3]{\int_{#1} \! \mathrm{d}\mu_{#3}(\mathcal{H}_{#2})}
\newcommand{\inttriangle}{\triangle }
\newcommand{\kappaop}[2]{\kappa^{#1}_{#2}}
\newcommand{\expo}[1]{\exp \! \left( #1 \right)}
\newcommand{\BK}{\text{\emph{B-K}}}

\newcommand{\kboltz}{\mathrm{k}_{\mathrm{B}}}
\newcommand{\tsfA}{\text{\textsf{A}}}
\newcommand{\spinpracima}{\uparrow}
\newcommand{\spinprabaixo}{\downarrow}
\newcommand{\tilX}{\til{X}}
\newcommand{\nquad}{\! \! \! \! \! \!}
\newcommand{\doisnquad}{\! \! \! \! \! \! \! \! \! \!}
\newcommand{\fechab}[2]{\left[#1, #2 \right)}

%%%%%%%%%%%%%%%%%%%%%%%%%%%%%%%%%%%%%%%%%%%%%%%%%%%%%%%%%%%%%%%%%%%%%%%%%

%%%%%% TITLE, ABSTRACT

\title{The weakly interacting tenfold way}

\author{Lucas C.P.A.M. M\"{u}ssnich$^1$, Renato V. Vieira$^2$}

\address{$^1$ Instituto de Ciências Matemáticas e de Computação, Universidade de São Paulo (ICMC-USP), São Carlos, SP, Brasil}
\address{
$^2$ Instituto de Matemática, Estatística e Computação Científica, Universidade Estadual de Campinas (IMECC-UNICAMP), Campinas, SP, Brasil. }

\begin{abstract}

The tenfold way is a classification scheme for the building 
blocks of free fermion systems. More precisely, 
it classifies the isomorphism classes of spaces of equivariant free Hamiltonians 
in irreducible fermion systems with symmetries. 
This classification scheme naturally leads to the K-theoretical classification
of topological phases of matter, known as the periodic table of topological 
insulators and superconductors. Topological K-theory is represented by spectra $KU$ and $KO$, and in this article we present realizations of these spectra in terms of time evolution operators of irreducible free fermion systems with symmetries, with explicit 
formulas for the structural suspension maps. 
We introduce a geometric definition of the space of weakly interacting time evolution operators, as
the complement of the cut locus of the subspace of free operators. Our main result is that spectra $KU^{wi}$ and $KO^{wi}$ of weakly interacting time evolution operators deformation retract to $KU$ and $KO$. We thus have a stable homotopy theoretical proof that the tenfold way is stable to weak interactions.
\end{abstract}

\maketitle

\tableofcontents

%%%%%%%%%%%%%%%%%%%%%%%%%%%%%%%%%%%%%%%%%%%%%%%%%%%%%%%%%%%%%%%%%%%%%%%%%

%%%%%% INCLUDE FILES

%%%%%%%%%%%%%%%%%
%%% S % S % S %%%
%%%%%%%%%%%%%%%%%

\section{Introduction}

The tenfold way is a classification scheme for the building 
blocks of non-interacting fermion systems. More precisely, 
it classifies the isomorphism classes of spaces of equivariant free Hamiltonians 
in irreducible fermion systems with symmetries. 
This classification scheme naturally leads to the K-theoretical classification
of topological phases of matter, known as the periodic table of topological 
insulators and superconductors. 

In this article we show how the topological K-theory spectra $KU$ and $KO$ have realizations in terms of time evolution operators of irreducible free fermion systems with symmetries, with explicit 
formulas for the structural suspension maps. 
We further give a geometric definition of weakly interacting time evolution operators. We show how associated spectra $KU^{wi}$ and $KO^{wi}$ of weakly interacting time evolution operators deformation retract to $KU$ and $KO$, which means they represent the same cohomology theories. Thus, we provide a stable homotopy theoretical proof that the tenfold way is stable to weak interactions.

One perspective on the tenfold way is to consider that fermion state spaces 
can be modeled by representations of Clifford algebras, of which there are 
exactly 10 Morita equivalence classes. Another perspective is that, due to 
Schur's lemma, the automorphisms of irreducible supergroup representations 
form associative super division algebras, of which there are exactly ten. 
These points of view are connected by the fact that every Clifford algebra is Morita 
equivalent to a super division algebra \cite{baez2020tenfold,freed2013twisted}.

Another perspective on the tenfold way is in terms of the 10 infinite families 
of compact symmetric spaces classified by Cartan \cite{cartan1926classe,cartan1927classe}. 
In \cite{heinzner2005symmetry} Heinzner, Huckleberry and Zirnbauer extended Dyson's 
threefold way \cite{dyson1962threefold} by classifying fermion systems with equivariant Hamiltonians 
that are  quadratic on creation and annihilation operators. There they showed that the 
systems' spaces of time evolution operators are compact symmetric spaces, and 
that all 10 classes classified by Cartan can be obtained this way. 
In \cite{agarwala2017tenfold} Agarwala, Haldar and Shenoy showed that all 
classes can be obtained by the subspace of free Hamiltonians. 
They further explicitly lay out the structure of equivariant Hamiltonians that preserve particle number, both for 
free systems and interacting ones.

The role of topological K-theory in the tenfold way was made explicit by Kitaev 
in \cite{kitaev2009periodic}. There he shows that by deforming gapped Hamiltonians 
by a process of spectral flattening you can obtain classifying spaces for free fermion phases, 
which are all compact symmetric spaces. He also points out that if you 
have a family of irreducible systems parametrized by a space $\Lambda$, then its topological 
phases are classified by homotopy classes of maps from $\Lambda$ to one of the classifying spaces, 
which are in bijection with difference classes of equivariant vector bundles over $\Lambda$ 
that model free fermion ground states. These homotopy classes form groups isomorphic to 
some topological K-group of $\Lambda$, the latter depending on the internal symmetries 
of the system and on the dimension of $\Lambda$.

From the point of view of stable homotopy theory, the connection between the classification of free fermion systems and topological K-theory arises from the fact that the symmetric spaces of free time evolution operators form the underlying spaces 
of spectra $KU$ and $KO$. These spectra represent, in the sense of Brown's representability 
theorem, complex and real topological K-theory \cite{adams1974stable}. This means the K-groups of $\Lambda$ are isomorphic to stable homotopy groups of mapping spectra $KU^\Lambda$ and $KO^\Lambda$.

As Freed and Moore 
explain in \cite{freed2013twisted}, topological phases of crystalline systems are classified by twisted equivariant K-theory \cite{may2006parametrized}. A $d$-dimensional crystal is a solid material whose constituents (atoms, molecules, or ions) are distributed over the translation lattice of a crystallographic group. A crystallographic group is a discrete subgroup $\mathcal C\subset \text{Iso}(\mathds R^d)$ of isometries of Euclidean space containing a normal subgroup $\mathcal N$ of translations that is a lattice, and such that the stabilizer of the origin $\mathcal P\coloneqq \text{Stab}_{\mathcal C}(\vec 0)\cong \frac{\mathcal C}{\mathcal N}$, called the point group, is contained in $O(d)$. The Brillouin zone of the crystal is the momentum space modeled by the Pontryagin dual $\widehat{\mathcal N}\coloneqq\texttt{Grp}(\mathcal N,\mathds S^1)$, which is naturally equipped with a $\mathcal P$-space structure. It is common, eg \cite{cornfeld2021tenfold}, to consider crystallographic groups over the lattice $\mathds Z^d$, whose Brillouin zone is the torus $\mathds T^d$. In a crystalline fermion system we further assume each constituent has an internal symmetry group $G$ generated by chiral, charge-conjugation or time-reversal  symmetries. If we assume the internal and crystallographic symmetries commute then the symmetry group of the system is $\mathcal C\times G$, and if they don't commute it is a semi-direct product $\mathcal C\ltimes^\alpha G$. For a compact Lie group $P$, the $P$-equivariant cohomology theories are represented by $P$-spectra, and there are $P$-spectra $KU_P$ and $KO_P$ that represent complex and real $P$-equivariant K-theory \cite{LMS,may1996equivariant}. In this context topological phases are classified by twisted equivariant K-groups, which are composed of  sections of $KU_{\mathcal P}$-bundles or $KO_{\mathcal P}$-bundles over $\widehat{\mathcal N}$, with the bundle structure induced by the twisting $\alpha$. Stable homotopy theory provides an 
efficient computational tool for topological classification problems, for instance 
in \cite{cornfeld2021tenfold} the formalism of equivariant ring spectra is used to 
obtain the complete classification of topologically distinct quantum states of 3D 
crystalline topological insulators and superconductors for key space-groups, by 
considering fermion systems parametrized by the Brillouin zone torus.

In \cite{kitaev2009periodic} Kitaev points out that the tenfold way should be stable to weak 
interactions, and further gives an example of how two topological phases distinguished 
by K-theoretical invariants can be connected by a continuous path 
of interacting phases. Thus, though sufficiently strong interactions may break the 
classification scheme of the tenfold way, weakly interacting fermion systems should 
still be classified by topological K-theory. In this article we give a stable homotopy theoretical proof of this stability to weak interactions, within the geometric definition of weak interaction here given. In the full interacting regime topological 
phases are classified by cobordism, represented by Thom's bordism spectra \cite{freed2021reflection,Kapustin_2015}.

In order to prove the stability of the tenfold way to weak interactions, we start by reviewing the structure of the spaces of free time evolution operators underlying $KU$ and $KO$, as shown in \cite{agarwala2017tenfold,heinzner2005symmetry}. Using this construction we provide explicit formulas for the suspension maps, which are determined by Cartan embeddings, and by the 
homotopy equivalences Bott used in his periodicity theorem \cite{bott1959stable}. 
These suspension maps are closely related to the diagonal map Kennedy and 
Zirnbauer used in their  homotopy-theoretic proof of the periodic table \cite{kennedy2015bott,kennedy2016bott}. We further give a geometric definition of weakly interacting time evolution operators in terms of the complement of the cut locus of the submanifold of free operators within 
the full interacting space. This allows us to define spectra $KU^{wi}$ and $KO^{wi}$ composed of weakly interacting operators, with our weak interaction  condition guaranteeing that these spectra deformation retract to $KU$ and $KO$. Since the cut locus is closed our definition of weak interaction is stable to small perturbations. Thus the weakly interacting regime reproduces the classifying table of topological insulators and superconductors (see table \ref{tab:ClassifKT}).

\begin{table}[ht]
\caption{Each row represents a universal symmetry class, identified
by a Cartan-Altland-Zirnbauer (CAZ) label. The symmetry class may present chiral $\hat S$, charge-conjugation $\hat C$ or time-reversal $\hat T$ symmetries. The signature $(\epsilon_S\epsilon_C\epsilon_T)$ identifies the absence of the associated symmetry when the entry is $0$, or its square when the entry is $\pm 1$. The third column lists the spaces $\mathscr M^{\epsilon_S\epsilon_C\epsilon_T}_{wi}$ of equivariant weakly interacting time evolution operators. The fourth column lists the associated Cartan compact symmetric spaces. The last columns indicate the stable homotopy groups $\pi^S_n\mathscr M^{\epsilon_S\epsilon_C\epsilon_T}_{wi}$.}
\label{tab:ClassifKT}
\centering
    \begin{tabular}{|c|c|c|c|c|c|c|c|c|c|c|c|}\hline
     $\bullet$ &CAZ&$KU^{wi}_\bullet$&$KU_\bullet$&$\pi_0^S$&$\pi^S_1$&$\pi^S_2$&$\pi^S_3$&$\pi_4^S$&$\pi^S_5$&$\pi^S_6$&$\pi^S_7$\\\hline\hline
        0& AIII&$\mathscr M^{100}_{wi}$& $BU(\infty)$&$\mathds Z$&$0$&$\mathds Z$&$0$&$\mathds Z$&$0$&$\mathds Z$&$0$\\\hline
        1& A&$\mathscr M^{000}_{wi}$& $U(\infty)$&$0$&$\mathds Z$&$0$&$\mathds Z$&$0$&$\mathds Z$&$0$&$\mathds Z$\\\hline
    \end{tabular}
    
    \vspace{0.5cm}
    
    \begin{tabular}{|c|c|c|c|c|c|c|c|c|c|c|c|}\hline
        $\bullet$ &CAZ&$KO^{wi}_\bullet$&$KO_\bullet$&$\pi_0^S$&$\pi^S_1$&$\pi^S_2$&$\pi^S_3$&$\pi_4^S$&$\pi^S_5$&$\pi^S_6$&$\pi^S_7$\\\hline\hline
         0&BDI&$\mathscr M^{111}_{wi}$&$ BO(\infty)$&$\mathds Z$&$0$&$0$&$0$&$\mathds Z$&$0$&$\mathds Z_2$&$\mathds Z_2$\\\hline
         1&AI&$\mathscr M^{001}_{wi}$&$\frac{U(\infty)}{O(\infty)}$&$0$&$0$&$0$&$\mathds Z$&$0$&$\mathds Z_2$&$\mathds Z_2$&$\mathds Z$\\\hline
         2&CI&$\mathscr M^{1-11}_{wi}$&$\frac{Sp(\sfrac{\infty}{2})}{U(\sfrac{\infty}{2})}$&$0$&$0$&$\mathds Z$&$0$&$\mathds Z_2$&$\mathds Z_2$&$\mathds Z$&$0$\\\hline
         3&C&$\mathscr M^{0-10}_{wi}$&$ Sp(\sfrac{\infty}{2})$&$0$&$\mathds Z$&$0$&$\mathds Z_2$&$\mathds Z_2$&$\mathds Z$&$0$&$0$\\\hline
         4&CII&$\mathscr M^{1-1-1}_{wi}$&$ BSp(\infty)$&$\mathds Z$&$0$&$\mathds Z_2$&$\mathds Z_2$&$\mathds Z$&$0$&$0$&$0$\\\hline
         5&AII&$\mathscr M^{00-1}_{wi}$&$\frac{U(\infty)}{Sp(\sfrac{\infty}{2})}$&$0$&$\mathds Z_2$&$\mathds Z_2$&$\mathds Z$&$0$&$0$&$0$&$\mathds Z$\\\hline
         6&DIII&$\mathscr M^{11-1}_{wi}$&$\frac{O(\infty)}{U(\sfrac{\infty}{2})}$&$\mathds Z_2$&$\mathds Z_2$&$\mathds Z$&$0$&$0$&$0$&$\mathds Z$&$0$\\\hline
         7&D&$\mathscr M^{010}_{wi}$&$ O(\infty)$&$\mathds Z_2$&$\mathds Z$&$0$&$0$&$0$&$\mathds Z$&$0$&$\mathds Z_2$\\\hline 
    \end{tabular}
\end{table}

\textbf{Structure of the article.}
In section \ref{sec2} we review the classification of irreducible free fermion systems with symmetries by compact symmetric spaces. Following \cite{heinzner2005symmetry} we consider the Nambu space model for fermions, and as in \cite{agarwala2017tenfold} we present convenient choice of basis to describe the structure of the symmetries and free equivariant Hamiltonians. We further show how the symmetries determine the symmetric structure of the spaces of time evolution operators, and give their structure in the chosen convenient basis for each symmetry class.

In section \ref{sec3} we review the construction of the topological K-theory functors, and the definition of the spectra that represent cohomology theories. We then construct spectra $KU$ and $KO$ that represent complex and real topological K-theories in terms of free time evolution operators, including explicit formulas for the structural suspension maps.

In section \ref{sec4} we give our geometric definition of weakly interacting time evolution operators, and show how these form spectra $KU^{wi}$ and $KO^{wi}$ that deformation retract onto the subspectra of free operators.

In section \ref{sec5} we make some remarks concerning possible extensions of our work, in particular regarding the classification of crystalline fermion systems by twisted equivariant K-theory, and the classification of interacting systems by cobordism.\\

\textbf{Notation and terminology.}
For $A=\mqty[a_{ij}]\in M_{N}\mathds C$ we denote its conjugate by $A^*=\mqty[a_{ij}^*]$, its transpose by $A^t=\mqty[a_{ji}]$, and its conjugate transpose by $A^\dagger=A^{*t}=A^{t*}=\mqty[a_{ji}^*]$.
We will denote by $K$ the anti-linear complex conjugation operator
$
K:\mathds C^N\to \mathds C^N$, $Kv=\mqty[ v_i^*]$. For $A\in M_{N}\mathds C$ we have $KAK=A^*$.
For $p,q\in\mathds N$ we set the matrix  $1_{p,q}=\begin{bsmallmatrix}1_p&0\\0&-1_q\end{bsmallmatrix}$. For $N$ even we set $J_N=\begin{bsmallmatrix}0&1_{\sfrac{N}{2}}\\-1_{\sfrac{N}{2}}&0\end{bsmallmatrix}$ and $F_N=\begin{bsmallmatrix}0&1_{\sfrac{N}{2}}\\1_{\sfrac{N}{2}}&0\end{bsmallmatrix}$. We note that the compact symplectic group $Sp(\sfrac{N}{2})$ is composed of the unitary matrices $U\in U(N)$ such that $J_NUJ_N^\dagger=U^*$.

The smash product of pointed spaces $X,Y\in \texttt{Top}_*$ is $X\wedge Y\coloneqq \frac{X\times Y}{X\times \{y_0\}\cup \{x_0\}\times Y}$. We denote the standard interval by $I=[0,1]$, the pointed circle  by $\mathds S^1\coloneqq I_{/0\sim 1}\in \texttt{Top}_*$, with the identified extremities as base point, and the $n$-th sphere by $\mathds S^n\coloneqq (\mathds S^1)^{\wedge n}$.  We also set $I_+\coloneqq I\sqcup\{x_0\}\in \texttt{Top}_*$ the space obtained by adding a disjoint base point. A pointed map $h:X\wedge I_+\to Y$ is a base point preserving homotopy, and a pointed map $\sigma:X\wedge \mathds S^1\to Y$ is a base point preserving homotopy from the constant map to itself. The $n$-th loop space of a pointed space $X$ is $\Omega^n X\coloneqq X^{\mathds S^n}$, and its set of connected components form the $n$-th homotopy group $\pi_n X\coloneqq \pi_0\Omega^nX$.

Let $\mathcal G$ be a Lie group, and $\mathfrak g$ be its associated Lie algebra. We will denote the group level commutator by $[g,h]=ghg^{-1}h^{-1}$ for $g,h\in \mathcal G$, and the algebra level commutator by $\{A,B\}=AB-BA$ for $A,B\in \mathfrak g$.\\

\textbf{Acknowledgements.}
Renato V. Vieira was financed by Conselho Nacional de Desenvolvimento Científico e Tecnológico – Brasil (CNPq), grant no. 150669/2024-0.

Lucas C.P.A.M. M\"ussnich was financed by Coordenação de Aperfeiçoamento de Pessoal de Nível Superior - Brasil (CAPES).

The authors would like to thank N. Javier Buitrago Aza for fruitful discussions during the initial stages of this work.
%%%%%%%%%%%%%%%%%
%%% S % S % S %%%
%%%%%%%%%%%%%%%%%

\section{Classification of free fermion systems with symmetries}\label{sec2}

\subsection{Nambu space model for fermions}

    The following construction is standard in the mathematical physics literature. See, for instance, \cite{araki-AMS, book-bru-pedra,bratteli-dois}. We adopt the Clifford-algebraic framework to maintain consistency with the main references here used \cite{heinzner2005symmetry,kitaev2009periodic}. For a complete discussion on the mathematical aspects of quantization, see, for example,
    \cite{book-derezinski}.
    
    The state space of a single fermion is modeled by a Hilbert space.
    We shall only consider the finite-dimensional case, which already presents complexities
    regarding the classification of symmetries.
    Given $N\in\mathds N$, let $\mathscr V_N$ be a Hilbert space with dimension $N$.
    The Fock space of many-fermion states is the exterior algebra 
    \begin{align*}\textstyle
        \bigwedge \mathscr V_N \coloneqq
        \bigoplus_{n=0}^N \mathscr V_N^{\wedge n},
    \end{align*}
    where each 
    $\mathscr V_N^{\wedge n}$ is the space of $n$-particle anti-symmetrized states
    with hermitian structure 
    $$
        \braket{\wedge_i u_i}{\wedge_jv_j}_{\mathscr V_N^{\wedge n}}\coloneqq 
        \det\mqty[\braket{u_i}{v_j}_{\mathscr V_N}]
        .
    $$ 
    Recall that $\mathscr V_N^{\wedge 1} \cong \mathscr V_N$, and that 
    the direct sum goes up to $N$ due to anti-symmetry and finite dimension. The subspace $\mathscr V_N^{\wedge 0}=\mathds C$ is associated with
    the \emph{vacuum state}. We set $\ket{0}\coloneqq(1,0,0,\dots)\in \mathscr V_N^{\wedge 0}$,
    and interpret it as the zero-particle state. For each 
    $v\in \mathscr V_N$ we associate \emph{creation} and \emph{annihilation} operators,
    that allow the construction of $n$-particle states from the vacuum. These are, 
    respectively, $\varepsilon(v)\in\mathfrak B(\bigwedge\mathscr V_N)$ (degree $1$)
    and $\iota(v^\dagger)\in \mathfrak B(\bigwedge\mathscr V_N)$ (degree $-1$), given by 
    \begin{align*}
        \varepsilon(v)(u_1\wedge\cdots \wedge u_n) & \coloneqq 
        v\wedge u_1\wedge\cdots \wedge u_n, \\
        \iota(v^\dagger)(u_1\wedge\cdots \wedge u_n) &\textstyle \coloneqq
        \sum_{i=1}^n (-1)^{i+1}
        v^\dagger(u_j)\,u_1\wedge\cdots \wedge u_{j-1}\wedge u_{j+1}\wedge\cdots u_n,
    \end{align*}
    where $v^\dagger\in \mathscr V_N^*$ is the dual of $v$ via the Fr\'echet-Riesz representation theorem. Denote by 
    $\mathcal B=\{\ket{i}\}_{1\leq i\leq N}$ an orthonormal basis for $\mathscr V_N$, and 
    and by $\mathcal B^*=\{\bra{i}\}_{1\leq i\leq N}$ its dual base. We set
    $$
        a_i^\dagger\coloneqq \varepsilon(\ket{i}),\qquad 
        a_i\coloneqq \iota(\bra{i}),
    $$
    so that 
    $$
        a_i^\dagger\ket{0}=\ket{i}
        ,\qquad
        a_i\ket{j}=\delta_{ij}\ket{0}
        .
    $$
    These operators satisfy the  
    \emph{canonical anti-commutation relations} (CAR),
    characteristic of fermion systems:
    $$
        a_i^\dagger a_j^\dagger+a_j^\dagger a_i^\dagger=0
        ,\qquad
        a_i a_j+a_j a_i=0
        ,\qquad
        a_i^\dagger a_j+a_j a_i^\dagger=\delta_{ij}
        .
    $$

\begin{definition}
    The \textit{Nambu space associated with $\mathscr V_N$} is the Hilbert space $\mathscr W_N\coloneqq \mathscr V_N\oplus \mathscr V_N^*$ equipped with the symmetric bilinear form
$$
    b: \mathscr W_N\times \mathscr W_N\to \mathds C
    ,\qquad b(v_1+f_1,v_2+f_2)\coloneqq f_1v_2+f_2v_1.
$$
\end{definition}

The Nambu space embeds into $\mathfrak B(\bigwedge\mathscr V_N)$ via the creation and annihilation operators. Elements of $\mathscr W_N$ are called fermion field operators. The bilinear structure encodes the CAR, in the sense that for all $\Psi_1,\Psi_2\in \mathscr W$ we have
$$
    \Psi_1\Psi_2+\Psi_2\Psi_1=b(\Psi_1,\Psi_2)\hat 1
$$
as operators on $\bigwedge \mathscr V_N$. The associative algebra generated by $\mathscr W_N$ in $\mathfrak B(\bigwedge \mathscr V_N)$ is isomorphic to the Clifford algebra
$$\textstyle
    Cl(\mathscr W_N,b)
            \coloneqq\bigoplus_{n=0}^{2N}
            \mathscr W_N^{\otimes n}/_{\Psi_1\otimes\Psi_2+\Psi_2\otimes\Psi_1-b(\Psi_1,\Psi_2)}.
$$
We have a natural filtered algebra structure
$Cl(\mathscr W_N,b)=\bigoplus_{n=0}^{2N}Cl^n(\mathscr W_N,b)$, with 
$$Cl^n(\mathscr W_N,b)\cong \mathscr W_N^{\wedge n}.$$

\begin{definition}
    Let $(\mathscr W_N,b)$ be a Nambu space. The \textit{space of free Hamiltonians} is
$$\textstyle
\mathscr H_N\coloneqq \{\hat H=\sum_{ij}H_{ij}a_i^\dagger a_j\mid H\in M_N\mathds C,\ H^\dagger=H\}\subset Cl^2(\mathscr W_N,b).
$$

The adjoint action of $\mathscr H_N$ on $Cl(\mathscr W_N,b)$ is 
$$
ad_{\hat H}\Psi\coloneqq \{\hat H,\Psi\}.
$$

\end{definition}

The Hilbert space $\mathscr V_N\subset Cl(\mathscr W_N,b)$ is closed under this action, with 
$$\textstyle
    ad_{\hat H} v=\sum_{ij}H_{ij}\braket{j}{v}\ket{i}
    ,\qquad [ad_{\hat H}]_{\mathcal B}=H.
$$
Thus  the adjoint action of free Hamiltonians is faithfully represented in $\mathscr V_N$.

In the next section we will describe the symmetries of fermion systems at the Nambu space level, so we will have to consider the Hamiltonian adjoint actions on $\mathscr W_N$, represented in the basis $\widetilde{\mathcal B}=\mathcal B\sqcup \mathcal B^*$ by
$$
    [ad_{\hat H}]_{\widetilde{\mathcal B}}=
    \mqty[H&0\\0&-H^*].
$$

The exponential map associates
to each free Hamiltonian $\hat H\in \mathscr H_N$ a unitary operator $\exp(-i\hat H)$ in $Cl(\mathscr W_N,b)$, with associated adjoint action 
$$
    Ad_{\exp(-i\hat H)}\Psi\coloneqq\exp(-i\hat H)\Psi \exp(i\hat H).
$$

\begin{definition}
    The \textit{space of free time evolution operators}
    on the Nambu space $(\mathscr W_N,b)$ is
    $$
        \mathscr M_N
        \coloneqq \{Ad_{\exp(-i\hat H)}\in Aut(Cl(\mathscr W_N,b))\mid \hat H\in \mathscr H_N\}.
    $$
\end{definition}

The Hilbert space $\mathscr V_N$ is closed under the action of free time evolution operators, with
$$
    [Ad_{\exp(-i\hat H)}]_{\mathcal B}=\exp(-iH).
$$

At the Nambu space level we have 
$$
    [Ad_{\exp(-i\hat H)}]_{\widetilde{\mathcal B}}=\mqty[\exp(-iH)&0\\0&\exp(iH^*)].
$$ 

Since $i\mathscr H_N\cong \mathfrak u(N)$ we have $\mathscr M_N\cong U(N)$.

\subsection{Nambu space symmetries}

Let $\hat U:\mathscr W_N\to\mathscr W_N$ be a  bijection. We say $\hat U$ is a \textit{linear} Nambu space symmetry if it is a unitary operator such that $b(\hat U\Psi_1,\hat U\Psi_2)=b(\Psi_1,\Psi_2)$, and that it is an \textit{anti-linear} Nambu space symmetry if it is an anti-unitary operator such that $b(\hat U\Psi_1,\hat U\Psi_2)=\overline{b(\Psi_1,\Psi_2)}$. A Nambu space symmetry $\hat U$, unitary or anti-unitary, is  \textit{usual} if it preserves the Nambu space decomposition, ie if $\hat U(\mathscr V_N)=\mathscr V_N$ and $\hat U(\mathscr V_N^*)=\mathscr V_N^*$. A symmetry is transposing if $\hat U(\mathscr V_N)=\mathscr V_N^*$ and $\hat U(\mathscr V_N^*)=\mathscr V_N$. 

% Compatibility with the bilinear form $b$ implies that for any $\hat U\in Aut(\mathscr W,b)$, $f\in \mathscr V^*$ and $v\in \mathscr V$ we have $\hat Uf(v)=b(\hat Uf,v)=b(f,\hat U^\dagger v)=f(\hat U^\dagger v)$, which means $\hat U\mid_{\mathscr V^*}=(\hat U^\dagger)^t$.

% Every Nambu space comes naturally equipped with a transposing anti-unitary involution operator $\widetilde C$ defined as 
% $$
%     \widetilde C(v+f)=f^\dagger+v^\dagger.
% $$
% Under the basis $\widetilde{\mathcal B}$ we have the expression $\widetilde C(\sum_i v_i\ket{i}+f_i\bra{i})=\sum_i f_i^* \ket{i}+v_i^*\bra{i}$. This symmetry relates the unitary and bilinear structure of $\mathscr W$ by $\braket{\Psi_1}{\Psi_2}=b(\widetilde C\Psi_1,\Psi_2)$. The compatibility of the symmetries with the bilinear form $b$ implies they to commute with $\widetilde C$. In the unitary cases this follows from 
% $$
%     \braket{\widetilde C\hat U\Psi_1}{\Psi_2}=b(\hat Uv,w)=b(v,\hat U^\dagger w)=\braket{\widetilde C\Psi_1}{\hat U^\dagger\Psi_2}=\braket{\hat U\widetilde C\Psi_1}{\Psi_2},
% $$
% and in the anti-unitary cases this follows by an analogous argument. Similarly, comutation with $\widetilde C$ implies compatibility with $b$.

The set of all Nambu space symmetries forms a group $\mathcal U$, which decomposes as a disjoint union 
$$
    \mathcal U=\mathcal U_{UL}\sqcup \mathcal U_{TA}\sqcup \mathcal U_{TL}\sqcup \mathcal U_{UA}.
$$

The set $\mathcal U_{UL}$ of \textit{regular} symmetries is  composed of the usual linear symmetries, which forms a normal subgroup of $\mathcal U$. For all $\hat U\in \mathcal U_{UL}$ compatibility with the bilinear form $b$ means that 
$$
    \hat Uf(v)=b(\hat Uf,v)=b(f,\hat U^\dagger v)=f(\hat U^\dagger v),
$$
so 
$$
    \hat U\mid_{\mathscr V_N^*}=(\hat U\mid_{\mathscr V_N}^\dagger)^t.
$$
In the basis $\widetilde{\mathcal B}$ a regular symmetry is then represented by a matrix
$$
    [\hat U]_{\widetilde{\mathcal B}}=\mqty[U&0\\0&U^*],
$$
for some $U\in U(N)$.\\

The set $\mathcal U_{UA}$ of \textit{time-reversal} symmetries is composed of the usual anti-linear symmetries. Since anti-unitary operators are unitary operators composed with the complex conjugation, the time-reversal symmetries are represented by 
$$
    [\hat T]_{\widetilde B}=\mqty[T&0\\0&T^*]K,
$$
for some $T\in U(N)$.\\

The set $\mathcal U_{TL}$ of \textit{charge-conjugation} symmetries is composed of the transposing linear symmetries. For all $\hat C\in \mathcal U_{UA}$ 
compatibility with the bilinear form $b$ now means that 
$$
    g(\hat Cf)=b(\hat Cf,g)=b(f,\hat C^\dagger g)=f(\hat C^\dagger g),
$$ 
so again we have $\hat C\mid_{\mathscr V_N^*}=(\hat C\mid_{\mathscr V_N}^\dagger)^t$.  Thus a charge-conjugation symmetry is represented by 
$$
    [\hat C]_{\widetilde B}=\mqty[0&C\\C^*&0],
$$ 
for some $C\in U(N)$.\\

The set $\mathcal U_{TA}$ of \textit{sublattice}, or \textit{chiral}, symmetries is composed of the transposing anti-linear symmetries. A sublattice symmetry $\hat S\in \mathcal U_{TA}$ is represented by 
$$
    [\hat S]_{\widetilde B}
    =\mqty[0&S\\S^*&0]K,
$$
for some $S\in U(N)$.

Any non-regular symmetry squares to a regular one, ie if $\hat U\in \mathcal U_{TA}\sqcup \mathcal U_{TL}\sqcup \mathcal U_{UA}$ then $\hat U^2\in \mathcal U_{UL}$. Also, the product of two distinct type of non-regular symmetries is of the third type. This means the quotient $\mathcal U/_{\mathcal U_{UL}}$ is isomorphic to the Klein 4-group $\mathds Z_2^2$.

\begin{definition}
    A \textit{Nambu space with symmetries} is a representation $\rho:G\to \mathcal U$ of a compact Lie group $G$.
\end{definition}

For any Nambu space with symmetries we have an induced decomposition 
$$
    G=G_{UL}\sqcup G_{TA}\sqcup G_{TL}\sqcup G_{UA}.
$$

The $G$-action on $\mathscr W_N$ extends to an action by automorphism of $Cl(\mathscr W_N,b)$ by 
$$
    \hat U(\wedge_i\Psi_i)=\wedge_i\hat U\Psi_i.
$$

\subsection{Reduction to grotesque fermion systems}

In a Nambu space with symmetries, the subgroup $G_{UL}$ acts unitarily on the Hilbert space $\mathscr V_N$. 
Let $\widehat G_{UL}$ be the set of isomorphism classes of irreducible unitary $G_{UL}$-representations. For each class $\lambda\in \widehat G_{UL}$ fix a representative $\rho_\lambda:G_{UL}\to U(\mathscr R_\lambda)$, and set $d_\lambda\coloneqq \dim \mathscr R_\lambda$.  Setting  $\mathscr E_\lambda\coloneqq \text{Hom}_{G_{UL}}(\mathscr R_\lambda,\mathscr V_N)$ and $\mathscr E_\lambda^*\coloneqq \text{Hom}_{G_{UL}}(\mathscr R_\lambda^*,\mathscr V^*_N)$,
on each $(\mathscr E_\lambda\otimes \mathscr R_\lambda)\oplus(\mathscr E_\lambda^*\otimes \mathscr R_\lambda^*)$ we have 
the natural $G_{UL}$-action
$$
g(h\otimes r+f\otimes t)\coloneqq h\otimes \rho_\lambda(g) r+f\otimes \rho_\lambda(g)t.
$$

The map
$$\textstyle
\Phi:\bigoplus_{\lambda\in\widehat{G}_{UL}}(\mathscr E_\lambda\otimes \mathscr R_\lambda)\oplus(\mathscr E_\lambda^*\otimes \mathscr R_\lambda^*)\to\mathscr W_N,
\quad 
\Phi(\bigoplus_\lambda h_\lambda\otimes r_\lambda+f_\lambda\otimes t_\lambda)
\coloneqq
\sum_\lambda h_\lambda r_\lambda+f_\lambda t_\lambda
$$
is a $G_{UL}$-equivariant isomorphism. We can define an inner product on $\mathscr E_\lambda$ by
$$
    \braket{h_1}{h_2}_{\mathscr E_\lambda}=\frac{\braket{h_1r_1}{h_2r_2}_{\mathscr V_N}}{\braket{r_1}{r_2}_{\mathscr R_\lambda}}
$$
for arbitrary $r_1,r_2\in \mathscr R_\lambda$. The map $\Phi$ is an isometry, which induces the symmetric bilinear form
$$
    b(h_1\otimes r_1+f_1\otimes t_1,
    h_2\otimes r_2+f_2\otimes t_2)
    \coloneqq 
    \sfrac{1}{d_\lambda}((f_2h_1)(t_2r_1)
    +
    (f_1h_2)(t_1r_2)).
$$

We now want to consider the action of the full symmetry group $G$. If $g\in G_{UL}\sqcup G_{UA}$ then for each $\lambda\in \widehat G_{UL}$ there is some $\lambda^g\in \widehat G_{UL}$ such that  $\rho_\lambda(g)(\mathscr E_\lambda\otimes \mathscr R_\lambda)=\mathscr E_{\lambda^g}\otimes \mathscr R_{\lambda^g}$. If $g\in G_{TL}\sqcup G_{TA}$ then for each $\lambda\in \widehat G_{UL}$ there is some $\lambda^g\in \widehat G_{UL}$ such that  $\rho_\lambda(g)(\mathscr E_\lambda\otimes \mathscr R_\lambda)=\mathscr E_{\lambda^g}^*\otimes \mathscr R_{\lambda^g}^*$. Defining the equivalence relation $\sim$ on $\widehat G_{UL}$ by $\lambda_1\sim \lambda_2$ iff $\exists g\in G:\lambda_2=\lambda_{1}^g$, we can define for each $\Lambda\in \widehat G_{UL}/_\sim$ the blocks
$$\textstyle
    \mathscr A_{\Lambda}
    \coloneqq \bigoplus_{\lambda\in \Lambda}\mathscr E_\lambda\otimes \mathscr R_\lambda,
    \qquad 
    \mathscr B_{\Lambda}\coloneqq \mathscr A_{\Lambda}\oplus \mathscr A_{\Lambda}^*.
$$
Then $\bigoplus_{\Lambda\in \widehat G_{UL}/\sim}\mathscr B_\Lambda$ is a sum of $G$-invariant blocks, which is isomorphic to $\mathscr W_N$ by $\Phi$.

For $\hat H\in \mathscr H_N$ the adjoint action on $\mathscr V_N$ induces an action on each $\mathscr E_{\Lambda}$ by 
$(ad_{\hat H}h)r
\coloneqq 
ad_{\hat H}(hr)=\{\hat H, hr\}$. The action on each $\mathscr E_\lambda\otimes \mathscr R_\lambda$ is then $ad_{\hat H}(h\otimes r)
\coloneqq 
(ad_{\hat H}h)\otimes r$. Thus, on each $\mathscr A_\Lambda$, $G_{UL}$ acts trivially on the $\mathscr E_\lambda$ terms, and the Hamiltonians act trivially on the $\mathscr R_\lambda$ terms. 
Since $\text{Hom}_{G_{UL}}(\mathscr E_{\lambda_1}\otimes \mathscr R_{\lambda_1},\mathscr E_{\lambda_2}\otimes \mathscr R_{\lambda_2})\cong \text{Hom}(\mathscr E_{\lambda_1},\mathscr E_{\lambda_2})\otimes \text{Hom}_{G_{UL}}(\mathscr R_{\lambda_1},\mathscr R_{\lambda_2})$, 
we can reduce the classification problem to the cases where $G_{UL}$ acts trivially on the underlying one-particle Hilbert space.

We can then consider from now on fermion systems with trivial $G_{UL}$-actions. Free Hamiltonians always commute with the particle number operator $\hat N\coloneqq \sum_ja_j^\dagger a_j$. We can then consider \textit{grotesque} fermion system, which are Nambu spaces with symmetries such that
$$G_{UL}=\left\{\exp(-i2\pi t\hat N)\mid t\in \mathds S^1\right\}\cong U(1).$$

If $g\in G_{UA}\sqcup G_{TA}$ then $g^2\in G_{UL}$, so $g^2=z\hat1$ for some $z\in U(1)$. By anti-linearity $zg=g^2g=gg^2=gz=z^*g$, which implies $z=\pm 1$. This means that if $\hat T\in G_{UA}$ then $TT^*=\pm 1_N$, and if $\hat S\in G_{TA}$ then $S^2=\pm 1_N$.

If $\hat C\in G_{TL}$, then we again have $\hat C^2=z\hat1$, and since $[\hat C]_{\widetilde{\mathcal B}}=\begin{bsmallmatrix}
    0&C\\C^*&0
\end{bsmallmatrix}$ we get $CC^*=z1_N$. 
Since $\hat C^2\hat C=\hat C\hat C^2$ implies $zC=z^*C$, we again conclude $z=\pm 1$.

Let $\hat S\in G_{TA}$. If $\hat S^2=-\hat 1$ then $(\text{exp}(\sfrac{i\pi}{2}\hat N)\hat S)^{2}=\hat 1$.  Thus, in the presence of an anti-idempotent chiral symmetry, we can always assume an idempotent representative.

Thus, in a grotesque fermion system, the symmetry group is assumed to be generated by $G_{UL}\cong U(1)$ and at most two non-ordinary symmetries that square to $\pm\hat 1$. If one of the generators is a chiral symmetry, we may assume it to square to the identity. If all 3 types of non-ordinary symmetries are present, we choose the generators such that $\hat S=\hat T\hat C$. In the presence of time-reversal symmetry, we set $\epsilon_T\in \{1,-1\}$ such that $\hat T^2=\epsilon_T\hat 1$. Similarly, in the presence of charge-conjugation we set $\epsilon_C\in \{1,-1\}$ such that $\hat C^2=\epsilon_C\hat 1$.  We also set $\epsilon_S=1$ whenever the system has chiral symmetry.

We then get 10 classes of grotesque Nambu spaces with symmetries. We can index them by a signature $(\epsilon_S\epsilon_C\epsilon_T)\in \{0,1\}\times \{-1,0,1\}^2$, where $0$ indicates the absence of the associated symmetry type.

\subsection{Convenient choice of basis}

We now show how to choose convenient basis for the 10 symmetry classes so the unitary matrices representing the symmetries have a convenient form, which are listed in Table \ref{tab:Classif1}. \\

\begin{table}[ht]
    \caption{The first column lists the Cartan-Atland-Zirnbauer (CAZ) symmetry class label, followed by their corresponding symmetry signature. Columns three through five specify the matrices representing the chiral, charge-conjugation and time-inversion symmetries. The final column describes the structure of the adjoint representation of the  equivariant free Hamiltonians.}
    \label{tab:Classif1}
    \centering
    \begin{tabular}{|c|c|c|c|c|c|}\hline
     \text{CAZ}&$(\epsilon_S\epsilon_C\epsilon_T)$& $S$& $C$& $T$&$[ad_{\hat H}]_{\mathcal B}$ \\\hline\hline
     AIII&$(100)$&$1_{m,N-m}$&-&-&$H=\mqty[0&b\\b^\dagger&0]$\\\hline 
     
     A&$(000)$&-&-&-&-\\ \hline\hline

     BDI&$(111)$&$1_{m,N-m}$&$1_{m,N-m}$&$1_{N}$&$H=\mqty[0&b\\b^t&0],b=b^*$\\\hline

     AI&$(001)$&-&-&$1_N$&$H=H^t$\\\hline

     CI&$(1\text{-}11)$&$1_{\sfrac{N}{2},\sfrac{N}{2}}$&$-J_N$&$F_N$&$H=\mqty[0&b\\b^*&0],b^t=b$\\\hline

     C&$(0\text{-}10)$&-&$J_N$&-&$H=\mqty[a&b\\b^*&-a^*],
         \mqty{a^\dagger=a
         ,\\
         b^t=b}$\\\hline

     CII&$(1\text{-}1\text{-}1)$&$1_{m,N-m}$&$\mqty[-J_{m}&0\\0&J_{N-m}]$&$\mqty[J_{m}&0\\0&J_{N-m}]$&$H=     \mqty[0&B\\B^\dagger&0],B=\mqty[b_0&b_1\\-b_1^*&b_0^*]$\\\hline
     
     AII&$(00\text{-}1)$&-&-&$J_N$&$H=\mqty[a&b\\-b^*&a^*],
         \mqty{a^\dagger=a,\\b^t=-b^t}$\\\hline

     DIII&$(11\text{-}1)$&$1_{\sfrac{N}{2},\sfrac{N}{2}}$&$F_N$&$J_N$&$H=\mqty[0&b\\-b^*&0],b^t=-b$\\\hline

     D&$(010)$&-&$1_N$&-&$H=-H^t$\\\hline
    \end{tabular}
\end{table}

\textbf{Class A $(000)$:} This class of systems have no non-ordinary symmetries, so any orthonormal basis may be chosen.\\

Let's consider now the classes whose non-ordinary symmetries are generated by a single operator.\\

\textbf{Class AIII $(100)$:} Let $\hat S\in G_{TA}$ be a chiral symmetry  such that $\hat S^2=\hat 1$, so its associated matrix satisfies $S^2=1_N$. Since $S$ is unitary that means $S^\dagger=S$, and so the eigenvalues of $S$ are all $\pm1$. 
This means that we have a decomposition $\mathscr V_N=\mathscr V_N^+\oplus \mathscr V_N^-$ into the positive and negative eigenspaces of $\hat S$. Let $m=\dim \mathscr V_N^+$. If $\mathcal B^{\pm}$ are basis for $\mathscr V_N^\pm$ and $\mathcal B=\mathcal B^+\sqcup \mathcal B^-$, then 
$$
    S=1_{m,N-m}.
$$

Consider now systems with only a charge-conjugation symmetry $\hat C$ satisfying $\hat C^2=\epsilon_C\hat 1$, or equivalently $CC^*=\epsilon_C1_N$. This means $C^\dagger=\epsilon_CC^*$, or equivalently $C=\epsilon_C C^t$. \\

\textbf{Class D $(010)$:} Let $\hat C\in G_{TL}$ be a charge-conjugation symmetry such that $\hat C^2=\hat 1$, so that it is represent by a symmetric matrix $C$. By Takagi decomposition, there is some unitary matrix $A$ such that $C=AA^t$.  If $R$ is some unitary matrix representing a change of basis in $\mathscr V_N$, it induces a change of basis in $\mathscr W_N$ represented by $\begin{bsmallmatrix}
        R&0\\0&R^*
    \end{bsmallmatrix}$. This change of basis transforms the representation of $\hat C$ into $\begin{bsmallmatrix}
        0&RCR^t\\R^*C^*R^\dagger&0
    \end{bsmallmatrix}$. We may then conclude, using $R=A$, that in some basis we have 
    $$
        C=1_N.
    $$

\textbf{Class C $(0\text{-}10)$:} Suppose now that $\hat C\in G_{TL}$ satisfies $\hat C^2=-\hat 1$, and so that $C$ is skew-symmetric. Since $CC^*=-1_N$ we have that $\abs{\det C}^2=(-1)^N$, which can only be the case if $N$ is even. Takagi decomposition tells us there is some matrix $A$ such that $C=AJ_NA^t$. Thus there is some basis in which 
    $$C=J_N.$$

\textbf{Class AI $(001)$:} For a time-reversal symmetry $\hat T\in G_{UA}$ such that $\hat T^2=\hat 1$ we have $TT^*=1_N$, so by the same argument as for systems in the class $(010)$ there is some basis such that 
$$T=1_N.$$

\textbf{Class AII $(00\text{-}1)$:} For $\hat T\in G_{UA}$ such that $\hat T^2=-\hat 1$ we have $TT^*=-1_N$, which forces $N$ to be even. As in class $(0\text{-}10)$ there is some basis such that 
$$T=J_N.$$

Let's now consider systems in which all 3 types of non-ordinary symmetries are present. The assumption $\hat S=\hat T\hat C$ is equivalent to $S=TC^*$. We may start by choosing a basis where $S=1_{m,N-m}$. Since $T=\epsilon_T T^t$ we can write $T=\begin{bsmallmatrix}
        a&b\\\epsilon_Tb^t&d
    \end{bsmallmatrix}$, with $a\in M_m\mathds C$ satisfying $a=\epsilon_Ta^t$, and $d\in M_{N-m}\mathds C$ satisfying $d=\epsilon_Td^t$. We then have $C=T^tS^ *=\begin{bsmallmatrix}
        \epsilon_Ta&-\epsilon_Tb\\b^t&-\epsilon_Td
    \end{bsmallmatrix}$.\\

Suppose now $\epsilon_T=\epsilon_C$. The conditions $\epsilon_T 1_N=TT^*=CC^*$ give us further restraints on $a$, $b$ and $d$. The second equation simplifies to $\begin{bsmallmatrix}
        bb^\dagger&ab^*\\db^\dagger&b^tb^*
    \end{bsmallmatrix}=0$, which implies the first equation is equivalent to $\begin{bsmallmatrix}
        aa^*&bd^*\\b^ta^*&dd^*
    \end{bsmallmatrix}=\begin{bsmallmatrix}
        \epsilon_T1_m&0\\0&\epsilon_T1_{N-m}
    \end{bsmallmatrix}$.\\

\textbf{Class BDI $(111)$:} Suppose $\epsilon_T=\epsilon_C=1$. From the above equations we can conclude both $a$ and $d$ are symmetric unitary matrices, and that $b=0$. Takagi decomposition then tells us that there is a unitary matrix of the form $R=\begin{bsmallmatrix}
        r&0\\0&s
    \end{bsmallmatrix}$ such that $RR^t=T$. Since $R1_{m,N-m}R^\dagger=1_{m,N-m}$, in the basis determined by $R$ we have
    $$
        S=1_{m,N-m},\qquad C=1_{m,N-m},\qquad T=1_N.
    $$

\textbf{Class CII $(1\text{-}1\text{-}1)$:}   If $\epsilon_T=\epsilon_C=-1$ then $a$ and $d$ are skew-symmetric, and $b=0$. The fact that $aa^*=-1_m$ and $dd^*=-1_{N-m}$ implies both $m$ and $N-m$ are even. Again, Takagi decomposition gives us a basis change such that $a=J_m$ and $d=J_{N-m}$. Thus, we have a basis in which 
$$S=1_{m,N-m},\qquad 
C=\mqty[-J_m&0\\0&J_{N-m}]
,\qquad
T=\mqty[J_m&0\\0&J_{N-m}].
$$

Suppose now $\epsilon_T=-\epsilon_C$. Note that either $\epsilon_T$ or $\epsilon_C$ must be $-1$, which implies $N$ is even. In this case we have $TT^*=-CC^*$, which is equivalent to $\begin{bsmallmatrix}
        aa^*&bd^*\\b^ta^*&dd^*
    \end{bsmallmatrix}=0$. 
    From $TT^*=\epsilon_T1_N$ we now get $\begin{bsmallmatrix}
        bb^\dagger&ab^*\\db^\dagger&b^tb^*
    \end{bsmallmatrix}=\begin{bsmallmatrix}
        \epsilon_T1_m&0\\0&\epsilon_T1_{N-m}
    \end{bsmallmatrix}$. This last equation implies $m=\sfrac{N}{2}$, $a=0$ and $d=0$.\\

\textbf{Class CI $(1\text{-}11)$:} Suppose now $\epsilon_T=1$ and $\epsilon_C=-1$. The matrix $R=\begin{bsmallmatrix}
    b^\dagger&0\\0&1_{\sfrac{N}{2}}
\end{bsmallmatrix}$ gives us a change of basis, in which we now have
$$
    S=1_{\sfrac{N}{2},\sfrac{N}{2}}
    ,\qquad C=-J_N
    ,\qquad 
    T=F_N.
$$

\textbf{Class DIII $(11\text{-}1)$:} If $\epsilon_T=-1$ and $\epsilon_C=1$  then the same matrix $R=\begin{bsmallmatrix}
    b^\dagger&0\\0&1_{\sfrac{N}{2}}
\end{bsmallmatrix}$ provides a  change of basis where 
$$
    S=1_{\sfrac{N}{2},\sfrac{N}{2}}
    ,
    \qquad 
   C=F_N
   ,\qquad 
    T=J_N. 
$$

\subsection{Spaces of equivariant free Hamiltonian}

We are interested in classifying the spaces of equivariant free Hamiltonians. For $\hat U\in G$ and $\hat H\in \mathscr H_N$ we have 
$
    \hat U ad_{\hat H}\hat U^\dagger \Psi=\hat U\hat H\hat U^\dagger \Psi-\Psi\hat H,
$
thus 
$$\hat U\hat H\hat U^\dagger = \hat H\iff\hat U ad_{\hat H}\hat U^\dagger=ad_{\hat H}.$$ 

\begin{definition}
    Let $(\mathscr W_N,b,G)$ be a Nambu space with symmetries of signature $(\epsilon_S\epsilon_C\epsilon_T)$. Its \textit{space of equivariant free Hamiltonians} is
    $$
        \mathscr H_N^{\epsilon_S\epsilon_C\epsilon_T}=\{\hat H\in\mathscr H_N\mid \forall \hat U\in G:\hat U\hat H\hat U^\dagger=\hat H\}.
    $$
\end{definition}

For a usual symmetry, conjugation of the adjoint representation corresponds, in a given orthonormal basis, to conjugation of the representative matrices. Explicitly for an ordinary symmetry $\hat U\in G_{UL}$ and free Hamiltonian  $\hat H$, since 
$$
[\hat Uad_{\hat H}\hat U^\dagger]_{\widetilde{\mathcal B}}=\mqty[UHU^\dagger&0\\0&-U^*H^* U^t]$$
we have that 
$$\hat U\hat H\hat U^\dagger=\hat H
\iff UHU^\dagger =H.$$

For $\hat S\in G_{TA}$, $\hat C\in G_{TL}$ and $\hat T\in G_{UA}$ we have
\begin{gather*}
    [\hat Sad_{\hat H}\hat S^\dagger]_{\widetilde{\mathcal B}}=\mqty[
    0&S\\S^*&0]K\mqty[
    H&0\\0&-H^*]K\mqty[
    0&S^t\\S^\dagger&0]=\mqty[
    -SHS^\dagger&0\\0&S^*H^* S^t],\\
    [\hat Cad_{\hat H}\hat C^\dagger]_{\widetilde{\mathcal B}}=\mqty[0&C\\C^*&0]\mqty[
    H&0\\0&-H^*] \mqty[
    0&C^t\\C^\dagger&0]=\mqty[
    -CH^*C^\dagger&0\\0&C^*H C^t],\\
    [\hat Tad_{\hat H}\hat T^\dagger]_{\widetilde{\mathcal B}}=\mqty[T&0\\0&T^*]K\mqty[
    H&0\\0&-H^*]K\mqty[
    T^\dagger&0\\0&T^t]=\mqty[
    TH^*T^\dagger&0\\0&-T^*H T^t].
\end{gather*}
Thus
\begin{align*}
\hat S\hat H\hat S^\dagger=\hat H 
&\iff SHS^\dagger =-H,\\
\hat C\hat H\hat C^\dagger=\hat H 
&\iff CHC^\dagger =-H^*
,\\
    \hat T\hat H\hat T^\dagger=\hat H &\iff THT^\dagger =H^*.
\end{align*}

We can now characterize the matrices that represent the equivariant free Hamiltonians for each symmetry class in the convenient basis of the last section. We summarize the results from this section in the last column of Table \ref{tab:Classif1}.\\

\textbf{Class AIII $(100)$:} Let $\hat S\in G_{TA}$ be the chiral symmetry represented by the matrix $S=1_{m,N-m}$. For all $a\in M_m\mathds C$, $b\in M_{m,N-m}\mathds C$, $c\in M_{N-m,m}\mathds C$ and $d\in M_{N-m}\mathds C$ we have $1_{m,N-m}\begin{bsmallmatrix}
    a&b\\c&d
\end{bsmallmatrix}1_{m,N-m}=\begin{bsmallmatrix}
    a&-b\\-c&d
\end{bsmallmatrix}$. By self-adjointness, for a Hamiltonian $\hat H$ its representing matrix is of the form $H=\begin{bsmallmatrix}
    a&b\\b^\dagger&d
\end{bsmallmatrix}$. Since $\hat S\hat H\hat S^\dagger=\hat H$ iff $SHS=-H$, we conclude the equivariant Hamiltonians in this class are represented by matrices of the form $H=\begin{bsmallmatrix}
    0&b\\b^\dagger&0
\end{bsmallmatrix}$.\\

\textbf{Class D $(010)$:} If $\hat C\in G_{TL}$ is represented by  $C=1_N$ then a Hamiltonian $\hat H$ commutes with $\hat C$ iff $H=-H^*$.\\

\textbf{Class C $(0\text{-}10)$:}  If $\hat C\in G_{TL}$ is represented by  $C=J_N$ then a Hamiltonian $\hat H$ commutes with $\hat C$ iff $J_NHJ_N^\dagger=-H^*$. If $H=\begin{bsmallmatrix}
    a&b\\b^\dagger&d
\end{bsmallmatrix}$ for $a,b,d\in M_{\sfrac{N}{2}}\mathds C$, with $a$ and $d$ self-adjoint, then $J_NHJ_N^\dagger=\begin{bsmallmatrix}
    d&-b^\dagger\\-b&a
\end{bsmallmatrix}$. So a Hamiltonian $\hat H$ commutes with $\hat C$ iff $H=\begin{bsmallmatrix}
    a&b\\b^*&-a^*
\end{bsmallmatrix}$, $a=a^\dagger$ and $b=b^t$.\\

\textbf{Class AI $(001)$:} If $\hat T\in G_{UA}$ is represented by  $T=1_N$ then a Hamiltonian $\hat H$ commutes with $\hat T$ iff $H=H^*$.\\

\textbf{Class AII $(00\text{-}1)$:}  If $\hat T\in G_{UA}$ is represented by  $T=J_N$ then $\hat H$ commutes with $\hat T$ iff $J_NHJ_N^\dagger=H^*$. Thus $\hat H$ commutes with $\hat T$ iff $H=\begin{bsmallmatrix}
    a&b\\-b^*&a^*
\end{bsmallmatrix}$, $a=a^\dagger$ and $b=-b^t$.\\

\textbf{Class BDI $(111)$:} From the restrictions in classes $AIII$ and $AI$, $\hat H$ is equivariant in this class  iff $H=\begin{bsmallmatrix}
    0&b\\b^t&0
\end{bsmallmatrix}$ and $b=b^*$.\\

\textbf{Class DIII $(11\text{-}1)$:} From the restrictions in classes $AIII$ and $AII$  a Hamiltonian in this class is equivariant iff  $H=\begin{bsmallmatrix}
    0&b\\-b^*&0
\end{bsmallmatrix}$, with $b=-b^t$.\\

\textbf{Class CI $(1\text{-}11)$:} From the restriction in $AIII$ and equivariance with regards to time-reflection symmetry  we again have that $H=\begin{bsmallmatrix}
    0&b\\b^\dagger&0
\end{bsmallmatrix}$, and now  $F_NHF_N=H^*$ is equivalent to $b=b^t$.\\

\textbf{Class CII $(1\text{-}1\text{-}1)$:} From the restriction in $AIII$ we have $H=\begin{bsmallmatrix}
    0&B\\B^\dagger &0
\end{bsmallmatrix}$, where  $B=\begin{bsmallmatrix}
    a&b\\c&d
\end{bsmallmatrix}$ with $a,b,c,d\in M_{\sfrac{m}{2},\sfrac{N-m}{2}}\mathds C$. Equivariance with regards to time-reflection symmetry implies $J_mBJ_{N-m}^\dagger=B^*$, thus $d=a^*$ and $c=-b^*$.

\subsection{Classification by compact symmetric spaces}

The central observation of the classification of grotesque free fermion system is that $\mathscr H^{\epsilon_S\epsilon_C\epsilon_T}_N$ is the tangent space of an irreducible compact symmetric space, and that all irreducible compact symmetric space classes are realized this way.

\begin{definition}
    Let $(\mathscr W_N,b,G)$ be a Nambu space with symmetries of signature $(\epsilon_S\epsilon_C\epsilon_T)$. The \textit{space of equivariant free time-evolution operators} is
    $$
        \mathscr M^{\epsilon_S\epsilon_C\epsilon_T}_N \coloneqq \{Ad_{\exp(-i\hat H)}\in Aut(Cl(\mathscr W_N,b))\mid \hat H\in \mathscr H^{\epsilon_S\epsilon_C\epsilon_T}_N\}.
    $$
\end{definition}

Let $\mathcal M$ be a connected $d$-dimensional Riemannian manifold and $p\in \mathcal M$. A diffeomorphism $f$ on a neighborhood $U$ of $p$ is a \textit{geodesic symmetry} around $p$ if it fixes $p$ and reverses geodesics around $p$, ie if $\gamma$ is  a geodesic with $\gamma(0)=p$ then $f\gamma(\theta)=\gamma(-\theta)$. In particular this implies $df_p=-Id_{T_p\mathcal M}$. We say $\mathcal M$ is \textit{locally symmetric} if its geodesic symmetries are isometries, and that a locally symmetric manifold is \textit{(globally) symmetric} if all its geodesic symmetries can be extended to all of $\mathcal M$.

The space $\mathscr M^{\epsilon_S\epsilon_C\epsilon_T}_N$ is a compact symmetric space. The symmetric structure comes from the following construction: Let $\mathcal {G}$ be a connected compact Lie group equipped with a Cartan involution, ie an automorphism $\tau:\mathcal {G}\to \mathcal {G}$ with the involutive property $\tau^2=id_{\mathcal {G}}$. Define the subgroup of $\tau$-fixed elements as 
$$\text{Fix}_\tau\coloneqq \{u\in \mathcal {G}\mid u=\tau(u)\}.$$ 

Then the quotient space $\frac{\mathcal {G}}{\text{Fix}_\tau}$ is a compact symmetric space. To define its Riemannian structure let $\mathfrak g=T_e\mathcal {G}$ be the Lie algebra of $\mathcal {G}$, $\mathfrak g=\mathfrak k\oplus\mathfrak p$ be its decomposition into the positive and negative eigenspaces of the involution $d\tau(e)$, and let $\braket{\cdot }{\cdot}_{\mathfrak p}$ be an inner product which is invariant under the adjoint action of $\text{Fix}_\tau$ on $\mathfrak p$, ie $\braket{v}{w}_{\mathfrak p}=\braket{Ad_kv}{Ad_kw}_{\mathfrak p}$ for all $v,w\in\mathfrak p$ and $k\in \text{Fix}_\tau$. Then the Riemannian metric at $T_{u\text{Fix}_\tau}\frac{\mathcal {G}}{\text{Fix}_\tau}$ is $g_{u\text{Fix}_\tau}(v,w)\coloneqq \braket{dL_u^{-1}(v)}{dL_u^{-1}(w)}_{\mathfrak p}$, where $L_u(v\text{Fix}_\tau)=uv\text{Fix}_\tau$. The submanifold 
$$\text{Inv}_\tau\coloneqq \{u\in \mathcal {G}\mid \tau(u)=u^{-1}\}$$
of $\tau$-inverted elements is isometric to $\frac{\mathcal {G}}{\text{Fix}_\tau}$ by the Cartan embedding 
$$
    c_{\tau}:\frac{\mathcal {G}}{\text{Fix}_\tau}\to \text{Inv}_\tau\subset \mathcal {G},
    \qquad  c_{\tau}(u\text{Fix}_\tau)\coloneqq  u\tau(u^{-1}).
$$
The geodesic inversion with respect to $y\in \text{Inv}_\tau$ is $s_y(x)=yx^{-1}y$.
If $v\in \text{Inv}_\tau$ then $c_\tau(v\text{Fix}_\tau)=v\tau(v^{-1})=v^2$. Thus if $v$ is a square root of $u$ in $\text{Inv}_\tau$, then $c^{-1}_\tau(u)=v\text{Fix}_\tau$.\\

We now derive the symmetric spaces associated with each class of grotesque fermion systems. In all cases we assume the convenient basis of the previous sections, and their structure is listed in Table \ref{tab:Classif2}.\\

\begin{table}[ht]
\caption{The third column describes the structure of equivariant time evolution operators in $\mathscr M^{\epsilon_S\epsilon_C\epsilon_T}_{N}$. The fourth column lists the corresponding compact symmetric space classified by Cartan.}
\label{tab:Classif2}
    \begin{tabular}{|c|c|c|c|}\hline
     \text{CAZ}&$(\epsilon_S\epsilon_C\epsilon_T)$& $[Ad_{\hat U}]_{\mathcal B}$ & Symmetric space\\\hline\hline
     AIII&$(100)$&$
     U=\mqty[a&b\\-b^\dagger&d],\mqty{a^\dagger=a\\d^\dagger=d}$&$\frac{U(N)}{U(m)\times U(N-m)}$\\\hline 
     
     A&$(000)$&-&$U(N)$\\ \hline\hline

     BDI&$(111)$&$U=\mqty[a&b\\b^t&d],\mqty{a^*=a^t=a\\d^*=d^t=d\\b^*=-b}$&$\frac{O(N)}{O(m)\times O(N-m)}$\\\hline

     AI&$(001)$&$U=U^t$&$\frac{U(N)}{O(N)}$\\\hline

     CI&$(1\text{-}11)$&$U=\mqty[
         a&b\\-b^*&a^*],\mqty{a^\dagger=a\\b^t=b}$&$\frac{Sp(\sfrac{N}{2})}{U(\sfrac{N}{2})}$\\\hline

     C&$(0\text{-}10)$&$U=\mqty[
         a&b\\-b^*&a^*]$&$Sp(\sfrac{N}{2})$\\\hline

     CII&$(1\text{-}1\text{-}1)$&$U=\mqty[a_{0}&a_1&b_0&b_1\\-a_1^*&a_0^*&b_1^*&-b_0^*\\-b_0^\dagger&-b_{1}^t&d_{0}&d_1\\-b_{1}^\dagger&b_{0}^t&-d_1^*&d_{0}^*],\mqty{a_0^\dagger=a_0\\
     a_1^t=-a_1\\d_0^\dagger=d_0\\d_1^t=-d_1}$&$\frac{Sp(\sfrac{N}{2})}{Sp(\sfrac{m}{2})\times Sp(\sfrac{N-m}{2})}$\\\hline
     
     AII&$(00\text{-}1)$&$U=\mqty[
         a&b\\c&a^t],\mqty{b^t=-b\\c^t=-c}$&$\frac{U(N)}{Sp(\sfrac{N}{2})}$\\\hline

     DIII&$(11\text{-}1)$&$U=\mqty[
         a&b\\b^*&a^*],\mqty{a^\dagger=a\\b^t=-b}$&$\frac{O(N)}{U(\sfrac{N}{2})}$\\\hline

    D&$(010)$&$U=U^*$&$O(N)$\\\hline
    \end{tabular}
\end{table}

\textbf{Class A $(000)$:} In the absence of non-ordinary symmetries 
$
    \mathscr M^{000}_N=\mathscr M_{N}\cong U(N).
$\\

In the presence of a charge-conjugation symmetry we will have to consider the Lie subgroup of elements fixed by its action. Let $\hat C\in G_{TL}$ be a charge-conjugation symmetry such that $\hat C^2=\pm \hat 1$. Then 
$$
    \Xi:\mathscr M^{000}_N\to \mathscr M^{000}_N
    ,\qquad 
    \Xi(Ad_{\hat U})\coloneqq \hat C Ad_{\hat U}\hat C^\dagger
    ,\qquad 
    \Xi\left(\mqty[
    U&0\\0&U^*]\right)=
    \mqty[CU^*C^\dagger&0\\0&C^*UC^t]
$$ 
is a Cartan involution. A Hamiltonian $\hat H$ commutes with $\hat C$ iff $Ad_{e^{-i\hat H}}$ is $\Xi$-fixed. At the matrix level the $\Xi$-fixed elements satisfy $C UC^\dagger=U^*$. 
\\

\textbf{Class D $(010)$:} $Ad_{\hat U}\in \mathscr M^{010}_N= \text{Fix}_{\Xi}$ iff $U=U^*$. Thus $\mathscr M^{010}_N\cong O(N)$.\\

\textbf{Class C $(0\text{-}10)$:} $Ad_{\hat U}\in \mathscr M^{0-10}_N= \text{Fix}_{\Xi}$ iff $U=\begin{bsmallmatrix}
    a&b\\-b^*&a^*
\end{bsmallmatrix}$. Thus $\mathscr M^{0-10}_N\cong Sp(\sfrac{N}{2})$.\\

Let $\hat S\in G_{TA}$ be a chiral symmetry such that $\hat S^2= \hat 1$. Then
$$
    \Sigma:\mathscr M^{000}_N\to \mathscr M^{000}_N
    ,\qquad 
    \Sigma(Ad_{\hat U})\coloneqq
     \hat S Ad_{\hat U}\hat S^\dagger
    ,\qquad 
    \Sigma\left(\mqty[U&0\\0&U^*]\right)=
    \mqty[SUS^\dagger&0\\0&S^*U^*S^t]
$$ 
is a Cartan involution. Due to $\hat S$ being anti-linear we now have that $\hat H$ commutes with $\hat S$ iff  $ Ad_{\exp(-i\hat H)}\in \text{Inv}_{\Sigma}$. The $\Sigma$-fixed elements are those such that $SUS^\dagger=U$, and the $\Sigma$-inverted elements satisfy $SUS^\dagger=U^\dagger$. The associated Cartan embedding is 
$$
    c_\Sigma(Ad_{\hat U}\text{Fix}_{\Sigma})=[Ad_{\hat U},\hat S].
$$

\textbf{Class AIII $(100)$:} In our choice of basis we have $S=1_{m,N-m}$ for some $m\leq N$. Let $Ad_{\hat U}\in \mathscr M^{000}_N$, $a\in M_m\mathds C$, $b\in M_{m,N-m}\mathds C$, $c\in M_{N-m,m}\mathds C$ and $d\in M_{N-m}\mathds C$ be such that
$U=\begin{bsmallmatrix}
    a&b\\c&d
\end{bsmallmatrix}$. Since $Ad_{\hat U}\in \text{Fix}_{\Sigma}$ iff $b=0$ and $c=0$, so that $\text{Fix}_{\Sigma}\cong U(m)\times U(N-m)$, then $\mathscr M^{100}_N=\text{Inv}_{\Sigma}\cong\frac{\mathscr M^{000}_N}{\text{Fix}_{\Sigma}}\cong \frac{U(N)}{U(m)\times U(N-m)}$. Thus $Ad_{\hat U}\in \mathscr M^{100}_N$ iff $U=\begin{bsmallmatrix}
    a&b\\-b^\dagger&d
\end{bsmallmatrix}$, $a=a^\dagger$ and $d=d^\dagger$.
Note we get different spaces depending on $m\leq N$, so we will denote the spaces of time-evolution operators in class $AIII$ as $\mathscr M^{100}_{N,m}$ to differentiate between them.\\

Let $\hat T\in G_{UA}$ be a chiral symmetry such that $\hat T^2= \pm\hat 1$. Then
$$
    \Theta:\mathscr M^{000}_N\to \mathscr M^{000}_N
    ,\qquad 
    \Theta(Ad_{\hat U})\coloneqq
     \hat T Ad_{\hat U}\hat T^\dagger
    ,\qquad 
    \Theta\left(\mqty[
    U&0\\0&U^*]\right)=
    \mqty[TU^*T^\dagger&0\\0&T^*UT^t]
$$ 
is a Cartan involution. Due to $\hat T$ being anti-linear we now have that $\hat H$ commutes with $\hat T$ iff  $Ad_{\exp(-i\hat H)}\in \text{Inv}_{\Theta}$. The $\Theta$-fixed elements are those  such that $TUT^\dagger=U^*$, and the $\Theta$-inverted elements satisfy $TUT^\dagger=U^t$. 
The associated Cartan embedding is 
$$
    c_\Theta(Ad_{\hat U}\text{Fix}_\Theta)=[Ad_{\hat U},\hat T].
$$

\textbf{Class AI $(001)$:} Since $\text{Fix}_{\Theta}\cong O(N)$ we have $\mathscr M^{001}_N=\text{Inv}_{\Theta}
% \cong \frac{\mathscr M^{000}_N}{\text{Fix}_{\Theta}}
\cong\frac{U(N)}{O(N)}$. Thus $Ad_{\hat U}\in \mathscr M^{001}_N$ iff $U^t=U$.\\

\textbf{Class AII $(00\text{-}1)$:} Since $\text{Fix}_{\Theta}\cong Sp(\sfrac{N}{2})$ we have $\mathscr M^{00-1}_N=\text{Inv}_{\Theta}
% \cong \frac{\mathscr M^{000}_N}{\text{Fix}_{\Theta}}
\cong\frac{U(N)}{Sp(\sfrac{N}{2})}$. Thus $Ad_{\hat U}\in \mathscr M^{00-1}_N$ iff $U=\begin{bsmallmatrix}
    a&b\\c&a^t
\end{bsmallmatrix}$, $b^t=-b$ and $c^t=-c$.\\

When all 3 types of non-ordinary symmetries are present, we consider the subgroup of $\Xi$-fixed elements. In all cases, this group is closed under the involution $\Sigma$, and the space of time-evolution operators is the space of $\Xi$-fixed elements that are $\Sigma$-inverted. The same results would be obtained if we considered the involution $\Theta$ instead of $\Sigma$.\\

\textbf{Class BDI $(111)$:}  The $\Xi$-fixed elements are represented by unitary matrices $U=\begin{bsmallmatrix}
    a&b\\c&d
\end{bsmallmatrix}$ such that $CUC=U^*$ for $C=1_{m,N-m}$, which means $\begin{bsmallmatrix}
    a&-b\\-c&d
\end{bsmallmatrix}=\begin{bsmallmatrix}
    a^*&b^*\\c^*&d^*
\end{bsmallmatrix}$, ie those where $a\in M_m\mathds R$, $d\in M_{N-m}\mathds R$, $b\in M_{m,N-m}i\mathds R$ and $c\in M_{N-m,m}i\mathds R$. We then have an isomorphism
$$
\text{Fix}_{\Xi}\xrightarrow{\cong}O(N),\qquad 
U\mapsto 
\mqty[1_m&0\\0&-i1_{N-m}]
\mqty[a&b\\c&d]
\mqty[1_m&0\\0&i1_{N-m}]
=
\mqty[a&ib\\-ic&d].
$$
The $\Sigma$-fixed elements are those such that $b=0$ and $c=0$, so $\mathscr M^{111}_{N,m}=\text{Fix}_\Xi\cap \text{Inv}_\Sigma
% \cong\frac{\text{Fix}_\Xi}{\text{Fix}_\Sigma}
\cong\frac{O(N)}{O(m)\times O(N-m)}$. Thus $Ad_{\hat U}\in \mathscr M^{111}_{N,m}$ iff $U=\begin{bsmallmatrix}
    a&b\\b^t&d
\end{bsmallmatrix}$, $a=a^t=a^*$, $b^*=-b$ and $d=d^t=d^*$.\\

\textbf{Class DIII $(11\text{-}1)$:} Since in this class $C=F_N$ the $\Xi$-fixed elements are represented by matrices $U=\begin{bsmallmatrix}
    a&b\\b^*&a^*
\end{bsmallmatrix}$. We then have an isomorphism
\begin{gather*}
    \text{Fix}_{\Xi}\xrightarrow{\cong}O(N),\\ 
U\mapsto 
\frac{1}{2}\mqty[1_{\sfrac{N}{2}}&1_{\sfrac{N}{2}}\\-i1_{\sfrac{N}{2}}&i1_{\sfrac{N}{2}}]
\mqty[a&b\\b^*&a^*]\mqty[1_{\sfrac{N}{2}}&i1_{\sfrac{N}{2}}\\1_{\sfrac{N}{2}}&-i1_{\sfrac{N}{2}}]=\mqty[
    \Re(a+b)&-\Im(a-b)\\\Im(a+b)&\Re(a-b)].
\end{gather*}
The $\Sigma$-fixed elements are those such that $b=0$. The above isomorphism maps $\text{Fix}_\Sigma$ to the subgroup of $O(N)$ composed of matrices of the form $\begin{bsmallmatrix}
    \Re(a)&-\Im(a)\\\Im(a)&\Re(a)
\end{bsmallmatrix}$, which is isomorphic to $U(\sfrac{N}{2})$, so $\mathscr M^{11-1}_N=\text{Fix}_\Xi\cap \text{Inv}_\Sigma
% \cong\frac{\text{Fix}_\Xi}{\text{Fix}_\Sigma}
\cong\frac{O(N)}{U(\sfrac{N}{2})}$. Thus $Ad_{\hat U}\in \mathscr M^{11-1}_{N,m}$ iff $U=\begin{bsmallmatrix}
    a&b\\b^*&a^*
\end{bsmallmatrix}$, $a=a^\dagger$ and $b=-b^t$.\\

\textbf{Class CI $(1\text{-}11)$:} In this class we have $\text{Fix}_\Xi\cong Sp(\sfrac{N}{2})$, as in class C. The subgroup of $\Sigma$-fixed elements are represented by matrices of the form $\begin{bsmallmatrix}
    a&0\\0&a^*
\end{bsmallmatrix}$, which is isomorphic to $U(\sfrac{N}{2})$, so $\mathscr M^{1-11}_N=\text{Fix}_\Xi\cap \text{Inv}_\Sigma
% \cong\frac{\text{Fix}_\Xi}{\text{Fix}_\Sigma}
\cong \frac{Sp(\sfrac{N}{2})}{U(\sfrac{N}{2})}$. 
Thus $Ad_{\hat U}\in \mathscr M^{1-11}_{N}$ iff $U=\begin{bsmallmatrix}
    a&b\\-b^*&a^*
\end{bsmallmatrix}$, $a=a^\dagger$ and $b=b^t$.\\

\textbf{Class CII $(1\text{-}1\text{-}1)$:} In this class we have that the $\Xi$-fixed elements are of the form 

$$
    U=\mqty[
    a_{0}&a_{1}&b_{0}&b_{1}\\-a_{1}^*&a_{0}^*&b_{1}^*&-b_{0}^*\\c_{0}&c_{1}&d_{0}&d_{1}\\c_{1}^*&-c_{0}^*&-d_{1}^*&d_{0}^*]
,\qquad\mqty{ 
a_{i}\in M_{\sfrac{m}{2}}\mathds C,\\
b_{i}\in M_{\sfrac{m}{2},\sfrac{N-m}{2}}\mathds C,\\
c_{i}\in M_{\sfrac{N-m}{2},\sfrac{m}{2}}\mathds C,\\
d_{i}\in M_{\sfrac{N-m}{2}}\mathds C.}
$$
Setting
\begin{equation}
    V=\mqty[1_{\sfrac{m}{2}}&0&0&0\\0&0&-i1_{\sfrac{m}{2}}&0\\0&1_{\sfrac{N-m}{2}}&0&0\\0&0&0&-i 1_{\sfrac{N-m}{2}}]
\label{V1-1-1}
\end{equation}
we then have an isomorphism
$$
\text{Fix}_{\Xi}\xrightarrow{\cong}Sp(\sfrac{N}{2}),\qquad 
U\mapsto VUV^\dagger=\mqty[
    a_{0}&ib_{0}&a_{1}&ib_{1}\\-ic_{0}&d_{0}&-ic_{1}&d_{1}\\-a_{1}^*&ib_{1}^*&a_{0}^*&-ib_{0}^*\\-ic_{1}^*&-d_{1}^*&ic_{0}^*&d_{0}^*].
$$
The $\Sigma$-fixed elements are those such that $b_{i}=0$ and $c_{i}=0$, so $\mathscr M^{1-1-1}_{N,m}=\text{Fix}_\Xi\cap \text{Inv}_\Sigma\cong\frac{Sp(\sfrac{N}{2})}{Sp(\sfrac{m}{2})\times Sp(\sfrac{N-m}{2})}$. Thus $Ad_{\hat U}\in \mathscr M^{1-1-1}_{N,m}$ iff $U=\begin{bsmallmatrix}
    a_0&a_1&b_0&b_1\\-a_1^*&a_0^*&b_1^*&-b_0^*\\-b_0^\dagger&-b_1^t&d_0&d_1\\-b_1^\dagger&b_0^t&-d_1^*&d_0^*
\end{bsmallmatrix}$, $a_0=a_0^\dagger$, $a_1^t=-a_1$, $d_0^\dagger=d_0$ and $d_1^t=-d_1$.
%%%%%%%%%%%%%%%%%
%%% S % S % S %%%
%%%%%%%%%%%%%%%%%

\section{K-theoretical classification}\label{sec3}

\subsection{Cohomology theories and representing spectra}

A cohomology theory is a contravariant functor $E^n:\texttt{Top}_{*}^{op}\to\texttt{AbGrp}^{\mathds Z}$ equipped with degree -1 natural isomorphisms $\sigma^n:E^{n+1}(X\wedge \mathds S^1)\to E^n X$ satisfying the Eilenberg-Steenrod homotopy invariance, exactness and additive axioms.

Complex and real topological K-theory are cohomology theories induced by the algebraic structure of isomorphism classes of vector bundles over spaces \cite{atiyah2018k}. Topological K-theory can be defined from the Murray-von Neumann category functor \cite{vasconcellos2022operator}. Let $\mathds F=\mathds C,\mathds R$ and $X\in\texttt{Top}$. The Murray-von Neumann category $\texttt{pr}_{\mathds F}X$ has as objects orthogonal projection matrices over $C(X,\mathds F)$, and morphisms are matrices witnessing the Murray-von Neumann relation:
\begin{gather*}\textstyle
    \text{Ob }\texttt{pr}_{\mathds F}X\coloneqq\coprod_{p\in\mathds N}\{P\in M_pC(X,\mathds F)\mid P=P^\dagger=P^2\},\\
    \texttt{pr}_{\mathds F}X(P,Q)\coloneqq\{U\in M_{q,p}C(X,\mathds F)\mid P=U^\dagger U, Q=UU^\dagger\}.
\end{gather*}
Composition is defined by matrix multiplication, and $id_P=P$. The category $\texttt{pr}_{\mathds F}X$ is equivalent to the category of vector bundles over $X$ and bundle isomorphisms. For every $P\in \texttt{pr}_{\mathds F}X$ we have that $\coprod_{x\in X}\Im P(x)\subset X\times \mathds F^p$ is a vector bundle over $X$, and if $U\in \texttt{pr}_{\mathds F}X(P,Q)$ then $\tilde U:\coprod_{x\in X}\Im P(x)\to \coprod_{x\in X}\Im Q(x)$, with $\tilde U(x,v)\coloneqq  (x,U(x)v)$, is a bundle isomorphism. Every vector bundle over $X$ is isomorphic to one induced by an object of $\texttt{pr}_{\mathds F}X$.

The fiberwise direct sum gives us a permutative structure on the category of vector bundles over $X$. This permutative structure is encoded in $\texttt{pr}_{\mathds F}X$ by the direct sum of matrices $U\oplus V
        \coloneqq
        \begin{bsmallmatrix}
            U&0\\0&V
        \end{bsmallmatrix}$. The neutral element is the empty matrix $\boldsymbol 0\coloneqq \mqty[\ \ \ ]$, and the symmetry braiding is $\tau_{P,Q}\coloneqq \begin{bsmallmatrix}
            0&Q\\P&0
        \end{bsmallmatrix}:P\oplus Q\to Q\oplus P$. This permutative structure induces a commutative monoid structure on the decategorification $\pi_0\abs{\texttt{pr}_{\mathds F}X}$, ie the set of isomorphism classes of the objects in $\texttt{pr}_{\mathds F}X$. Applying Grothendieck's abelian group completion we get a contravariant functor, which induces a cohomology theory.

\begin{definition}
    The $K^{00 }_{\mathds F}$-group contravariant functor is
    $$
        K^{00}_{\mathds F}:\texttt{Top}_\ast^{op}\to \texttt{AbGrp}
            ,\qquad K^{00}_{\mathds F}X\coloneqq Gr \pi_0\abs{\texttt{pr}_{\mathds F}X}.
    $$
    Topological K-theory over $\mathds F$ is determined by the functor
    $$
        K_{\mathds F}^n:\texttt{Top}^{op}_\ast\to\texttt{AbGrp}^{\mathds Z_{\leq 0}}
        ,\qquad 
        K_{\mathds F}^n X\coloneqq \begin{cases}
            \ker(i^*_{x_0}:K^{00}_{\mathds F}X\to K^{00}\{x_0\}\cong \mathds F),&n=0\\
            K_{\mathds F}^0(X\wedge \mathds S^{\abs{n}}),&n<0
        \end{cases}
    $$
    equipped with the natural isomorphisms $\sigma^n:K_{\mathds F}^{n+1}(X\wedge \mathds S^1)\to K_{\mathds F}^nX$, induced by the natural isomorphisms $\mathds S^1\wedge\mathds S^{\abs{n}-1}\cong \mathds S^{\abs{n}}$. 

    Bott's periodicity theorem \cite{atiyah1968bott,bott1959stable} tells us complex topological K-theory has periodicity 2, meaning we have a natural isomorphism $K_\mathds C^{n+2}X\cong K_\mathds C^nX$, and that real topological K-theory has periodicity 8, ie $K_\mathds R^{n+8}X \cong K_\mathds R^nX$. This periodicity allows us to define the positive degree K-groups.
\end{definition}

Brown's representability theorem tells us that cohomology theories can be represented by objects called spectra. There are many versions of the category of spectra, but for our purposes the following simple definition will suffice.

\begin{definition}
    A (sequential pre-)spectrum $Y$ is a sequence of pointed topological spaces $\{Y_\bullet\}_{\bullet\in\mathds N}\in \texttt{Top}_*^{\mathds N}$ equipped with structural suspension maps $\{\sigma^Y_\bullet:Y_\bullet\wedge \mathds S^1\to Y_{\bullet+1}\}_{\bullet\in\mathds N}$.

    A spectrum map $\phi:Y\to Z$ is a sequence of pointed maps $\{\phi_\bullet:Y_\bullet\to Z_\bullet\}_{\bullet\in\mathds N}$ such that $\sigma^Z_\bullet(\phi_\bullet(y),\theta)=\phi_{\bullet+1}(\sigma^Y_{\bullet}(y,\theta))$.
\end{definition}

For $n\in\mathds Z$ the $n$-th stable homotopy group of a spectrum $Y$ is
$$\pi^S_n Y=\text{colim}_{\bullet\to \infty}\pi_{n+\bullet}Y_\bullet,
$$
with the colimit induced by the  duals of the structural suspension maps. For $X\in \texttt{Top}_{\ast}$ the mapping spectrum $Y^X$ is defined by
$$
    (Y^X)_\bullet\coloneqq Y^X_\bullet,
    \qquad \sigma^{Y^X}_\bullet(f,\theta)(x)\coloneqq \sigma^Y_\bullet(f(x),\theta). 
$$

Every cohomology theory $E^\bullet$ is represented by a spectrum $Y$, in the sense that for all $X\in\texttt{Top}_*$ we have a natural graded isomorphism 
$$
    E^{n} X\cong\pi^S_{n}Y^X.
$$

A homotopy between spectra maps $f,g:Y\to Z$ is a map $h:Y\wedge I_+\to Z$ such that $h_\bullet(y,0)=f_\bullet(y)$ and $h_\bullet(y,1)=g_\bullet(y)$. A homotopy equivalence is a spectra map that admits an inverse up to homotopy, and spectra that are homotopy equivalent represent the same cohomology theory. In particular, if a subspectrum $Y\subset Z$ is a deformation retract, ie if there is a map $r:Z\to Y$ homotopy equivalent to the identity, then $Y$ and $Z$ represent the same cohomology theory.

Complex and real topological K-theory are represented by spectra $KU$ and $KO$. There are many realizations of these spectra, see for instance \cite{MaEinftyRingSpcsSpectra,schwede2012symmetric}. We now give a definition of $KU$ and $KO$ in terms of spaces of free time evolution operators of grotesque fermion systems, with structural suspension maps induced by the homotopy equivalences in Bott's proof of the periodicity theorem \cite{bott1959stable}. To the best of our knowledge such explicit expressions have not up to now appeared in the literature. In section 4 we will see that spectra $KU^{wi}$ and $KO^{wi}$ constructed from weakly interacting time evolution operators deformation retract to $KU$ and $KO$. Thus, the tenfold way is stable to weak interactions.

\subsection{Construction of $KU$ and $KO$ in terms of time evolution operators}

We first show how the complex topological K-theory spectrum $KU$ can be described in terms of time evolution operators of systems without time reversal or charge-conjugation symmetries. Bott's periodicity theorem for complex K-theory is reflected in the fact that its representing spectrum $KU$ is composed of a 2-periodic sequence of spaces, alternating between classes with only chiral symmetry and without any non-regular symmetries.\\

Any choice of basis for each Hilbert space $\mathscr V_N$ gives us inclusions $\iota_N:\mathscr V_N\hookrightarrow \mathscr V_{N+1}$, with $\iota_N(\ket{i})=\ket{i}$. These induce the maps
$$
    \iota_N:\mathscr M^{000}_N\hookrightarrow\mathscr M^{000}_{N+1},
    \qquad
    \iota_N(a)=\mqty[a&0\\0&1].
$$

We can define 
$$\mathscr M^{000}\coloneqq\text{colim}_{N\in\mathds N}\mathscr M^{000}_N\cong U(\infty).
$$

For each $N,m\in\mathds N$ with $m\leq N$ the inclusions $\iota_N$ induce maps $\iota_N:\mathscr M^{100}_{N,m}\hookrightarrow\mathscr M^{100}_{N+1,m}$. We can then define $\mathscr M^{100}_{\infty,m}\coloneqq \text{colim}_{N\in\mathds N_{\geq m}}\mathscr M^{100}_{N,m}\cong BU(m)$. We also have inclusions $\kappa_{N,m}:\mathscr V_N\hookrightarrow \mathscr V_{N+1}$ with $\kappa_{N,m}(\ket{i})=\ket{i}$ if $i\leq m$, and $\kappa_{N,m}(\ket{i})=\ket{i+1}$ if $i> m$. These induce maps
$$
\kappa_{N,m}:\mathscr M^{100}_{N,m}\hookrightarrow \mathscr M^{100}_{N+1,m+1},
    \qquad
    \kappa_{N,m}\left(\mqty[a&b\\-b^\dagger&d]\right)
    =\mqty[a&0&b\\0&1&0\\-b^\dagger&0&d],
$$
which gives us inclusions $\kappa_m:\mathscr M^{100}_{\infty,m}\hookrightarrow \mathscr M^{100}_{\infty,m+1}$. We can define 
$$\mathscr M^{100}\coloneqq \text{colim}_{m\in\mathds N}\mathscr M^{100}_{\infty,m}\cong BU(\infty).
$$

The structural suspension maps of $KU$ can be constructed by adapting the homotopy equivalences  used by Bott to prove his periodicity theorem \cite{bott1959stable}, which is a consequence of the following proposition:

\begin{proposition}\label{BottProp}
    Let $G$ be a compact Lie group equipped with a Cartan involution $\tau$. Let $s$ be a geodesic segment on the symmetric space $\frac{G}{\text{Fix}_\tau}$ from the coset $\text{Fix}_\tau$ to a coset $g\text{Fix}_\tau$, with $g$ in the normalizer of $\text{Fix}_\tau$. Let $K_s$ be the centralizer of $s$, and $\Omega_s\frac{G}{\text{Fix}_\tau}$ be the component of $s$ in the space of paths from $\text{Fix}_\tau$ to $g\text{Fix}_\tau$. Define the map
    $$ \textstyle 
        f_s:\frac{\text{Fix}_\tau}{K_s}\to \Omega_s\frac{G}{\text{Fix}_\tau},\qquad f_s(x K_s)\coloneqq (\theta\mapsto xs(\theta)x^{-1} \text{Fix}_\tau). 
    $$
    If $s$ contains no conjugate point of $e$ in its interior, then the induced homomorphism
    $$
        \textstyle f_s^\ast:H^\bullet(\Omega_s\frac{G}{\text{Fix}_\tau},\mathds Z_2)\to H^\bullet(\frac{\text{Fix}_\tau}{K_s},\mathds Z_2)
    $$
    is surjective.
\end{proposition}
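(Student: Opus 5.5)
This is Bott's theorem from \cite{bott1959stable}. The natural route is Morse theory on the path space, in the form Bott used. We work on the symmetric space $M=\frac{G}{\text{Fix}_\tau}$ with its invariant metric. We consider the energy functional $E$ on the space $\Omega$ of piecewise smooth paths from $p=\text{Fix}_\tau$ to $q=g\text{Fix}_\tau$. The hypothesis that $g$ normalizes $\text{Fix}_\tau$ makes left translation by $\text{Fix}_\tau$ fix $q$ as well as $p$. Hence $\text{Fix}_\tau$ acts on $\Omega$ by isometries of $M$ fixing both endpoints, and $f_s$ is just the orbit map of $s$, since $x s(\theta) x^{-1}\text{Fix}_\tau = x s(\theta)$ when $x\in\text{Fix}_\tau$.

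\textbf{Step 1.} We first check that the orbit $\text{Fix}_\tau\cdot s\cong\frac{\text{Fix}_\tau}{K_s}$ is a critical manifold of $E$ consisting of minimal geodesics in the component $\Omega_s$. Using the structure of symmetric spaces (maximal tori, root systems of $\mathfrak p$), the geodesics from $p$ to $q$ of the same length as $s$ form a union of such orbits. The condition that $s$ has no conjugate point of $p$ in its interior means the Morse index along this critical set is zero, so it is a nondegenerate minimum in the sense of Bott. Nondegeneracy holds because Jacobi fields vanishing at both ends of $s$ come exactly from Killing fields, i.e.\ are tangent to the orbit.

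\textbf{Step 2.} Next we apply Bott's Morse theory for nondegenerate critical manifolds to $E$ on a finite-dimensional approximation of $\Omega_s$ by broken geodesics. This gives that $\Omega_s$ has the homotopy type of the bottom critical manifold $\frac{\text{Fix}_\tau}{K_s}$ with cells attached, one bundle of cells over each higher critical manifold. The fibre dimension of each such bundle equals the index, which is computed explicitly via the roots.

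\textbf{Step 3.} For surjectivity of $f_s^\ast$ on $\mathds Z_2$-cohomology we use Bott's completeness criterion. Over each higher critical manifold $C$ one builds a cycle: an explicit manifold of broken geodesics constructed from the symmetric structure (a $\text{Fix}_\tau$-homogeneous bundle over $C$ that completes the negative disc bundle). Over $\mathds Z_2$ these cycles realize the Thom classes of the negative bundles, which avoids orientation issues. Consequently the Morse inequalities are equalities mod 2 and the long exact sequences of the filtration split. In particular the restriction $H^\bullet(\Omega_s,\mathds Z_2)\to H^\bullet(\text{bottom level},\mathds Z_2)$ is onto, and the bottom level is precisely the image of $f_s$.

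The main obstacle is Step 3, specifically the construction of these completing cycles and the verification that they meet the critical manifold transversally in the negative directions. Rather than reprove this, we would cite \cite{bott1959stable}, where it is carried out for symmetric spaces, and only check that our hypotheses match his.
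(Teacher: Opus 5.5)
The paper gives no argument for this proposition: it quotes the result from Bott \cite{bott1959stable} and uses it as a black box. Your outline is essentially the same approach: you treat the $\text{Fix}_\tau$-orbit of $s$ as a nondegenerate critical manifold of the energy, get nondegeneracy from Killing Jacobi fields, get mod-$2$ perfectness from Bott--Samelson completing cycles, deduce surjectivity of restriction, and defer to Bott for the construction of the cycles. The one point to tighten is that the hypothesis of no interior conjugate points only makes $\text{Fix}_\tau\cdot s$ a local minimum, so your Step 1 claim that it is the bottom level of $\Omega_s$ actually follows from the splitting in Step 3 (that splitting also shows directly that restriction to an isolated index-zero critical manifold is onto).
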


Bott used this proposition to show that for all $N$ there is a geodesic $s$ in $U(2N)$ such that $f_s:\frac{U(2N)}{U(N)\times U(N)}\to \Omega_s(2N)$ induces an isomorphism of homotopy groups up to dimension $2N$. Taking colimits gives us a homotopy equivalence $BU(\infty)\simeq \Omega_sU(\infty)$.

We can adapt Bott's construction to define the even structural suspensions of $KU$. Consider for $N\in \mathds N$ the geodesic
\begin{gather}
    Ad_{\hat s_{0}(\theta)}\in \mathscr M^{000}_{2N}
    ,\qquad \textstyle
    \hat s_{0}(\theta)
    \coloneqq 
    \exp(-i\pi \theta (\sum_{j=1}^{N}a_j^\dagger a_j - a_{N+j}^\dagger a_{N+j})   )
    ,\label{s0}\\
    [Ad_{\hat s_{0}(\theta)}]_{\mathcal B}=\exp(-i\pi \theta  1_{N,N})=\mqty[e^{-i \pi \theta}1_{N}&0\\0&e^{i \pi \theta}1_{N}].
    \nonumber
\end{gather}

Since the centralizer of $\hat s_0$ is $\text{Fix}_\Sigma$ the map
\begin{gather*}
    \sigma'_{0}:\frac{\mathscr M^{000}_{2N}}{\text{Fix}_\Sigma}\wedge \mathds S^1\to \mathscr M^{000}_{2N}
    ,\\
    \sigma'_{0}(Ad_{\exp(-i\hat H)}\text{Fix}_\Sigma,\theta)\coloneqq [Ad_{\exp(-i\hat H)},Ad_{\hat s_{0}(\theta)}]=Ad_{[\exp(-i\hat H),\hat s_{0}(\theta)]}
\end{gather*}
is well defined. Taking colimits we can define
\begin{gather*}
    \sigma_0:\mathscr M^{100}\wedge \mathds S^1\to \mathscr M^{000},
    \nonumber\\ \sigma_{0}(Ad_{\exp(-i\hat H)},\theta)\coloneqq \sigma_0'(c_\Sigma^{-1}Ad_{\exp(-i\hat H)},\theta)=Ad_{[\exp(\sfrac{-i}{2}\hat H),\hat s_0(\theta)]}.
\end{gather*}
Under the convenient basis of the previous section we have
$$
    \sigma_0\left(\exp(-i\mqty[0&b\\b^\dagger&0]),\theta\right)=\exp(\frac{-i}{2}\mqty[0&b\\b^\dagger&0])\exp(\frac{i}{2}\mqty[0&e^{-i2\pi\theta}b\\e^{i2\pi\theta}b^\dagger&0]).
$$
This structural suspension induces a homotopy equivalence of $\mathscr M^{100}$ with the connected component of the trivial loop in $\Omega\mathscr M^{000}$.\\

The total spaces $EU(m)$ of the universal $U(m)$-principal bundle are contractible, and these contractions induce a homotopy equivalence $U(\infty)\simeq \Omega BU(\infty)$. This gives us the odd structural suspensions of $KU$.

Recall that the chiral symmetry $\hat S$ induces orthogonal decomposition $\mathscr V_N=\mathscr V_N^+\oplus\mathscr V_N^-$. Let $\hat N_+\coloneqq \sum_{j=1}^ma_j^\dagger a_j$ and $\hat N_-\coloneqq\sum_{j=m+1}^Na_j^\dagger a_j$, which restricted to $\mathscr V_N$ give us the orthogonal projections onto $\mathscr V_N^+$ and $\mathscr V_N^-$, respectively. Consider the subgroups 
\begin{gather*}
    \mathscr M_{N,m}^+\coloneqq\{Ad_{\hat U}\in\mathscr M^{000}_N\mid \hat N_+Ad_{\hat U}\hat N_+ +\hat N_-=Ad_{\hat U}\},\\
    \mathscr M_{N,m}^-\coloneqq\{Ad_{\hat U}\in\mathscr M^{000}_N\mid \hat N_++\hat N_-Ad_{\hat U}\hat N_-=Ad_{\hat U}\}.
\end{gather*}

Note that $\text{Fix}_{\Sigma}=\mathscr M_{N,m}^+\mathscr M_{N,m}^-=\mathscr M_{N,m}^-\mathscr M_{N,m}^+$. Under the isomorphism $\mathscr M^{000}_N\cong U(N)$, we have $\mathscr M_{N,m}^+\cong U(m)\times\{1_{N-m}\}$ and $\mathscr M_{N,m}^-\cong \{1_m\}\times U(N-m)$. We can define $\mathscr E^{100}_{N,m}\coloneqq \frac{\mathscr M^{000}_N}{\mathscr M_{N,m}^-}$. The inclusions $\iota_N$ induce maps $\mathscr E^{100}_{N,m}\hookrightarrow \mathscr E^{100}_{N+1,m}$, and we define $\mathscr E^{100}_{\infty,m}\coloneqq \text{colim}_{N\in\mathds N_{\geq m}}\mathscr E^{100}_{N,m}\cong EU(m)$.

We then have fibrations $\pi_m:\mathscr E^{100}_{\infty,m}\twoheadrightarrow \mathscr M^{100}_{\infty,m}$ with fibers $\mathscr M^+_{\infty,m}\coloneqq \text{colim}_{N\in\mathds N_{\geq m}}\mathscr M^+_{N,m}\cong \mathscr M^{000}_m\cong U(m)$. The inclusions $\kappa_{m,N}$ give us maps $\kappa_m:\mathscr E_{\infty,m}^{100}\to \mathscr E_{\infty,m+1}^{100}$, and we define $\mathscr E^{100}\coloneqq \text{colim}_{m\in\mathds N}\mathscr E_{\infty,m}^{100}\cong EU(\infty)$. The fibrations $\pi_m$ then induce a fibration $\pi:\mathscr E^{100}\twoheadrightarrow\mathscr M^{100}$, with fibers $\mathscr M^+\coloneqq\text{colim}_{m\in\mathds N}\mathscr M^+_{\infty,m}\cong \mathscr M^{000}\cong U(\infty)$, which is diffeomorphic to the universal $U(\infty)$-principal bundle. We then set the maps
$$
\iota_1:\mathscr H^{000}_N\to \mathscr H^{000}_{2N},\qquad \iota_1(H)\coloneqq \mqty[H&0\\0&0],
$$
which induce a fiber inclusion $\mathscr M^{000}\hookrightarrow\mathscr E^{100}$.
For $N,m\in\mathds N$ with $m\leq N$ consider the geodesic
\begin{gather*}\textstyle
    Ad_{\hat s_{1}(\theta)}\in\mathscr M^{000}_{N+m},
    \\ \hat s_{1}(\theta)\coloneqq 
    \exp(\sfrac{-i\pi \theta}{2}(\textstyle\sum_{j=1}^m a_j^\dagger a_j-a_{j}^\dagger a_{N+j}-a_{N+j}^\dagger a_j+a_{N+j}^\dagger a_{N+j})),\\
    [Ad_{\hat s_{1}(\theta)}]_{\mathcal B}=\exp(\frac{-i\pi \theta}{2} \mqty[1_m&0&-1_m\\0&0&0\\-1_m&0&1_m])
    =\mqty[
        \frac{1+e^{-i\pi \theta}}{2}1_m&0&\frac{1-e^{-i\pi \theta}}{2}1_m\\
        0&1_{N-m}&0\\
        \frac{1-e^{-i\pi \theta}}{2}1_m&0&\frac{1+e^{-i\pi \theta}}{2}1_m]
\end{gather*}

We can then define
\begin{gather*}
    \tilde \sigma'_{1}:\mathscr E^{100}_{N,m}\wedge \mathds S^1 \to \frac{\mathscr M^{000}_{N+m,m}}{\text{Fix}_\Sigma}
,\\ \tilde\sigma'_{1}(Ad_{\exp(-i\hat H)}\mathscr M_{N,m}^-,\theta)\coloneqq Ad_{\exp(-i\hat s_{1}(\theta)\iota_1(\hat H)\hat s_{1}(\theta)^\dagger)}\text{Fix}_\Sigma,
\end{gather*}
since elements of $\mathscr M^-_{N,m}\subset \mathscr M^-_{N+m,m}$ commute with $\hat s_{1}(\theta)$. This map induces a homotopy from $\iota \pi:\mathscr E^{100}_{N,m}\twoheadrightarrow \mathscr M^{100}_{N+m,m}$ to the constant map at the base point. 
Considering that under our choice of basis $\mathscr M^{000}_{N}$ is identified with the fiber of $\pi_N:\mathscr E^{100}_{\infty,N}\twoheadrightarrow \mathscr M^{100}_{\infty,N}$, we can define
\begin{gather*}
    \sigma_1:\mathscr M^{000}\wedge \mathds S^1\to  \mathscr M^{100}, \qquad \sigma_1(Ad_{\exp(-i\hat H)},\theta)\coloneqq [Ad_{\exp(-i\hat s_{1}(\theta)\iota_1(\hat H)\hat s_{1}(\theta)^\dagger)},\hat S]
    \\
    \sigma_1\left(\exp(-iH),\theta\right)=\left[\exp(-i\mqty[\cos(\frac{\pi \theta}{2})^2H&\frac{i}{2}\sin(\pi\theta)H\\\frac{-i}{2}\sin(\pi\theta) H&\sin(\frac{\pi \theta}{2})^2 H]),\hat S\right].
\nonumber
\end{gather*}

\begin{definition}
    The \textit{complex topological K-theory spectrum $KU$} is
    \begin{gather*}
        KU_{\bullet}\coloneqq \begin{cases}
        \mathscr M^{100}\cong BU(\infty),&\bullet=0\mod 2\\
        \mathscr M^{000}\cong U(\infty),&\bullet =1\mod 2
    \end{cases},
    \\
    \sigma_\bullet(Ad_{\exp(-i\hat H)},\theta)\coloneqq \begin{cases}
    Ad_{[\exp(\sfrac{-i}{2}\hat H),\hat s_0(\theta)]},&\bullet=0\mod 2\\
        [Ad_{\exp(-i\hat s_{1}(\theta)\iota_1(\hat H)\hat s_{1}(\theta)^\dagger)},\hat S],&\bullet=1\mod 2
    \end{cases}.
    \end{gather*}
\end{definition}

We now describe the real topological K-theory spectrum $KO$ in terms of time evolution operators of systems with time-inversion and/or charge-conjugation symmetries.\\

The maps $\iota_N$ induce inclusions $\mathscr M^{010}_N\hookrightarrow \mathscr M^{010}_{N+1}$ and $\mathscr M^{001}_N\hookrightarrow \mathscr M^{001}_{N+1}$. We can then define 
$$\mathscr M^{010}\coloneqq \text{colim}_{N\in\mathds N}\mathscr M^{010}_N\cong O(\infty)
\quad \text{ and }\quad 
\mathscr M^{001}\coloneqq \text{colim}_{N\in\mathds N}\mathscr M^{001}_N\cong\frac{U(\infty)}{O(\infty)}.
$$

The maps $\iota_N$ and $\kappa_{m,N}$ induce inclusions $\mathscr M^{111}_{N,m}\hookrightarrow \mathscr M^{111}_{N+1,m}$ and $\mathscr M^{111}_{N,m}\hookrightarrow \mathscr M^{111}_{N+1,m+1}$. We then define 
$$\mathscr M^{111}\coloneqq \text{colim}_{m\in\mathds N,N\in \mathds N_{\geq m}}\mathscr M^{111}_{N,m}\cong BO(\infty).
$$

The maps $\iota_{N+1}\kappa_{m+\sfrac{N-m}{2},N}=\kappa_{m+\sfrac{N-m}{2},N+1}\iota_N$ and $\kappa_{m+1,N+1}\kappa_{\sfrac{m}{2},N}=\kappa_{\sfrac{m}{2},N+1}\kappa_{m,N}$ induce inclusions $\mathscr M^{1-1-1}_{N,m}\hookrightarrow \mathscr M^{1-1-1}_{N+2,m}$ and $\mathscr M^{1-1-1}_{N,m}\hookrightarrow \mathscr M^{1-1-1}_{N+2,m+2}$. We then define 
$$\mathscr M^{1-1-1}\coloneqq \text{colim}_{m\in2\mathds N,N\in 2\mathds N_{\geq m}}\mathscr M^{1-1-1}_{N,m}\cong BSp(\sfrac{\infty}{2}).
$$

The maps $\iota_{N+1}\kappa_{\sfrac{N}{2},N}=\kappa_{\sfrac{N}{2},N+1}\iota_{N}$ induce inclusions $\mathscr M^{0-10}_N\hookrightarrow\mathscr M^{0-10}_{N+2}$, $\mathscr M^{00-1}_N\hookrightarrow\mathscr M^{00-1}_{N+2}$, $\mathscr M^{1-11}_N\hookrightarrow\mathscr M^{1-11}_{N+2}$ and $\mathscr M^{11-1}_N\hookrightarrow\mathscr M^{11-1}_{N+2}$.  We then define 
\begin{align*}
    \mathscr M^{0-10}&\coloneqq \text{colim}_{N\in 2\mathds N}\mathscr M^{0-10}_{N}\cong Sp(\sfrac{\infty}{2})
    ,&
    \mathscr M^{00-1}
    &\coloneqq \text{colim}_{N\in 2\mathds N}\mathscr M^{00-1}_{N}\cong\frac{U(\infty)}{Sp(\sfrac{\infty}{2})},\\
    \mathscr M^{1-11}
    &\coloneqq \text{colim}_{N\in 2\mathds N}\mathscr M^{1-11}_{N}\cong\frac{Sp(\sfrac{\infty}{2})}{U(\sfrac{\infty}{2})}
    ,&
    \mathscr M^{11-1}
    &\coloneqq \text{colim}_{N\in 2\mathds N}\mathscr M^{11-1}_{N}\cong\frac{O(\infty)}{U(\sfrac{\infty}{2})}.
\end{align*}

Setting $\hat s_0(\theta)$ as in (\ref{s0}), the 0th structural suspension map is
\begin{gather*}
    \sigma_0:\mathscr M^{111}\wedge \mathds S^1\to \mathscr M^{001}
    ,\qquad
    \sigma_0(Ad_{\exp(-i\hat H)},\theta)\coloneqq [Ad_{[\exp(\sfrac{-i}{2}\hat H),\hat s_0(\theta)]},\hat T],\\
    \sigma_0\left(\exp(-i\mqty[0&b\\b^t&0]),\theta\right)=\left[\exp(\frac{-i}{2}\mqty[0&b\\b^t&0])\exp(\frac{i}{2}\mqty[
        0&e^{-i2\pi\theta}b\\e^{i2\pi\theta}b^t&0]),\hat T\right].
\end{gather*}

In order to define the 1st structural suspension map, note first we have an inclusion
$$
    \iota_1:\mathscr H^{001}_N\to \mathscr H^{000}_{2N}
    ,\qquad \iota_1(H)\coloneqq \mqty[
    H&0\\0&-H]
$$ 
such that $\iota_1(\hat H)$ commutes with $\hat C$ in the class CI $(1\text{-}11)$.
Setting 
\begin{gather*}
    \textstyle Ad_{\hat s_1(\theta)}\in \mathscr M^{1-11}_{2N}
,
\qquad 
\hat s_1(\theta)\coloneqq \exp(-i\pi \theta(\sum_{j=1}^Na_{j}^\dagger a_{N+j}+a_{N+j}^\dagger a_{j})),\\ 
[\textstyle Ad_{\hat s_1(\theta)}]_{\mathcal B}=\exp(-i\pi \theta F_{2N})
=\mqty[\cos(\pi \theta)1_N&-i\sin(\pi \theta)1_N\\-i\sin(\pi \theta)1_N&\cos(\pi \theta)1_N]
\end{gather*}
we can define
\begin{gather*}
    \sigma_1:\mathscr M^{001}\wedge \mathds S^1\to \mathscr M^{1-11}
    ,\qquad
    \sigma_1(Ad_{\exp(-i\hat H)},\theta)\coloneqq [Ad_{[\exp(\sfrac{-i}{2}\iota_1(\hat H)),\hat s_1(\theta)]},\hat T]\\
    \sigma_1\left(\exp(-iH),\theta\right)
    =\left[\exp(\frac{-i}{2}\mqty[H&0\\0&-H])\exp(\frac{i}{2}\mqty[\cos(2\pi\theta)H
    &i \sin(2\pi\theta)H
    \\
    -i \sin(2\pi\theta)H&-\cos(2\pi\theta)H]),\hat T\right].
\end{gather*}

Setting 
\begin{gather*}
    \textstyle
    Ad_{\hat s_2(\theta)}\in\mathscr M^{0-10}_N
    ,
    \qquad 
    \hat s_2(\theta)\coloneqq \exp(-i\pi \theta(\sum_{j=1}^{\sfrac{N}{2}}a_{j}^\dagger a_{j}-a_{\sfrac{N}{2}+j}^\dagger a_{\sfrac{N}{2}+j})),\\
    [Ad_{\hat s_2(\theta)}]_{\mathcal B}=\exp(-i\pi \theta 1_{\sfrac{N}{2},\sfrac{N}{2}})=\mqty[e^{-i\pi \theta}1_{\sfrac{N}{2}}&0\\0&e^{i\pi \theta}1_{\sfrac{N}{2}}]
\end{gather*}
the 2nd structural suspension map is
\begin{gather*}
    \sigma_2:\mathscr M^{1-11}\wedge \mathds S^1\to \mathscr M^{0-10}
    ,\qquad
    \sigma_2(Ad_{\exp(-i\hat H)},\theta)\coloneqq Ad_{[\exp(\sfrac{-i}{2}\hat H),\hat s_2(\theta)]},\\
    \sigma_2\left(\exp(-i\mqty[0&b\\b^*&0]),\theta\right)=\exp(\frac{-i}{2}\mqty[0&b\\b^*&0])\exp(\frac{i}{2}\mqty[
        0&e^{-i2\pi\theta}b\\e^{i2\pi\theta}b^*&0]).
\end{gather*}

To define the 3rd structural suspension  we set
\begin{gather*}\textstyle
    Ad_{\hat s_3(\theta)}\in\mathscr M^{000}_{2N},
    \\ 
    \hat s_{3}(\theta)\coloneqq 
    \exp(\sfrac{-i\pi \theta}{2}(\textstyle\sum_{j=1}^N a_j^\dagger a_j-a_{j}^\dagger a_{N+j}-a_{N+j}^\dagger a_j+a_{N+j}^\dagger a_{N+j})),\\
    [Ad_{\hat s_3(\theta)}]_{\mathcal B}=\exp(\frac{-i\pi \theta}{2} \mqty[1_N&-1_N\\-1_N&1_N])
    =\mqty[
        \frac{1+e^{-i\pi \theta}}{2}1_N&\frac{1-e^{-i\pi \theta}}{2}1_N\\
        \frac{1-e^{-i\pi \theta}}{2}1_N&\frac{1+e^{-i\pi \theta}}{2}1_N]
\end{gather*}
as in the 1st suspension map of $KU$. We also set the inclusion
$$
\iota_3:\mathscr H^{0-10}_N\to \mathscr H^{000}_{2N},\qquad \iota_3(H)\coloneqq \mqty[H&0\\0&0],
$$
such that $\iota_3(\hat H)$ commutes with $\hat C$ in the class CII $(1\text{-}1\text{-}1)$.
This lets us define
\begin{gather*}
    \sigma_3:\mathscr M^{0-10}\wedge \mathds S^1\to \mathscr M^{1-1-1}, \qquad \sigma_3(Ad_{\exp(-i\hat H)},\theta)\coloneqq [Ad_{\exp(-i\hat s_{3}(\theta)\iota_3(\hat H)\hat s_{3}(\theta)^\dagger)},\hat S],\\
    \sigma_3\left(\exp(-iH),\theta\right)=\left[\exp(-i\mqty[\cos(\frac{\pi \theta}{2})^2 H&\frac{i}{2}\sin(\pi\theta) H\\\frac{-i}{2}\sin(\pi\theta) H&\sin(\frac{\pi \theta}{2})^2 H]),\hat S\right].
\end{gather*}

Letting $V$ be as in (\ref{V1-1-1}) then conjugation by $V$ induces the inclusion
$$
    \iota_4:\mathscr H^{1-1-1}_{2N,N}\to\mathscr H_N^{000},\qquad \iota_4\left(\mqty[
         0&0&b_0&b_1\\
         0&0&-b_1^*&b_0^*\\
             b_0^\dagger&-b_1^t&0&0\\b_1^\dagger&b_0^t&0&0]\right)
    \coloneqq\mqty[
        0&ib_0&0&ib_1\\-ib_0^\dagger&0&ib_1^t&0\\0&-ib_1^*&0&ib_0^*\\-ib_1^\dagger&0&-ib_0^t&0].
$$
Setting
\begin{gather*}\textstyle
    Ad_{\hat s_4(\theta)}\in\mathscr M^{00-1}_{2N},
    \\ 
    \hat s_4(\theta)\coloneqq \exp(-i\pi\theta(\textstyle\sum_{j=1}^{\sfrac{N}{2}}a_j^\dagger a_j-a_{\sfrac{N}{2}+j}^\dagger a_{\sfrac{N}{2}+j}+a_{N+j}^\dagger a_{N+j}-a_{\sfrac{3N}{2}+j}^\dagger a_{\sfrac{3N}{2}+j})),\\
    [Ad_{\hat s_4(\theta)}]_{\mathcal B}=\exp(-i\pi\theta\mqty[1_{\sfrac{N}{2},\sfrac{N}{2}}&0\\0&1_{\sfrac{N}{2},\sfrac{N}{2}}])
    =\mqty[e^{-i\pi\theta}1_{\sfrac{N}{2}}&0&0&0\\0&e^{i\pi\theta}1_{\sfrac{N}{2}}&0&0\\0&0&e^{-i\pi\theta}1_{\sfrac{N}{2}}&0\\0&0&0&e^{i\pi\theta}1_{\sfrac{N}{2}}]
\end{gather*}
the 4th structural suspension map is
\begin{gather*}
    \sigma_4:\mathscr M^{1-1-1}\wedge\mathds S^1\to \mathscr M^{00-1}
    ,\qquad
    \sigma_4(Ad_{\exp(-i\hat H)},\theta)\coloneqq [Ad_{[\exp(\sfrac{-i}{2}\iota_4(\hat H)),\hat s_4(\theta)]},\hat T],\\
    \sigma_4\left(\exp(-i\mqty[
         0&0&b_0&b_1\\
         0&0&-b_1^*&b_0^*\\
             b_0^\dagger&-b_1^t&0&0\\b_1^\dagger&b_0^t&0&0]),\theta\right)=\\\left[\exp(\frac{-i}{2}\begin{bsmallmatrix}
        0&ib_0&0&ib_1\\-ib_0^\dagger&0&ib_1^t&0\\0&-ib_1^*&0&ib_0^*\\-ib_1^\dagger&0&-ib_0^t&0\end{bsmallmatrix})\exp(\frac{i}{2}\begin{bsmallmatrix}
        0&e^{-i2\pi\theta}ib_0&0&e^{-i2\pi\theta}ib_1\\-e^{i2\pi\theta}ib_0^\dagger&0&e^{i2\pi\theta}ib_1^t&0\\0&-e^{-i2\pi\theta}ib_1^*&0&e^{-i2\pi\theta}ib_0^*\\-e^{i2\pi\theta}ib_1^\dagger&0&-e^{i2\pi\theta}ib_0^t&0
    \end{bsmallmatrix}),\hat T\right].
\end{gather*}

To define the 5th structural map we have to consider the mappings
$$
\iota_5:\mathscr H^{00-1}_N\to \mathscr H^{000}_{2N}
,\qquad \iota_5\left(\mqty[a&b\\-b^*&a^*]\right)
\coloneqq \mqty[
    a&b&0&0\\-b^*&a^*&0&0\\0&0&-a^*&-b^*\\0&0&b&-a],
$$
such that $\iota_5(\hat H)$ commutes with $\hat C$ in the class DIII $(11\text{-}1)$. Definig 
\begin{gather*}\textstyle
    Ad_{\hat s_5(\theta)}\in\mathscr M^{11-1}_{2N},
    \\
    \hat s_5(\theta)\coloneqq \exp(-i\pi \theta(\textstyle\sum_{j=1}^{\sfrac{N}{2}}  a_{j}^\dagger a_{\sfrac{N}{2}+j}-a_{\sfrac{N}{2}+j}^\dagger a_{j}- a_{N+j}^\dagger a_{\sfrac{3N}{2}+j}+a_{\sfrac{3N}{2}+j}^\dagger a_{N+j})),\\
    [Ad_{\hat s_5(\theta)}]_{\mathcal B}=\exp(-i\pi \theta \mqty[0&J_N\\-J_N&0])
    =\begin{bsmallmatrix}
    \cos(\pi \theta)1_{\sfrac{N}{2}}&0&0&-i\sin(\pi \theta)1_{\sfrac{N}{2}}\\0&\cos(\pi \theta)1_{\sfrac{N}{2}}&i\sin(\pi \theta)1_{\sfrac{N}{2}}&0\\
    0&i\sin(\pi \theta)1_{\sfrac{N}{2}}&\cos(\pi \theta)1_{\sfrac{N}{2}}&0\\
    -i\sin(\pi \theta)1_{\sfrac{N}{2}}&0&0&\cos(\pi \theta)1_{\sfrac{N}{2}}
    \end{bsmallmatrix}
\end{gather*}
we have
\begin{gather*}
    \sigma_5:\mathscr M^{00-1}\wedge\mathds S^1\to \mathscr M^{11-1}
    ,\qquad
    \sigma_5(Ad_{\exp(-i\hat H)},\theta)\coloneqq [Ad_{[\exp(\sfrac{-i}{2}\iota_5(\hat H)),\hat s_5(\theta)]},\hat T]\\
    \sigma_5\left(\exp(-i\mqty[a&b\\-b^*&a^*]),\theta\right)
    =\\
    \left[\exp(\frac{-1}{2}\begin{bsmallmatrix}
        a&b&0&0\\-b^*&a^*&0&0\\0&0&-a^*&-b^*\\0&0&b&-a
    \end{bsmallmatrix})\exp(\frac{1}{2}\begin{bsmallmatrix}
        \cos(2\pi\theta)a&\cos(2\pi\theta)b&-i\sin(2\pi\theta)b&i\sin(2\pi\theta)a\\-\cos(2\pi\theta)b^*&\cos(2\pi\theta)a^*&-i\sin(2\pi\theta)a^*&-i\sin(2\pi\theta)b^*\\-i\sin(2\pi\theta)b^*&i\sin(2\pi\theta)a^*&\cos(2\pi\theta)a^*&\cos(2\pi\theta)b^*\\-i\sin(2\pi\theta)a&-i\sin(2\pi\theta)b&-\cos(2\pi\theta)b&\cos(2\pi\theta)a
    \end{bsmallmatrix}),\hat T\right].
\end{gather*}

Defining
$$
\iota_6:\mathscr H_N^{11-1}\to \mathscr H_N^{010},\qquad \iota_6\left(\mqty[0&b\\-b^*&0]\right)\coloneqq\mqty[i\Im(b)&-i\Re(b)\\-i\Re(b)&-i\Im(b)]
$$
and
\begin{gather*}
    \textstyle
    Ad_{\hat s_6(\theta)}\in\mathscr M^{010}
    ,
    \qquad 
    \hat s_6(\theta)\coloneqq \exp(-i\pi \theta(\sum_{j=1}^{\sfrac{N}{2}}ia_j^\dagger a_{\sfrac{N}{2}+j}-i a_{\sfrac{N}{2}+j}^\dagger a_j)),\\
    [Ad_{\hat s_6(\theta)}]_{\mathcal B}=\exp(\pi \theta J_N)=\mqty[
    \cos(\pi \theta)1_{\sfrac{N}{2}}&\sin(\pi \theta)1_{\sfrac{N}{2}}\\-\sin(\pi \theta)1_{\sfrac{N}{2}}&\cos(\pi \theta)1_{\sfrac{N}{2}}
]
\end{gather*}
the 6th structural suspension map is
\begin{gather*}
    \sigma_6:\mathscr M^{11-1}_N\wedge\mathds S^1\to \mathscr M^{010}_N,\qquad \sigma_6(Ad_{\exp(-i\hat H)},\theta)\coloneqq Ad_{[\exp(-\sfrac{i}{2}\iota_6(\hat H)),\hat s_6(\theta)]}\\
    \sigma_6\left(\exp(-i\mqty[0&b\\-b^*&0]),\theta\right)
    =\\
    \exp(\frac{1}{2}\begin{bsmallmatrix}
        \Im(b)&-\Re(b)\\-\Re(b)&-\Im(b)
    \end{bsmallmatrix})\exp(\frac{-1}{2}\begin{bsmallmatrix}
        -\sin(2\pi\theta)\Re(b)+\cos(2\pi\theta)\Im(b)&-\cos(2\pi\theta)\Re(b)-\sin(2\pi\theta)\Im(b)\\-\cos(2\pi\theta)\Re(b)-\sin(2\pi\theta)\Im(b)&\sin(2\pi\theta)\Re(b)-\cos(2\pi\theta)\Im(b)
    \end{bsmallmatrix})
\end{gather*}

Defining $\hat s_7(\theta)$ and $\iota_7$ as as in the 1st suspension map of $KU$ the 7th structural suspension map is
\begin{gather*}
    \sigma_7:\mathscr M^{010}\wedge\mathds S^1\to \mathscr M^{111}, \qquad \sigma_7(Ad_{\exp(-i\iota_7(\hat H))},\theta)\coloneqq [Ad_{\exp(-i\hat s_{7}(\theta)\iota_7(\hat H)\hat s_{7}(\theta)^\dagger)},\hat S],\\
    \sigma_7\left(\exp(-iH),\theta\right)=\left[\exp(-i\mqty[\cos(\frac{\pi \theta}{2})^2H&\frac{i}{2}\sin(\pi\theta)H\\\frac{-i}{2}\sin(\pi\theta) H&\sin(\frac{\pi \theta}{2})^2 H]),\hat S\right].
\end{gather*}

\begin{definition}
    The real topological K-theory spectrum $KO$ is
    \begin{gather*}
    KO_{\bullet}\coloneqq \begin{cases}
        \mathscr M^{111}\cong BO(\infty),&\bullet=0\mod 8\\
        \mathscr M^{001}\cong \frac{U(\infty)}{O(\infty)},&\bullet =1\mod 8\\
        \mathscr M^{1-11}\cong \frac{Sp(\sfrac{\infty}{2})}{U(\sfrac{\infty}{2})},&\bullet=2\mod 8\\
        \mathscr M^{0-10}\cong Sp(\sfrac{\infty}{2}),&\bullet =3\mod 8\\
        \mathscr M^{1-1-1}\cong BSp(\sfrac{\infty}{2}),&\bullet=4\mod 8\\
        \mathscr M^{00-1}\cong \frac{U(\infty)}{Sp(\sfrac{\infty}{2})},&\bullet =5\mod 8\\
        \mathscr M^{11-1}\cong \frac{O(\infty)}{U(\sfrac{\infty}{2})},&\bullet=6\mod 8\\
        \mathscr M^{010}\cong O(\infty),&\bullet =7\mod 8
    \end{cases},\\
    \sigma_\bullet(Ad_{\exp(-i\hat H)},\theta)\coloneqq \begin{cases}
        [Ad_{[\exp(\sfrac{-i}{2}\iota_\bullet(\hat H)),\hat s_\bullet(\theta)]},\hat T],&\bullet=0,1,4,5\mod 8\\
        Ad_{[\exp(\sfrac{-i}{2}\iota_\bullet(\hat H)),\hat s_\bullet(\theta)]},&\bullet=2,6\mod 8\\
        [Ad_{\exp(-i\hat s_{\bullet}(\theta)\iota_\bullet(\hat H)\hat s_{\bullet}(\theta)^\dagger)},\hat S],&\bullet=3,7\mod 8
    \end{cases},
    \end{gather*}
    where we assume $\iota_0$ and $\iota_2$ to be the identities.
\end{definition}

%%%%%%%%%%%%%%%%%
%%% S % S % S %%%
%%%%%%%%%%%%%%%%%

\section{Weakly interacting systems}\label{sec4}

As is standard, to model interacting systems we must consider operators in the Clifford subalgebra $Cl^+(\mathscr W_N,b)=\bigoplus_{n\in \mathds N}Cl^{2n}(\mathscr W_N,b)$ generated  by even degree monomials of creation and annihilation operators. These are the operators that commute with the parity operator $\hat P\coloneqq \exp(i\pi \hat N)$, thus preserve the parity in the number of fermions. See \cite{friis2016reasonable,book-bru-pedra,bratteli-dois,wick1952intrinsic} for justifications for this assumption.

\begin{definition}
    Let $(\mathscr W_N,b)$ be a Nambu space. The \textit{space of interacting Hamiltonians} is
$$\textstyle
\mathscr H_{i,N}\coloneqq \{\hat H\in Cl^+(\mathscr W_N,b) \mid \hat H^\dagger=\hat H\}.
$$
The \textit{space of interacting time evolution operators} is
$$
    \mathscr M_{i,N}\coloneqq \{Ad_{\exp(-i \hat H)}\in Aut(Cl(\mathscr W_N,b))\mid \hat H\in \mathscr H_{i,N}\}.
$$
\end{definition}

% In \cite{kitaev2009periodic,kennedy2015bott} the free Hamiltonians representing topological phases are assumed gapped, and weakly interacting systems are those where the interaction is too weak to close the gap. To illustrate why strong interactions break the tenfold way classification Kitaev provides the following example of continuous path of interacting Hamiltonians:
% $$\textstyle
%     \hat H(t)=\cos(\pi t)\sum_{j=1}^4a^\dagger_ja_j+\sin(\pi t)(a_1a_2a_3a_4+a_4^\dagger a_3^\dagger a_2^\dagger a_1^\dagger)\in \mathscr H_{i,4}.
% $$
% This gives a path of non-degenerate ground states
% starting at the vacuum state generated by $\ket{0}$ and ending ate the fully-occupied state generated by $\ket{1234}$. It is impossible for a path of free Hamiltonians to connect

% In a free system such a path would only be possible if the energy gap closes, but in the above example the interaction term allows\todo{terminar} the energy gap at some point.

In the context of time evolution operators we can give a geometric definition of weak interactions. We will consider an operator weakly interacting if there is an unambiguous closest free operator to it, and a unique distance minimizing geodesic between them. This condition is satisfied by elements in the \textit{complement of the cut locus} of $\mathscr M_N$ in $\mathscr M_{i,N}$.
We recall here the definition of the cut locus of a submanifold
\cite{prasad2022cut,thom1972cut}.

\begin{definition}
    Let $\mathcal N$ be a compact submanifold of a complete manifold $\mathcal M$. The \textit{separating set $Se(\mathcal N)\subset\mathcal M$ of $\mathcal N$} consists of all points $u$ such that at least two distance minimizing geodesics from $\mathcal N$ to $u$ exist. 
    
    A point $u$ is in the \textit{cut locus $Cu(\mathcal N)\subset \mathcal M$ of $\mathcal N$} if there is some distance minimizing geodesic joining $\mathcal N$ to $u$  such that any extension of it beyond $u$ is not a distance minimizing geodesic.
\end{definition}

Since $Cu(\mathcal N)=\overline{Se(\mathcal N)}$ the cut locus is closed. To express our geometric definition of weak interactions we need to consider the orthogonal complement $\mathscr X_{N}\subset \mathscr H_{i,N}$ of $\mathscr H_N$. 
    
    \begin{definition}
        Let $(\mathscr W_N,b)$ be a Nambu space. The \textit{space of weakly interacting time evolution operators} is
        \begin{align*}
            &\mathscr M_{wi,N}\coloneqq\\
            &
        \left\{Ad_{\hat U}\in \mathscr M_{i,N}\mid \exists (Ad_{\hat U_0},\hat X)\in \mathscr M_N\times \mathscr X_{N}
            ,\forall t\in[0,1]:
            \mqty{Ad_{\hat U}=Ad_{\hat U_0\exp(-i\hat X)}
            ,\\
            Ad_{\hat U_0\exp(-it\hat X)}\in \mathscr M_{i,N}\setminus Cu(\mathscr M_N)}
            \right\}.
        \end{align*}
    \end{definition}

    \begin{example}
        Consider
        $$
            \hat U(t)\coloneqq \exp(i\pi (a_1^\dagger a_1+a_2^\dagger a_2))\exp(-i2\pi t(a_1a_2+ a_2^\dagger a_1^\dagger)).
        $$ 

            In the basis $\mathcal C=\{\ket{0},\ket{1},\ket{2},\ket{12}\}$ of $\bigwedge\mathscr V_2$  we have
    $$
        [Ad_{\hat U(t)}]_{\mathcal C}=\mqty[\cos(2\pi t)&0&0&-i\sin(2\pi t)\\
        0&-1&0&0\\
        0&0&-1&0\\
        -i\sin(2\pi t)&0&0&\cos(2\pi t)].
    $$
    Since 
    $Ad_{\hat U(0)}=Ad_{\hat U(1)}=Ad_{\hat P}\in\mathscr M_{2}$ and $Ad_{\hat U(t)}\in \mathscr M_{i,2}\setminus \mathscr M_{2}$ for $t\in (0,1)$, then $Ad_{\hat U(\sfrac{1}{2})}=-\hat 1_{\bigwedge \mathscr V_2}\in Cu(\mathscr M_{2})$. Thus $-\hat 1_{\bigwedge \mathscr V_2}$ is an interacting time evolution operator that is not weakly interacting.
    \end{example}

    If $Ad_{\hat U}=Ad_{\hat U_0\exp(-i\hat X)}$ is weakly interacting then $Ad_{\hat U_0}$ is the element of $\mathscr M_{N}$ that is closest to $Ad_{\hat U}$ in the geodesic metric. The cut locus assumption guarantees $Ad_{\hat U_0}$ is unique, and further that $\mathscr M_{wi,N}$ strongly deformation retracts to $\mathscr M_{N}$ by the homotopy
    \begin{equation}
        h:\mathscr M_{wi,N}\wedge I_+\to \mathscr M_{wi,N}
        ,\qquad
        h(Ad_{\hat U},t)\coloneqq 
        Ad_{\hat U_0\exp(-it\hat X)}.
        \label{DefRetr}
    \end{equation}

We now want to consider equivariant interacting time evolution operators over grotesque Nambu spaces with symmetries $(\mathscr W_N,b,G)$. Assuming equivariance with regard to the full subgroup $G_{UL}\cong U(1)$ would restrict terms like $a_ja_k+a^\dagger_ka^\dagger_j$ from appearing in the generating Hamiltonians. This precludes consideration of superconducting models, like in the Bogoliubov-de Gennes formalism. We will thus consider Hamiltonians equivariant under the subgroup $G'\subset G$ generated by non-ordinary symmetries that square to $\pm \hat 1$ and by the parity operator $\hat P$, so that $G'_{UL}\cong O(1)$.

\begin{definition}
    Let $(\mathscr W_N,b,G)$ be a grotesque Nambu space with symmetries with signature $(\epsilon_S\epsilon_C\epsilon_T)$. 
    The \textit{space of equivariant interacting Hamiltonians} is
    $$
        \mathscr H_{i,N}^{\epsilon_S\epsilon_C\epsilon_T}=\{\hat H\in\mathscr H_{i,N}\mid \forall \hat U\in G':\hat U\hat H\hat U^\dagger=\hat H\}.
    $$

    The \textit{space of equivariant interacting time evolution operators} is
    $$
        \mathscr M_{i,N}^{\epsilon_S\epsilon_C\epsilon_T}=\{Ad_{\exp(-i\hat H)}\in Aut(Cl(\mathscr W,b)) \mid \hat H\in \mathscr H_{i,N}^{\epsilon_S\epsilon_C\epsilon_T}\}.
    $$
    
    The \textit{space of equivariant weakly interacting time evolution operators} is
    $$
        \mathscr M_{wi,N}^{\epsilon_S\epsilon_C\epsilon_T} \coloneqq \mathscr M_{i,N}^{\epsilon_S\epsilon_C\epsilon_T}\cap \mathscr M_{wi,N}.
    $$

    Given convenient choices of basis we denote by $\mathscr M_{wi}^{\epsilon_S\epsilon_C\epsilon_T}$ the colimit as in the definition of $\mathscr M^{\epsilon_S\epsilon_C\epsilon_T}$.
\end{definition}
    
We are now ready to define weakly interacting versions of the topological K-theory spectra. The cut locus assumption allows us to define the structural suspension maps by projecting to the free subspectra $KU$ and $KO$, and then applying their suspension maps.

\begin{definition}
    The \textit{weakly interacting complex topological K-theory spectrum} $KU^{wi}$ is
    \begin{gather*}
    KU^{wi}_{\bullet}\coloneqq \begin{cases}
        \mathscr M_{wi}^{100},&\bullet=0\mod 2\\
        \mathscr M_{wi}^{000},&\bullet =1\mod 2
    \end{cases},\qquad
    \sigma^{wi}_\bullet(Ad_{\hat U},\theta)\coloneqq \sigma_\bullet(Ad_{\hat U_0},\theta).
    \end{gather*}
    
    The \textit{weakly interacting real topological K-theory spectrum}
    $KO^{wi}$ is 
    \begin{gather*}
    KO^{wi}_{\bullet}\coloneqq \begin{cases}
        \mathscr M_{wi}^{111},&\bullet=0\mod 8\\
        \mathscr M_{wi}^{001},&\bullet =1\mod 8\\
        \mathscr M_{wi}^{1-11},&\bullet=2\mod 8\\
        \mathscr M_{wi}^{0-10},&\bullet =3\mod 8\\
        \mathscr M_{wi}^{1-1-1},&\bullet=4\mod 8\\
        \mathscr M_{wi}^{00-1},&\bullet =5\mod 8\\
        \mathscr M_{wi}^{11-1},&\bullet=6\mod 8\\
        \mathscr M_{wi}^{010},&\bullet =7\mod 8
    \end{cases},\qquad
    \sigma^{wi}_\bullet(Ad_{\hat U},\theta)\coloneqq \sigma_\bullet(Ad_{\hat U_0},\theta).
    \end{gather*}
\end{definition}

\begin{lemma}\label{DefRetPresCartInv}
        Let $\tau$ be a Cartan involution of $\mathscr M_{i,N}$ such that $\tau(\mathscr M_N)=\mathscr M_N$. Then the deformation retract $h$ in (\ref{DefRetr}) leaves the subspaces $\text{Fix}_\tau$ and $\text{Inv}_\tau$ invariant.
    \end{lemma}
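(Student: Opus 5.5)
The plan is to exploit the uniqueness built into the definition of $\mathscr M_{wi,N}$: for $Ad_{\hat U}\in\mathscr M_{wi,N}$ the pair $(Ad_{\hat U_0},\hat X)$ is determined, since $Ad_{\hat U_0}$ is the unique closest point of $\mathscr M_N$ and $t\mapsto Ad_{\hat U_0\exp(-it\hat X)}$ is the unique minimizing geodesic from $\mathscr M_N$ to $Ad_{\hat U}$. I then show that $\tau$ maps this data to the corresponding data for $\tau(Ad_{\hat U})$. Concretely, I take the metric on $\mathscr M_{i,N}$ to be the bi-invariant metric induced by the trace form on the Lie algebra $i\mathscr H_{i,N}$. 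The first step is to check that $\tau$ is an isometry of this metric. A Cartan involution of a compact group is an automorphism, so $d\tau_e$ is a Lie algebra automorphism and preserves the Killing form. On the semisimple part this gives invariance; on the center I would argue directly, or simply require that the chosen invariant metric be $\tau$-invariant. This is the point I expect to need the most care. For the involutions $\Xi,\Sigma,\Theta$ that actually occur, $\tau$ is conjugation by a (anti)unitary Nambu symmetry, and such conjugation visibly preserves the trace inner product, so I would verify invariance this way.

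Granting that $\tau$ is an isometry with $\tau(\mathscr M_N)=\mathscr M_N$, it maps minimizing geodesics from $\mathscr M_N$ to minimizing geodesics from $\mathscr M_N$, so it preserves $Se(\mathscr M_N)$, hence also its closure $Cu(\mathscr M_N)$, and therefore $\mathscr M_{wi,N}$. In the same way, $d\tau$ preserves the orthogonal decomposition $i\mathscr H_{i,N}=i\mathscr H_N\oplus i\mathscr X_N$, because it fixes $T_e\mathscr M_N=i\mathscr H_N$ and is orthogonal, so it maps $\mathscr X_N$ to itself. Since $\tau$ is a group homomorphism, we get $\tau(Ad_{\hat U_0\exp(-it\hat X)})=\tau(Ad_{\hat U_0})\exp(-it\,d\tau(\hat X))$, where $\tau(Ad_{\hat U_0})\in\mathscr M_N$ and $d\tau(\hat X)\in\mathscr X_N$. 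This path stays in the complement of $Cu(\mathscr M_N)$. By uniqueness, the data of $\tau(Ad_{\hat U})$ is $(\tau(Ad_{\hat U_0}),d\tau\hat X)$, which yields the equivariance $h(\tau(Ad_{\hat U}),t)=\tau(h(Ad_{\hat U},t))$.

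Invariance then follows directly. If $\tau(Ad_{\hat U})=Ad_{\hat U}$, equivariance gives $\tau(h(Ad_{\hat U},t))=h(Ad_{\hat U},t)$, so $h$ preserves $\text{Fix}_\tau$. For $\text{Inv}_\tau$, let $\tau(Ad_{\hat U})=Ad_{\hat U}^{-1}$. I use that inversion $g\mapsto g^{-1}$ is an isometry of the bi-invariant metric preserving $\mathscr M_N$ (a subgroup), so the same uniqueness argument applies to it. Writing $Ad_{\hat U}^{-1}=\exp(i\hat X)Ad_{\hat U_0}^{-1}=Ad_{\hat U_0}^{-1}\exp(i\,Ad_{\hat U_0}\hat X)$, and using that $Ad_{\hat U_0}$ preserves $\mathscr X_N$ (it preserves $\mathscr H_N$ and the metric), the data of $Ad_{\hat U}^{-1}$ is $(Ad_{\hat U_0}^{-1},-Ad_{\hat U_0}\hat X)$. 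Hence $h(Ad_{\hat U}^{-1},t)=Ad_{\hat U_0}^{-1}\exp(it\,Ad_{\hat U_0}\hat X)=\exp(it\hat X)Ad_{\hat U_0}^{-1}=h(Ad_{\hat U},t)^{-1}$. Combining this with $\tau$-equivariance gives $\tau(h(Ad_{\hat U},t))=h(Ad_{\hat U},t)^{-1}$, which is what is needed. The main obstacle remains the isometry and uniqueness step, namely confirming that both $\tau$ and inversion preserve the metric and the cut locus.
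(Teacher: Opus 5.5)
Your proposal is correct and is essentially the paper's argument. $\tau$ acts by isometries preserving $\mathscr M_N$, and in the $\text{Inv}_\tau$ case so does inversion. Uniqueness of the closest free point and of the minimizing geodesic then forces $\tau(Ad_{\hat U_0})=Ad_{\hat U_0}$ (resp.\ $Ad_{\hat U_0^\dagger}$) and carries the path $t\mapsto h(Ad_{\hat U},t)$ to the corresponding path for the image, which is what the paper's short proof means by "$h$ is defined in terms of geodesics."

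You are more explicit than the paper in two places: you require $\tau$ to be an isometry rather than merely geodesic-preserving, and you spell out the inversion step $h(Ad_{\hat U}^{-1},t)=h(Ad_{\hat U},t)^{-1}$. For the trace metric you chose, the isometry property is automatic, so your flagged worry is harmless. $Cl^+(\mathscr W_N,b)$ is a sum of two matrix algebras of equal size, so on the semisimple part the trace form is a multiple of the Killing form. The center of the Lie algebra of $\mathscr M_{i,N}$ is one-dimensional (spanned by $i\hat P$), where $d\tau$ can only act by $\pm 1$.
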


    \begin{proof}
        Suppose $Ad_{\hat U}\in \text{Fix}_\tau$. Since Cartan involutions preserve geodesics, $\tau(Ad_{\hat U_0})$ is the closest element of $\mathscr M_N$ to $Ad_{\hat U}$, thus $\tau(Ad_{\hat U_0})=Ad_{\hat U_0}$.
        Similarly if $Ad_{\hat U}\in \text{Inv}_\tau$ then $\tau(Ad_{\hat U_0})=Ad_{\hat U_0^\dagger}$. Since $h$ is defined in terms of geodesics this guarantees that $\text{Fix}_\tau$ and $\text{Inv}_\tau$ are invariant under $h$.
    \end{proof}

\begin{theorem}\label{TeoA}
    The spectra $KU^{wi}$ and $KO^{wi}$ deformation retract to $KU$ and $KO$. 
\end{theorem}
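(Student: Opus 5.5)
We will build, levelwise, the deformation retraction $h$ of (\ref{DefRetr}) restricted to each space $KU^{wi}_\bullet$ and $KO^{wi}_\bullet$. We will then check three things: that it stays inside the equivariant subspaces, that it is compatible with the colimits, and that the retraction and the inclusion are spectrum maps that are mutually inverse up to homotopy of spectra.

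\textbf{Step 1: the equivariant spaces are preserved.} Each space $\mathscr M^{\epsilon_S\epsilon_C\epsilon_T}_{i,N}$ is cut out of $\mathscr M_{i,N}$ by fixed-point and inverted-point conditions of the involutions $\Xi$, $\Sigma$, $\Theta$, now extended to $\mathscr M_{i,N}$ by conjugating $Ad_{\hat U}$ with $\hat C$, $\hat S$, $\hat T$. Parity invariance is a further fixed-point condition, for conjugation by $\hat P$. Each of these involutions preserves $\mathscr M_N$, since conjugating a quadratic Hamiltonian by a Nambu symmetry gives a quadratic Hamiltonian. Lemma \ref{DefRetPresCartInv} applied to each involution then shows that $h$ maps $\mathscr M^{\epsilon_S\epsilon_C\epsilon_T}_{wi,N}\wedge I_+$ into $\mathscr M^{\epsilon_S\epsilon_C\epsilon_T}_{wi,N}$. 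At $t=0$ its image is $\mathscr M_N\cap \mathscr M^{\epsilon_S\epsilon_C\epsilon_T}_{i,N}$, which we identify with $\mathscr M^{\epsilon_S\epsilon_C\epsilon_T}_N$ using the same fixed/inverted description as in Section \ref{sec2}.

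\textbf{Step 2: compatibility with the colimits.} The inclusions $\iota_N$ and $\kappa_{N,m}$ are induced by isometric embeddings $\mathscr V_N\hookrightarrow \mathscr V_{N'}$. They send $\mathscr M_N$ into $\mathscr M_{N'}$ and $\mathscr X_N$ into $\mathscr X_{N'}$. Because the embedding is totally geodesic and compatible with the orthogonal decomposition, the nearest free point of the image is the image of the nearest free point. Hence $h$ commutes with these maps and passes to the colimit, giving $h_\bullet:KU^{wi}_\bullet\wedge I_+\to KU^{wi}_\bullet$, and similarly for $KO^{wi}_\bullet$.

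\textbf{Step 3: spectrum-level homotopy.} Let $r_\bullet=h_\bullet(\cdot,0)$ be the retraction $Ad_{\hat U}\mapsto Ad_{\hat U_0}$ and let $j$ be the inclusion $KU\hookrightarrow KU^{wi}$. Then $r$ is a spectrum map by the definition $\sigma^{wi}_\bullet(Ad_{\hat U},\theta)=\sigma_\bullet(Ad_{\hat U_0},\theta)$, and $j$ is one because the nearest free point of a free element is itself. We have $r\circ j=\mathrm{id}$. For $j\circ r\simeq \mathrm{id}$ we need $h$ to satisfy the spectrum-homotopy compatibility
\[
\sigma^{wi}_\bullet(h_\bullet(y,t),\theta)=h_{\bullet+1}(\sigma^{wi}_\bullet(y,\theta),t).
\]
The left side equals $\sigma_\bullet(Ad_{\hat U_0},\theta)$ for every $t$, since $h_\bullet(y,t)$ has the same nearest free point $Ad_{\hat U_0}$. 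The right side is $h$ applied to a free element, so it is constant and equal to that same element.

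\textbf{Main obstacle.} The step I expect to be hardest is Step 2 together with the identification of the free part in Step 1. Concretely, one has to show that the closest point in the large group $\mathscr M_{i,N'}$ of an element coming from $N$ is still the embedded one, and that the equivariant free locus coincides with the paper's $\mathscr M^{\epsilon_S\epsilon_C\epsilon_T}_N$. I would handle this by checking that the embedded $\mathscr M_{i,N}$ is a fixed-point set of an isometric involution, namely conjugation by the parity of the added modes, so it is totally geodesic. Uniqueness of minimizing geodesics outside the cut locus then forces the nearest point to lie in the image. This should also let me confirm that the base points are preserved along $h$.
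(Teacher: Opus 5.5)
Your Steps 1 and 3 are exactly the paper's proof. The paper invokes Lemma \ref{DefRetPresCartInv} to get levelwise equivariant retractions, then checks the identity $h(\sigma^{wi}_\bullet(Ad_{\hat U},\theta),t)=\sigma_\bullet(Ad_{\hat U_0},\theta)=\sigma^{wi}_\bullet(h(Ad_{\hat U},t),\theta)$, and does nothing more. Your Step 2 addresses something the paper silently assumes when it forms $\mathscr M^{\epsilon_S\epsilon_C\epsilon_T}_{wi}$ as a colimit and says $h$ gives ``well defined'' retractions of it. That is a legitimate point to raise, but the argument you sketch for it does not work as stated.

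\textbf{First problem: the fixed-point set is too large.} The image of $\mathscr M_{i,N}$ in $\mathscr M_{i,N+1}$ is not the fixed-point set of conjugation by $\exp(i\pi a_{N+1}^\dagger a_{N+1})$. That set also contains the classes of all $A(1-a_{N+1}^\dagger a_{N+1})+B\,a_{N+1}^\dagger a_{N+1}$ with $A\neq B$ even unitaries in the first $N$ modes. In particular it contains the free elements $Ad_{\exp(-i\phi a_{N+1}^\dagger a_{N+1})}$. So equivariance plus uniqueness only places the nearest free point of $\iota(Ad_{\hat U})$ in the block-diagonal subgroup $U(N)\times U(1)$ of $\mathscr M_{N+1}\cong U(N+1)$, not in the image of $\mathscr M_N$. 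Ruling out a nontrivial phase on the added mode needs an actual distance comparison. For example, for the trace-form metric you can write the candidates as $V\otimes\mathrm{diag}(1,e^{-i\phi})$ and check that $\phi=0$ is optimal.

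\textbf{Second problem: weak interaction under stabilization is unproved.} Being totally geodesic does not by itself force ambient nearest points into the submanifold. Your uniqueness step also presupposes that $\iota(Ad_{\hat U})\notin Cu(\mathscr M_{N+1})$ whenever $Ad_{\hat U}\notin Cu(\mathscr M_N)$. That is precisely the statement that the stabilization maps restrict to $\mathscr M_{wi,N}\to\mathscr M_{wi,N+1}$. Nothing in your sketch proves it, and it is not automatic, since the larger group offers new competing geodesics from $\mathscr M_{N+1}$. You must either prove it or, as the paper implicitly does, take it as part of the definition of $KU^{wi}$ and $KO^{wi}$. With that granted, your Steps 1 and 3 complete the argument exactly as in the paper.
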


\begin{proof}
    By lemma \ref{DefRetPresCartInv} the deformation retract $h$ determines well defined homotopy retracts of each $\mathscr M^{\epsilon_S\epsilon_C\epsilon_T}_{wi}$ onto $\mathscr M^{\epsilon_S\epsilon_C\epsilon_T}$. Since
$$
h(\sigma^{wi}_\bullet(Ad_{\hat U},\theta),t)=h(\sigma_\bullet(Ad_{\hat U_0},\theta),t)=\sigma_\bullet(Ad_{\hat U_0},\theta)=\sigma_{\bullet}^{wi}(h(Ad_{\hat U},t),\theta),
$$
these homotopies are compatible with the structural suspension maps, thus determine spectra deformation retracts.    
\end{proof}

%%%%%%%%%%%%%%%%%
%%% S % S % S %%%
%%%%%%%%%%%%%%%%%

\section{Concluding remarks}\label{sec5}

\subsection{Twisted equivariant K-theory}

As mentioned in the introduction, the topological phases of crystalline fermion systems are classified by twisted equivariant K-theory \cite{freed2013twisted}. Every equivariant cohomology theory for a compact Lie group $P$ is represented by some $P$-spectrum $Y$, which is composed of a collection of pointed $P$-spaces $\{Y_V\}$ indexed by $P$-representations, equipped for each subrepresentation $V\hookrightarrow W$ with an equivariant structural suspension map $\sigma_{V,W}:Y_V\wedge \mathds S^{W-V}\to E_W$ \cite{LMS,may1996equivariant}. Freed and Moore state their arguments in \cite{freed2013twisted} can be adapted to generalize the classifying spaces used by Kitaev in \cite{kitaev2009periodic} to the equivariant context. This suggest our construction of $KU$ and $KO$ in terms of time evolution operators can be extended to $P$-spectra.  Given a full description of twisted  equivariant K-theory in terms of spectra of time evolution operators, we expect our arguments extend to a stable homotopy theoretical proof that the classification of crystalline topological insulators and superconductors is also stable to weak interactions.

\subsection{Classification of interacting fermion systems by cobordism}

The formulas for the structural suspension maps of $KU$ and $KO$ also apply for the spaces of equivariant interacting time evolution operators $\mathscr M^{\epsilon_S\epsilon_C\epsilon_T}_{i,N}$, which means they also form spectra $MU^i$ and $MO^i$. In \cite{freed2021reflection} Freed and Hopkins produce a general formula for symmetry protected phases in terms of Thom's bordism spectra, showing that in the interacting regime topological phases are classified by cobordism, as conjectured in \cite{Kapustin_2015}. This suggests the interacting spectra $MU^i$ and $MO^i$ might be weakly equivalent to Thom's spectra.
We suspect the filtered algebra structure of $Cl^+(\mathscr W_N,b)$ induces a spectra filtration interpolating between $KU$ and $MU^i$, and an analogous filtration between $KO$ and $MO^i$. In  \cite{cornfeld2021tenfold} the authors hypothesize this filtration might be related to the chromatic filtration interpolating between K-theory and cobordism. Our definitions in terms of spaces of time evolution operators may provide a framework to search for evidence of such relation.

%%%%%%%%%%%%%%%%%%%%%%%%%%%%%%%%%%%%%%%%%%%%%%%%%%%%%%%%%%%%%%%%%%%%%%%%%

%%%%%% BIBLIOGRAPHY
% \newpage
%\nocite{*}
\bibliographystyle{plain}
\bibliography{refs}

@book{may2006parametrized,
  title={Parametrized Homotopy Theory},
  author={May, J. Peter and Sigurdsson, Johann},
  series={Mathematical Surveys and Monographs},
  volume={132},
  year={2006},
  publisher={American Mathematical Society},
  address={Providence, RI},
  isbn={978-0-8218-3922-5}
}

@book{adams1974stable,
  title={Stable Homotopy and Generalised Homology},
  author={Adams, John Frank},
  series={Chicago Lectures in Mathematics},
  year={1974},
  publisher={University of Chicago Press},
  address={Chicago, IL},
  isbn={978-0-226-00524-9}
}

@book{MaEinftyRingSpcsSpectra,
  title     = {{$E_\infty$} ring spaces and {$E_\infty$} ring spectra},
  author    = {May, J. Peter},
  year      = {1977},
  publisher = {Springer-Verlag},
  address   = {Berlin},
  series    = {Lecture Notes in Mathematics},
  volume    = {577}
}

@book{LMS,
  author    = {Lewis, Jr., L. Gaunce and May, J. Peter and Steinberger, Mark},
  title     = {Equivariant Stable Homotopy Theory},
  series    = {Lecture Notes in Mathematics},
  volume    = {1213},
  publisher = {Springer-Verlag},
  address   = {Berlin},
  year      = {1986},
  note      = {With contributions by J. E. McClure}
}

@inproceedings{kitaev2009periodic,
  title={Periodic table for topological insulators and superconductors},
  author={Kitaev, Alexei},
  booktitle={AIP Conference Proceedings},
  volume={1134},
  pages={22--30},
  year={2009},
  organization={American Institute of Physics},
  doi={10.1063/1.3149495}
}

@article{agarwala2017tenfold,
  title={The tenfold way redux: Fermionic systems with {$N$}-body interactions},
  author={Agarwala, Adhip and Haldar, Arijit and Shenoy, Vijay B.},
  journal={Annals of Physics},
  volume={385},
  pages={469--511},
  year={2017},
  publisher={Elsevier},
  doi={10.1016/j.aop.2017.08.003}
}

@article{cornfeld2021tenfold,
  title={Tenfold topology of crystals: Unified classification of crystalline topological insulators and superconductors},
  author={Cornfeld, Eyal and Carmeli, Shachar},
  journal={Physical Review Research},
  volume={3},
  number={1},
  pages={013052},
  year={2021},
  publisher={American Physical Society},
  doi={10.1103/PhysRevResearch.3.013052}
}

@article{bott1959stable,
  title={The stable homotopy of the classical groups},
  author={Bott, Raoul},
  journal={Annals of Mathematics},
  volume={70},
  number={2},
  pages={313--337},
  year={1959},
  publisher={JSTOR}
}

@article{dyson1962threefold,
  title={The threefold way: {A}lgebraic structure of symmetry groups and ensembles in quantum mechanics},
  author={Dyson, Freeman J.},
  journal={Journal of Mathematical Physics},
  volume={3},
  number={6},
  pages={1199--1215},
  year={1962},
  publisher={American Institute of Physics},
  doi={10.1063/1.1703863}
}

@article{baez2020tenfold,
  title={The Tenfold Way},
  author={Baez, John C.},
  journal={arXiv preprint arXiv:2011.14234},
  year={2020}
}

@article{cartan1926classe,
  title={Sur une classe remarquable d'espaces de {R}iemann},
  author={Cartan, {\'E}lie},
  journal={Bulletin de la Soci{\'e}t{\'e} Math{\'e}matique de France},
  volume={54},
  pages={214--264},
  year={1926}
}

@article{kennedy2015bott,
  title={Bott--{K}itaev periodic table and the diagonal map},
  author={Kennedy, Ricardo and Zirnbauer, Martin R.},
  journal={Physica Scripta},
  volume={2015},
  number={T164},
  pages={014010},
  year={2015},
  publisher={IOP Publishing},
  doi={10.1088/0031-8949/2015/T164/014010}
}

@article{freed2021reflection,
  title={Reflection positivity and invertible topological phases},
  author={Freed, Daniel S. and Hopkins, Michael J.},
  journal={Geometry \& Topology},
  volume={25},
  number={3},
  pages={1165--1330},
  year={2021},
  publisher={Mathematical Sciences Publishers},
  doi={10.2140/gt.2021.25.1165}
}

@article{thom1972cut,
  title={Sur le cut-locus d'une vari{\'e}t{\'e} plong{\'e}e},
  author={Thom, Ren{\'e}},
  journal={Journal of Differential Geometry},
  volume={6},
  number={4},
  pages={577--586},
  year={1972},
  publisher={Lehigh University}
}

@book{atiyah2018k,
  title={K-theory},
  author={Atiyah, Michael F.},
  year={1967},
  publisher={W. A. Benjamin},
  address={New York},
  note={Reprinted by CRC Press in 2018}
}

@unpublished{schwede2012symmetric,
  author = {Schwede, Stefan},
  title  = {Symmetric Spectra},
  note   = {Book project, available at \url{https://www.math.uni-bonn.de/~schwede/SymSpec-v3.pdf}},
  year   = {2012}
}

@article{wick1952intrinsic,
  title={The intrinsic parity of elementary particles},
  author={Wick, Gian Carlo and Wightman, Arthur S. and Wigner, Eugene P.},
  journal={Physical Review},
  volume={88},
  number={1},
  pages={101--105},
  year={1952},
  publisher={American Physical Society},
  doi={10.1103/PhysRev.88.101}
}

@article{Kapustin_2015,
   title={Fermionic symmetry protected topological phases and cobordisms},
   author={Kapustin, Anton and Thorngren, Ryan and Turzillo, Alex and Wang, Zitao},
   journal={Journal of High Energy Physics},
   volume={2015},
   number={12},
   pages={1--21},
   year={2015},
   doi={10.1007/JHEP12(2015)052}
}

@book{may1996equivariant,
  title={Equivariant Homotopy and Cohomology Theory: Dedicated to the Memory of Robert J. Piacenza},
  author={May, J. Peter and Cole, Michael and others},
  series={CBMS Regional Conference Series in Mathematics},
  volume={91},
  year={1996},
  publisher={American Mathematical Society},
  address={Providence, RI}
}

@article{friis2016reasonable,
  title={Reasonable fermionic quantum information theories require relativity},
  author={Friis, Nicolai},
  journal={New Journal of Physics},
  volume={18},
  number={3},
  pages={033014},
  year={2016},
  publisher={IOP Publishing},
  doi={10.1088/1367-2630/18/3/033014}
}

@article{atiyah1968bott,
  title={Bott periodicity and the index of elliptic operators},
  author={Atiyah, Michael F.},
  journal={The Quarterly Journal of Mathematics},
  volume={19},
  number={1},
  pages={113--140},
  year={1968},
  publisher={Oxford University Press},
  doi={10.1093/qmath/19.1.113}
}

@article{vasconcellos2022operator,
  title={Operator {$K$}-theory algebra spectra of {$C^*$}-algebras},
  author={Vasconcellos, Renato V. and M{\"u}ssnich, L. C. P. A. M. and Aza, N. J. B.},
  journal={arXiv preprint arXiv:2203.03050},
  year={2022}
}

@phdthesis{prasad2022cut,
  title={Cut Locus of Submanifolds: A Geometric and Topological Viewpoint},
  author={Prasad, Sachchidanand},
  year={2022},
  school={Indian Institute of Science Education and Research Kolkata}
}

@article{freed2013twisted,
  title={Twisted Equivariant Matter},
  author={Freed, Daniel S. and Moore, Gregory W.},
  journal={Annales Henri Poincar{\'e}},
  volume={14},
  number={8},
  pages={1927--2023},
  year={2013},
  publisher={Springer},
  doi={10.1007/s00023-013-0236-x}
}

@article{kennedy2016bott,
  title={Bott periodicity for {$\mathbb{Z}_2$}-symmetric ground states of gapped free-fermion systems},
  author={Kennedy, Ricardo and Zirnbauer, Martin R.},
  journal={Communications in Mathematical Physics},
  volume={342},
  number={3},
  pages={909--963},
  year={2016},
  publisher={Springer},
  doi={10.1007/s00220-015-2535-x}
}

@article{cartan1927classe,
  title={Sur une classe remarquable d'espaces de {R}iemann. {II}},
  author={Cartan, {\'E}lie},
  journal={Bulletin de la Soci{\'e}t{\'e} Math{\'e}matique de France},
  volume={55},
  pages={114--134},
  year={1927}
}

@article{heinzner2005symmetry,
  title={Symmetry classes of disordered fermions},
  author={Heinzner, Peter and Huckleberry, Alan and Zirnbauer, Martin R.},
  journal={Communications in Mathematical Physics},
  volume={257},
  number={3},
  pages={725--771},
  year={2005},
  publisher={Springer},
  doi={10.1007/s00220-005-1330-9}
}

@book{bratteli-dois,
  title={Operator Algebras and Quantum Statistical Mechanics 2: Equilibrium States. Models in Quantum Statistical Mechanics},
  author={Bratteli, Ola and Robinson, Derek W.},
  series={Theoretical and Mathematical Physics},
  year={1997},
  edition={2nd},
  publisher={Springer-Verlag},
  address={Berlin},
  isbn={978-3-540-61481-4}
}

@book{book-derezinski,
  title={Mathematics of Quantization and Quantum Fields},
  author={Derezi{\'n}ski, Jan and G{\'e}rard, Christian},
  series={Cambridge Monographs on Mathematical Physics},
  year={2013},
  publisher={Cambridge University Press},
  address={Cambridge},
  doi={10.1017/CBO9780511843518}
}

@article{araki-AMS,
  title={Bogoliubov automorphisms and {F}ock representations of canonical anticommutation relations},
  author={Araki, Huzihiro},
  journal={Contemporary Mathematics},
  volume={62},
  pages={23--141},
  year={1987},
  publisher={American Mathematical Society}
}

@book{book-bru-pedra,
  title={C*-Algebras and Mathematical Foundations of Quantum Statistical Mechanics},
  author={Bru, Jean-Bernard and de Siqueira Pedra, Walter Alberto},
  year={2023},
  series={Lecture Notes in Mathematics},
  volume={2314},
  address={Cham},
  publisher={Springer Nature},
  doi={978-3-031-28949-1}
}

%%%%%%%%%%%%%%%%%%%%%%%%%%%%%%%%%%%%%%%%%%%%%%%%%%%%%%%%%%%%%%%%%%%%%%%%%

%%%%%% BEGIN DOCUMENT

\end{document}